\newcommand{\moveall}{{\sc Move-All}\xspace}
\newcommand{\visitall}{{\sc Visit-All}\xspace}
\newtheorem{lemma}{Lemma}
\newtheorem{theorem}{Theorem}
\begin{document}

\title{Oblivious  Permutations on the Plane}

%\author{Shantanu Das\footnotemark[1], Giuseppe A. Di Luna\footnotemark[2], Paola Flocchini\footnotemark[3], Nicola Santoro\footnotemark[4], Giovanni Viglietta\footnotemark[5], Masafumi Yamashita\footnotemark[6]}
%
%\renewcommand{\thefootnote}{\fnsymbol{footnote}}
%\footnotetext[1]{Aix-Marseille University and CNRS, LIS, Marseille, France.  \emph{shantanu.das@lis-lab.fr}}
%\footnotetext[2]{DIAG, University of Rome ``Sapienza'', Rome, Italy.  \emph{diluna@diag.uniroma1.it}}
%\footnotetext[3]{SEECS, University of Ottawa, Canada. \emph{flocchin@site.uottawa.ca}}
%\footnotetext[4]{School of Computer Science, Carleton University.   \emph{santoro@scs.carleton.ca}}
%\footnotetext[5]{JAIST, Nomi City, Japan. \emph{johnny@jaist.ac.jp}}
%\footnotetext[6]{Kyushu University, Fukuoka, Japan. \emph{masafumi.ymashita@gmail.com}}
%
% \renewcommand{\thefootnote}{\arabic{footnote}}	

\author{Shantanu Das\thanks{Aix-Marseille University and CNRS, LIS, Marseille, France.  \emph{shantanu.das@lis-lab.fr}}, Giuseppe A. Di Luna\thanks{DIAG, University of Rome ``Sapienza'', Rome, Italy.  \emph{diluna@diag.uniroma1.it}}, Paola Flocchini\thanks{SEECS, University of Ottawa, Canada. \emph{flocchin@site.uottawa.ca}}, Nicola Santoro\thanks{School of Computer Science, Carleton University.   \emph{santoro@scs.carleton.ca}}, \\Giovanni Viglietta\thanks{JAIST, Nomi City, Japan. \emph{johnny@jaist.ac.jp}}, Masafumi Yamashita\thanks{Kyushu University, Fukuoka, Japan. \emph{masafumi.ymashita@gmail.com}}}
\date{}
\maketitle

\begin{abstract}
We consider a distributed system of $n$ identical mobile robots operating in the two dimensional Euclidian plane. As in the previous studies, we consider the robots to be anonymous, oblivious, dis-oriented, and without any communication capabilities, operating based on the Look-Compute-Move model where the next location of a robot depends only on its view of the current configuration. Even in this seemingly weak model, most formation problems which require constructing specific configurations, can be solved quite easily when the robots are fully synchronized with each other. In this paper we introduce and study a new class of problems which, unlike the formation problems so far,   cannot always be solved even in the fully synchronous model with atomic and rigid moves. This class of problems requires the robots to permute their locations in the plane. In particular, we are interested in implementing two special types of permutations -- permutations without any fixed points and permutations of order $n$. The former (called \moveall) requires each robot to visit at least two of the initial locations, while the latter (called \visitall) requires every robot to visit each of the initial locations in a periodic manner. We provide a characterization of the solvability of these problems, showing the main challenges in solving this class of problems for mobile robots. We also provide algorithms for the feasible cases, in particular distinguishing between one-step algorithms (where each configuration must be a permutation of the original configuration) and multi-step algorithms (which allow intermediate configurations). These results open a new research direction in mobile distributed robotics which has not been investigated before. 
\end{abstract}

\section{Introduction}

The investigation of the computational and complexity issues arising in distributed systems of autonomous mobile robots is an important research topic 
%within theoretical computer science, especially 
in distributed computing.
This has several applications, teams of robots could  be sent to regions inaccessible to humans to perform a variety of tasks 
such as exploration and data-collection, monitoring, sensing or patrolling. Once deployed, the team of robots must coordinate 
with each other and perform the tasks autonomously without human intervention; this has motivated the design of distributed algorithms for coordination among the robots to enable them to perform the required tasks.

As a theoretical abstraction, the robots are usually viewed as computational entities modelled as points in a metric space, typically $\mathbb
R^2$, in which they can move. The robots, identical and outwardly indistinguishable, have the same capabilities and execute the same (deterministic) algorithm.
They can see each other, but cannot explicitly communicate with one another. This lack of direct communication capabilities means that the only means of interaction between robots are observations and movements: that is, communication is {stigmergic}. Each robot operates in ``Look-Compute-Move'' (LCM) cycles: during a cycle, it observes its surroundings, computes a destination point, and moves to it. Typically, the robots are assumed to  have constant-size persistent memory or, more commonly, to be {\em oblivious} having no persistent memory: This paper assumes the latter model where robots in each cycle act only based on the current observation and have no memory of their activities from previous cycles. Further the robots do not have any means of orienting themselves; %(no compasses, landmarks or GPS devices); 
Each robot observes the location of other robots relative to its own position in the plane and the robots do not share any common coordinate system.
If the robots agree on a common notion of clockwise direction, then we say the system has \emph{chirality}.

Some typical problems that have been studied in this model include: {\em gathering} of robots (e.g., \cite{DiFlSV18,DiFlSVY17}), uniform {\em dispersal}, {\em filling} a region with robots, {\em flocking}, etc.
%or evacuating the robots from a given region through designated exit
(for a review, see \cite{FloPS12}). A generalization of some of these problems is that of \emph{pattern formation}, where the $n$ robots need to move 
from any initial configuration to a predefined pattern of $n$ points in the plane. This class has been extensively studied 
(e.g., \cite{andoSY95,BrT16,DaFSY15,FPSW08,FuYaKiYa12,SugS96,SuzY99,yamashita2010,YuY14}).  
A major issue in such formation problems is the amount of symmetry (quantified by the notion of symmetricity~\cite{SuzY99}) in the starting configuration of robots and in the points of the pattern.
In the arbitrary pattern formation problem, the points where the pattern is formed are  {\em relative}, i.e. subject to rotation, translation and scaling of the input pattern. 
A different line of research is when the points of the pattern are {\em fixed}, a setting called {\em embedded pattern} and studied in \cite{CiDN16,FuYOKY15}. 
% Although new within distributed computing, this setting is rather common in  application domains such as mobile sensor networks and mobile robotics, where an embedded pattern is  a  set of   visible landmarks,  or  ``points of interest" (POIs)  (e.g., \cite{ErRS13}).  
 
In some applications, forming a pattern may be the first step of a more complex task requiring coordination between robots. Consider, for example, robots that contain instrumentation for monitoring a site once there, as well as sensors for measurement (e.g., detecting traces of oil or precious metals, radioactivity, etc).
If each robot has different sensors,  the same site  might need to be visited by all robots, and this must be done
while  still keeping all the sites monitored. 
A more relaxed version of this task is where  each site must be visited by  (at least) two robots. This task may be  useful even in situations where  all the robots contain the same sensors,  e.g., if there are faulty sensors and we want to replicate the measurements. 

These  tasks are instances of a new class of problems quite different from the formation problems as the robots need to rotate among the given points of interests, forming permutations of a given pattern of points. We assume that each robot is initially occupying a point of interest (thus marking that location) and the objective is to permute the robots among these locations periodically. The question is which permutations can be implemented starting from which patterns. We show a big difference between these classes of permutation problems compared to the formation problems studied previously. In particular, we show that even in the {\em fully synchronous} (${\cal FSYNC}$) model, some of the permutation problems are not solvable, even when starting from configurations that admit a leader. In contrast, any formation problem (including gathering) is easily solvable in ${\cal FSYNC}$ when the starting configuration admits a leader.  

Note that the permutation problems considered in this paper are {\em perpetual} tasks requiring continuous visits to the sites by the robots.
Unlike the multiple pattern formation problem where robots continuously move from a pattern to the next \cite{DaFSY15},  here the robots perpetually move but only  exchanging locations in the same pattern. % the robots simply {\em permute} their locations. 
In particular, we focus on two interesting types of permutations --- permutations without fixed points, and permutations of order $n$ (i.e. $n$-cycles). These give rise to two specific problems (i) \moveall :  every site must be visited by at least two robots and every robot has to visit at least two points, and,
(ii) \visitall : every robot must visit each of the points of interest. 
We provide a characterization of the solvability of these problems showing which patterns make it feasible to solve these problems and under what conditions. 
To the best of our knowledge, this is the first investigation on these class of problems.

\subsection*{Our Contributions}
We distinguish between $1$-step and multi-step algorithms; In the former case, we must form the permutations without passing through intermediate configurations, while in the latter case, a fixed number of intermediate configurations are allowed (see definitions in Section~\ref{sec:model}). We study $1$-step and 
$2$-step algorithms for \visitall and \moveall, distinguishing the case when the robots share a common chirality from the case when they do not.
%and we characterize in both cases  the set of initial configurations from which are solvable. 
We identify a special class of configurations denoted by ${\cal C}_{\odot}$, that are rotationally symmetric with exactly one robot in the center of symmetry. Such  configurations do not always allow permutations without fixed points, thus making it difficult to solve the above problems. 

We show that when there is chirality, the sets of initial configurations from which \visitall and \moveall can be solved, using $1$-step algorithms, are the same: that is, all configurations except those in  ${\cal C}_{\odot}$ (Section \ref{obv:chiral}).
We then show that the characterization remains the same when we consider $2$-step algorithms. Moreover, in the case of \visitall, the solvability does not change even for $k$-step algorithms for any constant $k$. 

On the other hand, when there is no chirality, we observe a difference between the solvability of  \visitall and \moveall. Configurations in ${\cal C}_{\odot}$ are clearly still non feasible for both problems.  However, for the \moveall problem  the class of unsolvable configurations also includes the ones  where there exists a symmetry axis with a unique robot on it. 
On the other hand, the set of initial configurations from which \visitall is solvable is different:  the problem can be solved  if and only if in the initial configuration there are no axes of symmetry or if there is a unique symmetry axis that does not contain any robots. 
Interestingly, also in this case, allowing $2$-step algorithms does not change the set of solvable instances. 

We then show  that, when there is chirality and the coordinate systems of robots are visible (that is, a robot can sense the local coordinate system of the others), then \visitall (and thus  \moveall) is solvable from arbitrary initial configurations, and we provide a universal algorithm for solving the problems. 
Finally, we show that allowing a single bit of persistent memory per robot and assuming chirality, it is possible to solve the problems for all initial configurations (Section \ref{sec:mem}).

\section{Model, Definitions and Preliminaries}\label{sec:model}

\noindent {\bf Robots and scheduler.}
We consider a set of dimensionless computational entities: the {\em robots}. These robots are modelled as points in the metric space $\mathbb{R}^2$; they are able to sense the environment detecting the presence of other robots, they can perform computations, and are able to move to any other point in the space. Each robot has its own local coordinate system centred in its own position (which may differ in orientation and unit distance from the coordinate system of other robots). For simplicity of description, we will use a global coordinate system $S$ for analyzing the moves of the robots (robots themselves are unaware of this global system).
Robots are {\em oblivious}: they do not have any persistent memory and thus, they cannot recall any information from previous computations. We indicate the set of robots with $R:\{r_0,r_1,\ldots, r_{n-1}\}$, however the robots themselves are not aware of the numbering assigned to them. All robots are identical and follow the same algorithm. 
%A robot performs its computation upon an {\em activation}. 
We assume the so-called {\em Fully-Synchronous Scheduler} (${\cal FSYNC}$). Under this scheduler, time can be seen as divided in discrete fixed length slots called  {\em rounds}. In each round, each robot synchronously performs a {\em Look-Compute-Move} cycle \cite{FloPS12}. During the {\em Look} phase, a robot $r$ takes an instantaneous {\em snapshot} of the environment, the snapshot is an entire map of the plane containing positions of all the other robots with respect to the local coordinate system of $r$. During the {\em Compute} phase, robot $r$ performs some local computation to decide its new destination point as a function of the aforementioned snapshot as input. Finally, in the {\em Move} phase, the robot moves to the computed destination point (which may be the same as current location).
%In ${\cal FSYNC}$ all robots are activated at each round, and the aforementioned phases are instantaneous and synchronised: all robots execute the Look at the same time, then the Compute, and, finally, the Move. 

\noindent {\bf Chirality.}
Robots may or may not share the same {\em handedness}: in the former case, they all agree on the clockwise direction and we say the system has {\em chirality} \cite{FloPS12}, in the latter case, robots do not have such an agreement and we say there is {\em no chirality}. 
%Apart from the agreement (or disagreement) on  {\em handedness}  we do not assume anything on the local reference systems of robots: that is each robot has its own local reference system centred in its position, such system may be an arbitrary rotation, scaling, and reflection (only for the no chirality case) of the local reference system of another robot.  

\noindent {\bf Configurations.}
% Given a tuple $T$ we indicate with $T[i]$ the element in position $i$.
A configuration $C$ is an ordered tuple of points $C = (p_0, p_1, \ldots , p_{n-1})$,
where $p_i=C[i]$ is the position of robot $r_i$ in terms of the global coordinate system $S$. We denote by $Z = (Z_0, Z_1, \ldots Z_{n-1})$ the ordered tuple of coordinate systems where $Z_i$ is the system used by robot $r_i$. 
Given a robot $r_i$ located at $p_i$, we denote with $C \setminus \{r_i\}$ (or sometimes $C \setminus \{p_i\}$), the configuration obtained by removing robot $r_i$ from $C$.
We indicate with $C_0$ the initial configuration in which the robots start. We denote by $SEC(C)$ the smallest circle that encloses all points in the configuration $C$.

\noindent {\bf Symmetricity.} 
%Given any configuration $C$, the symmetricity $\rho(C) = m$ is the largest integer $m$ such that there is a point $o$ and $m$ distinct angles in $[0, 360)$, where rotation of $C$ by any of these angles, with respect to $o$ as center, produces the same configuration $C$. 
Given any configuration $C$ with robots having coordinate systems $Z$, the {\em symmetricity} $\sigma(C, Z) = m$ is the largest integer $m$ such that the robots can be partitioned into classes of size at most $m$ where robots in the same class have the same view (snapshot) in $C$ (See \cite{SuzY99,yamashita2010}). Alternatively, we can define the symmetricity (irrespective of $Z$) of a configuration as $\rho(C)=m$ where $m$ is largest integer such that $\exists{Z} : \sigma(C, Z)=m$.
For any configuration $C$, we have $\rho(C) \geq 1$, the configurations with $\rho(C)=1$ are considered to be asymmetric (these are the only configurations that allow to elect a leader among robots).
For symmetric configurations with $\rho(C)>1$, $C$ may have rotational symmetry with respect to the center $c$ of $SEC(C)$, which coincides with the centroid of $C$ in this case, or $C$ may have mirror symmetry with respect to a line, called the axis of symmetry.
See Figure \ref{cdot} for an example of symmetry classes.
%the point $o$ is called the center of symmetry (which coincides with the centroid of $C$).

We define a special class of configurations %is the one where there is a robot exactly in the central position of the configuration and there is a rotational symmetry of the configuration. 
denoted by ${\cal C}_{\odot}$.
A configuration $C$ is in ${\cal C}_{\odot}$, if and only if $\rho(C)=1$, and there exists a unique robot $r_c$ (the {\em central robot}) located at the center of $SEC(C)$ such that $\rho(C \setminus  \{r_c\})=k > 1$; In other words, 
$C$ has a rotational symmetry around $r_c$ such that $C$ can be rotated around centre $r_c$ by an angle $\theta =\frac{\pi}{k}$ to obtain a permutation of $C$. Figure \ref{imptriang} is an example of a configuration in ${\cal C}_{\odot}$). 

%\color{red}
\begin{figure}
  \centering
    \includegraphics[width=0.4\textwidth]{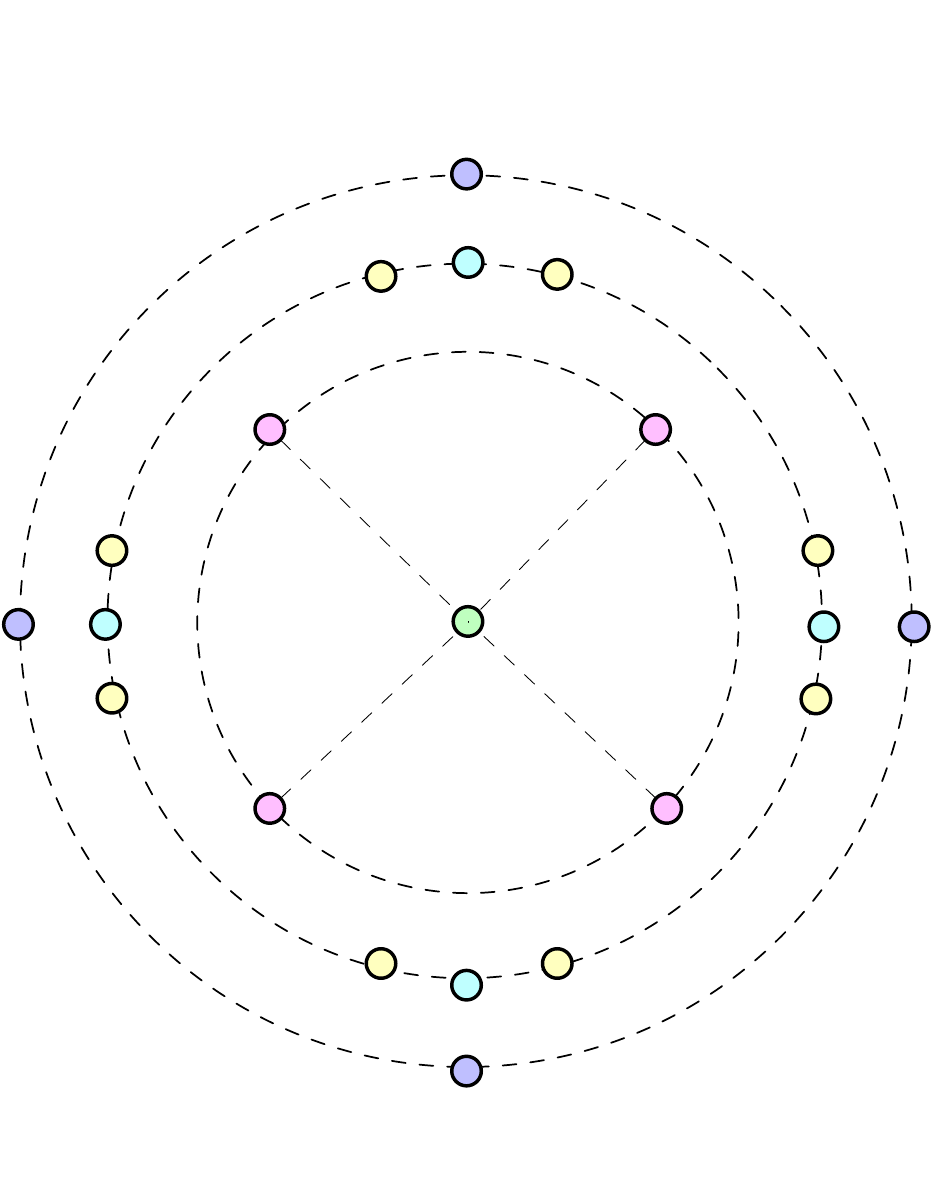}
      \caption{Configuration $C \in {\cal C}_{\odot}$, each symmetry class is coloured in a different way. Notice that the central robot $r_c$ is the only one with an unique view (recall that the central robot is the green one, the robot positioned on the centroid of $C$). If removed we get a configuration $C \setminus \{r_c\}$, with $\rho(C \setminus \{r_c\})=4$, notice also that the yellow class has cardinality $8$.}
      \label{cdot}
\end{figure}

\noindent {\bf Permutations and runs.} For a permutation $\pi = (\pi(0), \pi(1), \ldots , \pi(n-1))$ of $(0,1, \ldots , n-1)$,
define $\pi(C) = (p_{\pi(0)}, p_{\pi(1)}, \ldots , p_{\pi(n-1)})$.
We denote: (1) the set of permutations with no fixed points as $\Pi_2 = \{ \pi : \pi(i) \neq i ,\,\,\forall i: 0 \leq i \leq n-1 \}$ and (2) the set of cyclic permutations of order $n$ as
$\Pi_n = \{ \pi : \pi^j(i) = i <=> nk = j$ for some $k \in \mathbb{N} \}$ where $\pi^{j}$ indicates that we apply permutation $\pi$  $j$ times. 
Let $\Pi(C)$ be the set of all permutations of ${C}$.
% $\Pi(C)$ obviously includes the identity, so that $\Pi(C) \not= \emptyset$.

Given an algorithm ${\cal A}$  and an initial configuration $C_0$ we denote any execution of algorithm {\cal A}, starting with configuration $C_0$ as the run ${\cal R}_{{\cal A},C_0}:(C_0,C_1,C_2,\ldots)$, the infinite ordered sequence of configurations, where $C_j$ is produced during round $j$ of the execution. 
%Given a run ${\cal R}_{{\cal A},C}$ we indicate with ${\cal R}_{{\cal A},C}[j]$ the $j$-th configuration of the run. 

%Finally, the snapshot obtained in the Look phase is simply the configuration of robots, at snapshot time, in the local reference system of the robot that is taking the snapshot. When we discuss the algorithm of robot $r$ and we use the term configuration, we always mean a configuration expressed in the reference system or robot $r$. 

\subsection*{Problem Definitions} 

We will study the following two problems:
\begin{itemize}
\item \moveall: An algorithm ${\cal A}$ is a $1$-step solution algorithm for the \moveall problem, if every possible run of the algorithm ${\cal R}_{{\cal A},C_0}:(C_0,C_1,C_2,\ldots)$ is such that: $C_{i} = \pi^{i}(C_0)$ for some $\pi \in \Pi_2$. Intuitively, every configuration is a permutation of $C_0$ and in any two consecutive configurations, the position of each robot is different. 
As an extension for any $k \in \mathbb{N}^{+}$, a $k$-step solution requires that $C_{i \cdot k} = \pi^{i}(C_0)$ where $\pi \in \Pi_2$. (There is no constraint on the intermediate configurations $C_j$ where $k$ does not divide $j$.)

\item \visitall: 
An algorithm ${\cal A}$ is a $1$-step solution algorithm for the \visitall problem, if every possible run of the algorithm ${\cal R}_{{\cal A},C_0}:(C_0,C_1,C_2,\ldots)$ is such that: $C_{i} = \pi^{i}(C_0)$ for some $\pi \in \Pi_n$. Intuitively, every configuration is a permutation of $C_0$ and in every $n$ consecutive configurations, every robot visit every location $p_i \in C_0$.
We can similarly define a $k$-step solution for the problem where $C_{i \cdot k} = \pi^{i}(C_0)$ for some $\pi \in \Pi_n$.

%we have a run  ${\cal R}_{{\cal A},C_0}:(C_0,C_1,C_2,\ldots)$ such that: $C_{i \cdot k} \in \Pi(C_0)$ and $C[j]_{i \cdot k} \neq C_{(i-1)\cdot k}[j]$ for all $j \in \{1,\ldots,n\}$ and $i \in \mathbb{N}^{+}$. Moreover, $\forall x,y \in \{0,\ldots,n-1\}$ there exists a configuration $C_{f(y,x) \cdot k} \in {\cal R}_{{\cal A},C_0}$ such that $C_{f(y,x) \cdot k}[y]=C_{0}[x]$, with $f: \mathbb{N} \times \mathbb{N} \rightarrow \mathbb{N}$. Intuitively, every $k$ rounds we must have a permutation of $C_0$, where no robot remains in its previous position, and eventually each robot has to visit the positions of every other. 

\end{itemize}
%We are especially interested in the $1$-step algorithms, that are obtained by fixing $k=1$ in previous definitions. 
Since $\Pi_n \subset \Pi_2$, it follows that any solution for \visitall is also a solution to the \moveall problem.

\subsection*{Oblivious Permutations}

Note that $k$-step solutions of \moveall and \visitall specify that we must have a permutation of the initial configuration $C_0$ every $k$ rounds. However, no constraint is given on the other {\em intermediate} configurations. Interestingly, when robots are oblivious the previous definitions imply a stronger version of the problem in which each configuration $C_{j+k}$ has to be the permutation of configuration $C_j$ that appeared $k$ rounds ago. 

\begin{lemma}\label{lemma:prem}
Let ${\cal A}$ be a $k$-step algorithm solving \moveall (or \visitall), and let ${\cal R}_{{\cal A},C_0}:(C_0,C_1,C_2,\ldots)$ be any run of ${\cal A}$ starting from $C_0$.  For each $j \in \mathbb{N}$ we have that $C_{j+k} =\pi(C_j)$ for some $\pi \in \Pi(C_j)$. 
\end{lemma}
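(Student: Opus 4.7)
The plan is to use obliviousness to reinterpret any suffix of the run as a fresh execution of ${\cal A}$ and then invoke the $k$-step property for that fresh run. First I would fix an arbitrary $j \in \mathbb{N}$ and isolate the tail $(C_j, C_{j+1}, C_{j+2}, \ldots)$ of the given run ${\cal R}_{{\cal A},C_0}$.

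Next I would argue that, because the robots are oblivious and the scheduler is ${\cal FSYNC}$, in each round $t \geq j$ every robot's next position is a deterministic function of its snapshot of $C_t$ alone: no persistent memory can carry information about rounds $0, 1, \ldots, j-1$. Hence the tail is completely determined by $C_j$ together with the fixed coordinate-system tuple $Z$, and so it coincides with the run ${\cal R}_{{\cal A},C_j}$ that ${\cal A}$ would produce if it were initialised at $C_j$ under the same $Z$. Having made this identification, I would apply the defining property of a $k$-step solution algorithm: \emph{every} run of ${\cal A}$ must satisfy $C'_{k} = \pi(C'_0)$ for some $\pi \in \Pi_2$ (respectively $\Pi_n$ for \visitall). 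Specialising this to $C'_0 = C_j$ immediately yields $C_{j+k} = \pi(C_j)$ for some $\pi \in \Pi_2 \subseteq \Pi(C_j)$, which is the desired conclusion; the \visitall case is identical, since $\Pi_n \subset \Pi_2$.

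The main obstacle I anticipate is justifying the reinterpretation step: a reader could object that the $k$-step property is stated about runs that start from $C_0$, so why should it apply to a ``run'' that begins at $C_j$? The resolution is that the definition of a solution algorithm quantifies over \emph{every} possible run of ${\cal A}$, not just those starting from a privileged initial configuration, and that obliviousness together with the fully synchronous scheduler genuinely erases any distinction between round $0$ and round $j$ of the execution. Thus the tail is itself a bona fide run of ${\cal A}$ and must inherit the $k$-step permutation property, as required.
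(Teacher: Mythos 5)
Your proposal is correct and follows essentially the same route as the paper: identify the suffix of the run starting at $C_j$ with a fresh run ${\cal R}_{{\cal A},C_j}$ using obliviousness, determinism, and the fully synchronous scheduler, then invoke the $k$-step correctness property for that run to obtain $C_{j+k}=\pi(C_j)$. Your explicit discussion of why the $k$-step property applies to runs not beginning at the ``privileged'' initial configuration is a welcome clarification of a point the paper treats only implicitly.
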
 

\begin{proof}
We prove the lemma for \moveall, the extension to \visitall is analogous and immediate.
If $j=t\cdot k$ for some $t \in \mathbb{N}$ then the lemma follows from the problem definition. Thus let us consider a configuration $C_{j}$ such that $j \neq t\cdot k$ for all $t \in \mathbb{N}$. %and let us suppose $C_{j+k} \neq \pi(C_j)$. 
We observe that ${\cal R}_{{\cal A},C_j}$ (that is a run of ${\cal A}$ starting from $C_j$) is equal to the suffix of ${\cal R}_{{\cal A},C_0}$ starting from $C_j$. This is due to the obliviousness of the robot, the fact that the algorithm is deterministic and the synchronous scheduler: starting from a certain configuration and an assignment of local coordinate systems, the algorithm will generate a fixed sequence of configurations. However in  ${\cal R}_{{\cal A},C_j}$ we must have that $C_{j+k} = \pi(C_j)$ for some $\pi \in \Pi_{2}(C_j)$, otherwise ${\cal A}$ is not a correct algorithm for \moveall. 
\end{proof}

\section{Oblivious Robots with Chirality \label{obv:chiral}}

In this section we  consider robots having chirality (i.e., they agree on the same clockwise orientation). 

\subsection{$1$-Step Algorithms \label{1stepchirality}}

We first consider $1$-step algorithms, and show that \moveall and \visitall are solvable if the initial configuration $C_0$ is not in ${\cal C}_{\odot}$. 

\noindent {\bf Intuition behind the solution algorithms.}
The underlining idea of our solution algorithms is to first make robots agree on a cyclic ordering of the robots, and then permute their positions according to this ordering. This algorithm is shown in Algorithm \ref{visitciclyc} and the ordering procedure is shown in Algorithm \ref{alg:visitallrho2}. When the centroid $c$ of configuration $C_0$ does not contain any robot, we compute a cyclic ordering on the robots by taking the half-line passing through $c$ and one of the robots closest to $c$ and rotating it w.r.t. point $c$; the robots are listed in the order the line hits them. We can show that the ordering computed by any robot is a rotation of that computed by another robot (See Figure~\ref{figure:alg1} for example). The only issue is when there is a robot positioned in the centroid. In this case, the robots compute a unique total order on the robots; this is always possible since $C_0 \notin {\cal C}_{\odot}$, which implies that $C_0$ is asymmetric and admits a total ordering.

%Once the robots agree  on a cyclic order $(p_0,p_1,\ldots,p_{n-1})$ over the  positions in configuration $C_0$, they can solve {\sc \visitall}:  an agent in position $p_{i}$ simply moves to position $p_{i+1 \mod n}$. 
From the above observations, It is immediate that the algorithm solves \visitall, take a robot $r$, w.l.o.g. in position $p_i$, during $n$ activations, the robot moves through all the robot positions in the computed cyclic order, returning back to $p_i$; thus, it has visited every point in $C_0$.

%------------------
%When we write the pseudocode of an algorithm, the starting procedure is always {\sc Compute} and it takes as input a configuration $C$   obtained by the snapshot taken by the robot executing the code. This procedure corresponds to the one executed in the local Compute phase of the robot.
%-------------------
%%
\begin{algorithm}[h]
 \caption{{\sc \visitall}  Algorithm using a cyclic order. \label{visitciclyc}}
 \footnotesize
\begin{algorithmic}[1]
\State Compute a cyclic order $(p_0,p_1,\ldots,p_{n-1})$ on $C$ using {\sc Order}$(C)$.    
\State If my position is   $p_i$,  set {\bf destination}  as   $p_{(i+1) \mod {n}}$.   \label{order}
\end{algorithmic}
\end{algorithm}
%-------------------

\begin{algorithm}[h]
\caption{{\sc Order} algorithm with Chirality. \label{alg:visitallrho2}}
\footnotesize

\begin{algorithmic}[1]

\Procedure{Insert}{list, Configuration $C$} \Comment{ if this procedure is called then $\rho(C_0) = 1$.}
\State Let $order$ be a total order of robots in $C$ with the centroid $c$ as last element, and let $p$ the position that precedes $c$ in $order$.
\State insert $c$ in list after position $p$
\State {\bf return} list
\EndProcedure
\bigskip

\Procedure{Polar}{position $p$, reference $r$, Configuration $C$}
\State  {\sc Polar} takes two points $p,r$, and a configuration $C$.
\State  {\sc Polar} returns the polar coordinates $(d,\theta)$ of $p$ in a coordinate system that is centered in the centroid of $C$ and has as reference direction the segment between the centroid and point $r$.
\State {\bf return} $(d,\theta)$ 
\EndProcedure
\bigskip

\Procedure{Next}{position $r$, Configuration $C$}
\If{$\exists p \in C$ {such that $(d',0)=${\sc Polar}($p$,$r$,$C$) with $d' > ||r||$}  }
\State {\bf return} $p$
\Else
\State Let $p \in C$ be such that $(d,\theta)=${\sc Polar}($p$,$r$,$C$) has the minimum $\theta > 0$ and the minimum $d$ among all the polar coordinates of robots in $C$.
\State {\bf return} $p$
\EndIf
\EndProcedure
\bigskip

\Procedure{Order}{Configuration $C$}
\State $c$=centroid of $C$
\State list=EmptyList()
\State $r_1$=pick one robot position in $C$ with minimum non zero distance from $c$.
\State list.Append($r_1$)
\State $A=C \setminus \{r_1,c\}$
\While{$A \neq \emptyset $}
\State $r_2$={\sc Next}(list.LastElement(),$C$)
\State list.Append($r_2$)
\State $A=A \setminus \{r_2\}$
\EndWhile
\If{$c \in C$}
\State list={\sc InsertCentroid}(list,$C$) \label{ref:insertc}
\EndIf
\State return list
\EndProcedure
\end{algorithmic}
\end{algorithm}
%%%% END ALGO 

\begin{figure}[h]
\begin{center}
%\begin{subfigure}[b]{0.28\linewidth}

%\fbox{
%  \includegraphics[width=\textwidth]{}}
%    \caption{Initial configuration $I$, with $\rho(I)=3$.  \label{fig:alg1c1}}
%  \end{subfigure}
%  \quad\quad
  \begin{subfigure}[b]{0.45\linewidth}

  \fbox{\includegraphics[width=\textwidth]{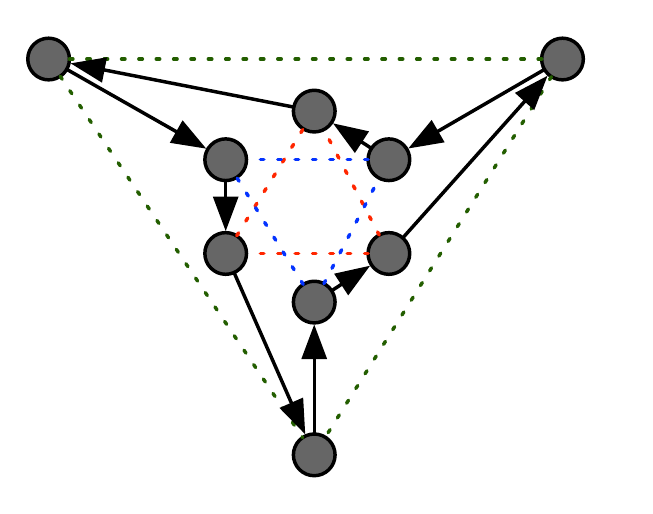}}
    \caption{An example of cyclic order induced by the {\sc Order} algorithm.    \label{fig:alg1c3}}
  \end{subfigure}
  \quad\quad
    \begin{subfigure}[b]{0.36\linewidth}

  \fbox{\includegraphics[width=\textwidth]{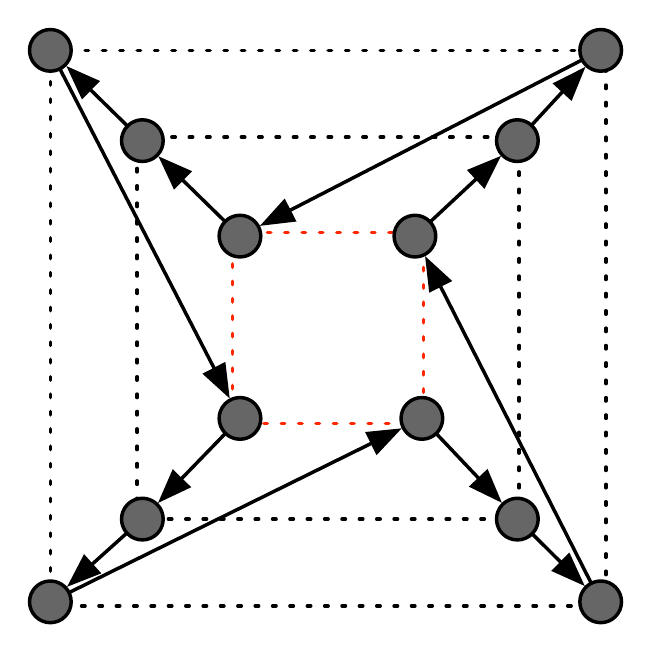}}
    \caption{Another example of cyclic order induced by the {\sc Order} algorithm.}
    \end{subfigure}
\end{center}
\caption{{\sc Order} algorithm: Examples of cyclic order computed by Algorithm \ref{visitciclyc}. \label{figure:alg1}}

\end{figure}

\begin{theorem}\label{lemma:l1}
In systems with chirality, there exists a $1$-step algorithm that solves \visitall from any initial configuration $C_0 \not\in {\cal C}_{\odot}$.
\end{theorem}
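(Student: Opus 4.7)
The plan is to show that Algorithm~\ref{visitciclyc} solves \visitall by reducing the problem to a single claim: the cyclic ordering produced by {\sc Order} is a well-defined geometric object that every robot computes identically (as a directed cycle on the point set $C$), independently of its local coordinate system. Once this is established, every robot at position $p_i$ computes the same successor $p_{(i+1) \bmod n}$, so the induced motion is a single $n$-cycle $\pi \in \Pi_n$. Because the cycle is determined only by the geometry of $C$ and not by any labelling, $\pi^t(C_0)$ will again be processed into the same cycle at round $t$, so the $t$-th round implements the $t$-th power of $\pi$; after $n$ rounds every robot has traversed every position of $C_0$.

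To prove consistency of {\sc Order}, I first note that the centroid $c$ of $C$ is a geometric invariant, agreed on by every robot, and that chirality gives a common orientation (``positive $\theta$'') around $c$. I split on whether $c$ is occupied. If $c$ is empty, the algorithm selects some robot at minimum positive distance from $c$ and then invokes {\sc Next} repeatedly: {\sc Next} first advances outward along the current ray through $c$ (handling collinearities via the $\theta=0$ branch) and otherwise rotates in the chirality-preferred direction to the next angular sector, breaking ties by distance. I would verify that this deterministic rule produces, from a fixed starting ray, a Hamiltonian walk on $C$ whose underlying directed cycle depends only on the geometry of $C$. The starting robot may be non-unique when $\rho(C) > 1$, but in that case any two minimum-distance robots are related by a rotation about $c$ that is a symmetry of $C$, so the induced cyclic orderings coincide as cycles (only the chosen root differs). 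Hence every robot, regardless of the starting candidate it singles out, computes the same cycle.

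If $c$ is occupied by a robot $r_c$, then $r_c$ has a unique view, forcing $\rho(C_0) = 1$; combined with $C_0 \notin \mathcal{C}_\odot$ this yields $\rho(C_0 \setminus \{r_c\}) = 1$. Thus $C_0 \setminus \{r_c\}$ is asymmetric. The argument of the previous paragraph applies unchanged to $C_0 \setminus \{r_c\}$, giving a canonical cycle on $n-1$ points; moreover, the asymmetry of $C_0 \setminus \{r_c\}$ admits a (unique) total order that every robot can compute, so the {\sc Insert} procedure deterministically splices $r_c$ into a unique position of that cycle. All robots therefore agree on the resulting $n$-cycle.

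The main obstacle I anticipate is the step in the second paragraph, namely verifying that different minimum-distance starting robots under rotational symmetry generate the \emph{same} directed cycle rather than merely cycles of the same length. The crux is that the action of a rotational symmetry of $C$ commutes with the purely geometric rules used by {\sc Next} (they depend only on distances from $c$ and on the chirality-oriented angle, both of which are preserved by rotations about $c$). Writing this out carefully---i.e., that applying a symmetry to the whole run of {\sc Next} yields the run starting from the image of the root---reduces consistency across starting choices to the fact that a rotation maps the cycle to itself. With that in hand, the $n$-cycle property of $\pi$ and Lemma~\ref{lemma:prem} together imply that the algorithm is a valid $1$-step solution to \visitall.
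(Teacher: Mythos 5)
Your overall strategy is the same as the paper's: show that the cyclic order returned by {\sc Order} is a coordinate-independent function of the point set, split on whether the centroid is occupied, and then observe that iterating the common successor map realizes a single $n$-cycle round after round. The empty-centroid case is fine; your equivariance argument for non-unique minimum-distance starting robots is essentially a more explicit version of what the paper obtains by contradiction from the fact that {\sc Next} depends only on the centroid, distances, and the chirality-oriented angle.

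There is, however, a genuine gap in the occupied-centroid case. You claim that $c \in C_0$ together with $C_0 \notin \mathcal{C}_{\odot}$ forces $\rho(C_0 \setminus \{r_c\}) = 1$, and you build the total order from that asymmetry. This implication is false: membership in $\mathcal{C}_{\odot}$ is about a \emph{rotational} symmetry of $C_0 \setminus \{r_c\}$ around the central robot, so $C_0$ can avoid $\mathcal{C}_{\odot}$ while $C_0 \setminus \{r_c\}$ still has a mirror symmetry, in which case $\rho(C_0 \setminus \{r_c\}) = 2$ (the adversary gives mirror-image robots reflected coordinate systems so that they share the same view). A concrete instance: robots at $(0,\pm 1)$ and $(2,\pm 3)$ plus a fifth robot at their centroid $(1,0)$; this is not in $\mathcal{C}_{\odot}$, yet the four outer robots form two reflection-equivalent pairs, so no total order follows from asymmetry alone. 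The paper closes exactly this case: when $\rho(C_0 \setminus \{c\}) > 1$ the only residual symmetry is a unique reflection axis, and chirality lets every robot orient that axis consistently (a reflection would reverse the agreed clockwise direction), which again yields a common total order into which $c$ is spliced as the last element. Your conclusion is true, but as written your argument does not cover the mirror-symmetric subcase, and chirality must be invoked at this point rather than only in the empty-centroid case.
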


\begin{proof}
The proof is constructive; in fact, we show a $1$-step algorithm  that solves \visitall.
Each robot obtains a cyclic order by executing Algorithm \ref{alg:visitallrho2}. To prove that \visitall can be solved, it is sufficient to show that the cyclic orders are the same for all robots. %Such order induces on $I$ a cyclic permutation of size $n$, reiterating such permutation $n$ times, solves \visitall. 

The order is decided by Procedure {\sc Order}. We distinguish two cases, when the centroid $c$ of $C_0$ is not in $C_0$, and when it is in $C_0$.
\begin{itemize}
\item Case of $c \not\in C_0$:
Let us consider by contradiction that for two different robots, w.l.o.g. $r_1,r_2$, there are two different cyclic orderings. Let $L_1$ be the list given by {\sc Order} of robot $r_1$ and let $L_2$ be the list given by {\sc Order} of robot $r_2$.
Being $L_1$ and $L_2$ different, then there exist two indices $j_1,j_2$ such that $L_1[ (j_1)\mod n]=L_2[(j_2) \mod n]$ and $L_1[ (j_1+1)\mod n] \neq L_2[(j_2+1) \mod n]$.
However, this implies that $\exists p_j \in C_0$, such that {\sc Next}($p_j,C_0$) gives two different results according to the coordinate system of the callee, that is impossible: the centroid is the same for each one of them, moreover robots agree on the same handedness therefore the angles grows  counterclockwise for all of them. 
 \item Case of $c \in C_0$: First of all, note that the argument of the previous case implies that the $list$ obtained before executing Line \ref{ref:insertc} of Algorithm \ref{alg:visitallrho2} represents a common cyclic order on $C_0 \setminus \{c\}$  shared by all robots.  It remains to show that, when executing Line \ref{ref:insertc}, all robots insert the centroid $c$ before the same position $p$ in $list$. We have by assumption that $C_0 \not\in {\cal C}_{\odot}$, thus if $c \in C_0$ then there exists a total order on which all robots agree: by definition we either have $\rho(C_0 \setminus \{c\})=1$ (and the total order is trivially enforced) or that there is an unique symmetry axis of $C$; in the latter case the presence of chirality ensures such total order. It is easy to see that from this total order we can obtain a total order on which $c$ is the last element, and on which all robots agree, let it be $order$.  Therefore, being $order$ common to all robots, it is immediate that everyone inserts $c$ before the same position $p$ when executing Line \ref{ref:insertc}. 
  \end{itemize}\end{proof} 

\noindent We now show that, when $C_0 \in {\cal C}_{\odot}$ \moveall (and thus \visitall ) is unsolvable:

\begin{theorem}\label{tm3}
If $C_0 \in {\cal C}_{\odot}$ there exists  no $1$-step algorithm that  solves \moveall, even when the robots have chirality.
\end{theorem}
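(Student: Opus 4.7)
The plan is to derive a contradiction by exploiting the fact that any 1-step solution must keep $C_1$ a permutation of $C_0$, while the rotational symmetry of $C_0 \setminus \{r_c\}$ forces the non-central robots to act in symmetric groups. Let $p_c$ be the position of the central robot $r_c$, and let $\rho$ denote the rotation around $p_c$ of order $k > 1$ that witnesses $\rho(C_0 \setminus \{r_c\}) = k$. Note that $\rho$ fixes $p_c$ setwise and permutes the non-central positions in orbits of size $k$.

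First, I would argue that in any 1-step algorithm, after one round the new configuration $C_1$ is $\pi(C_0)$ for some $\pi \in \Pi_2$. Since $p_c \in C_0$ and $C_1$ is a permutation of $C_0$, some robot must occupy $p_c$ in $C_1$. Because $\pi$ has no fixed points, $r_c$ itself does not sit on $p_c$ in $C_1$. Therefore there must exist a non-central robot $r_i$, initially at some position $p_i \neq p_c$, whose destination computed by the algorithm is exactly $p_c$.

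Next, I would invoke the symmetry argument that is standard for oblivious synchronous robots. Every non-central robot in $C_0$ belongs to an orbit of size $k$ under $\rho$, and by the rotational symmetry plus chirality, all $k$ robots in the orbit of $r_i$ take the same snapshot up to the rotation $\rho$; since the algorithm is deterministic and the robots share handedness, they compute identical destinations in their own local frames. Translating back to the global frame, if $r_i$'s global destination is $d$, then the destination of the robot at $\rho^j(p_i)$ is $\rho^j(d)$ for each $j = 0, 1, \ldots, k-1$.

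Applying this to $d = p_c$ yields the contradiction: since $\rho$ fixes $p_c$, all $k \ge 2$ robots in the orbit of $r_i$ compute $p_c$ as their destination. Consequently $C_1$ has at least two robots piled at $p_c$, so $C_1$ cannot be a permutation of $C_0$, contradicting that $\mathcal{A}$ solves \moveall in one step. I expect the main subtlety to be stating the symmetry step cleanly: one has to justify that two non-central robots in the same $\rho$-orbit genuinely have isomorphic snapshots (so that obliviousness forces identical local decisions) and then translate this "identical local decision" into the globally rotated destinations. Once that is written carefully, the collision at the center position follows immediately from $\rho(p_c) = p_c$.
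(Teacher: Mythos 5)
Your proposal is correct and follows essentially the same route as the paper: the paper also observes that the adversary can assign coordinate systems so that every non-central robot has a symmetric analogue, that a fixed-point-free permutation forces some non-central robot to target the center, and that the rotational symmetry (which fixes the center) then makes all robots in that orbit collide there, so $C_1 \notin \Pi(C_0)$. Your write-up merely makes the adversarial choice of local frames and the orbit structure more explicit than the paper does.
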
 

\begin{proof} In any configuration in ${\cal C}_{\odot}$, the adversary can assign coordinate systems in such a way that each robot, except the central robot $r_c$,  has at least one analogous with a symmetric view. This derives directly from the definition of ${\cal C}_{\odot}$. It is immediate to see that it is impossible to elect a  unique robot to move to the center  of $C_0$, taking the position of $r_c$. An example is given in Figure \ref{imptriang}, where if one robot moves to the centroid of $C_0$, then every robot except $r_c$ would do the same. This implies that, in the next round, it is impossible to form any $C_1 \in \Pi(C_0)$ with a different central robot.\end{proof}

\begin{figure}
  \centering
    \includegraphics[width=0.5\textwidth]{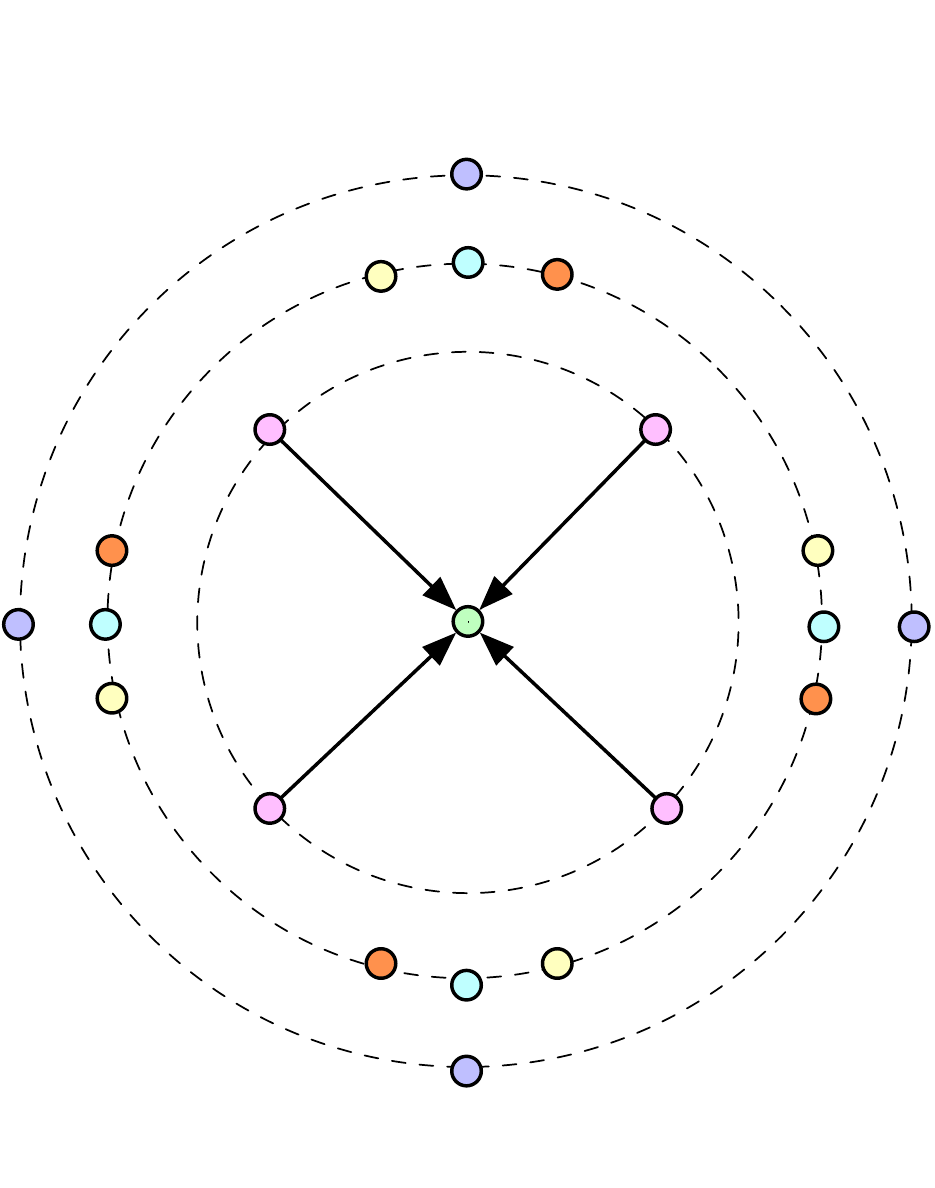}
      \caption{Configuration $C_0 \in {\cal C}_{\odot}$ where it is impossible to solve \moveall with a $1$-step algorithm.}
      \label{imptriang}
\end{figure}
Note that  Theorem  \ref{lemma:l1} implies that \moveall is solvable under the same assumptions of the theorem (recall that if we satisfy the \visitall specification, we satisfy also \moveall specification). Moreover, for the same reason, Theorem \ref{tm3} implies that \visitall is unsolvable. 

%From Theorems \ref{lemma:l1} and \ref{tm3} we have:
We can summarize the results of this section as follows:

\begin{theorem}\label{obs1}
In systems with chirality, \moveall and \visitall can be solved in 1-step if and only if  $C_0 \not\in {\cal C}_{\odot}$.
\end{theorem}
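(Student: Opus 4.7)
The plan is to observe that this summary statement is a direct corollary of the two preceding theorems, combined with the containment $\Pi_n \subset \Pi_2$ already noted just after the problem definitions. So rather than developing new machinery, I would simply assemble the two directions of the ``if and only if'' from results already in hand.

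For the solvability direction, assume $C_0 \not\in {\cal C}_{\odot}$. I would invoke Theorem \ref{lemma:l1}, which gives a $1$-step algorithm (Algorithm \ref{visitciclyc} together with Algorithm \ref{alg:visitallrho2}) solving \visitall from $C_0$. Since a cyclic permutation of order $n$ on at least two robots has no fixed points, the inclusion $\Pi_n \subset \Pi_2$ means that the very same run witnesses a $1$-step solution to \moveall. Hence both problems are solvable under the hypothesis $C_0 \not\in {\cal C}_{\odot}$.

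For the unsolvability direction, assume $C_0 \in {\cal C}_{\odot}$. Theorem \ref{tm3} already states that no $1$-step algorithm can solve \moveall from such a configuration, even with chirality. Because any algorithm solving \visitall must in particular produce runs along permutations in $\Pi_n \subset \Pi_2$, such an algorithm would also solve \moveall, contradicting Theorem \ref{tm3}. Therefore neither \moveall nor \visitall admits a $1$-step solution when $C_0 \in {\cal C}_{\odot}$.

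Combining the two directions yields the characterization. There is no real obstacle here; the only thing to be careful about is to cite the $\Pi_n \subset \Pi_2$ inclusion explicitly so that the reader sees why the \visitall algorithm of Theorem \ref{lemma:l1} already handles \moveall, and symmetrically why the \moveall impossibility of Theorem \ref{tm3} already rules out \visitall.
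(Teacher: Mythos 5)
Your proposal is correct and matches the paper's own treatment exactly: the paper derives this theorem by combining Theorem~\ref{lemma:l1} (solvability of \visitall, hence \moveall, when $C_0 \not\in {\cal C}_{\odot}$) with Theorem~\ref{tm3} (impossibility of \moveall, hence \visitall, when $C_0 \in {\cal C}_{\odot}$), using precisely the inclusion $\Pi_n \subset \Pi_2$ you cite. Nothing further is needed.
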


\subsection{$2$-step Algorithms}\label{chir:twostep}

In light of Theorem \ref{obs1}, one may wonder what happens when multiple steps are allowed. In this section we show that allowing an intermediate step to reach the goal does not bring any advantages. We first introduce a technical lemma. 
Intuitively, the result is based on the observation that it is impossible to replace the central robot by another robot in $1$-step. Thus the intermediate configuration must be a configuration $C_1 \notin {\cal C}_{\odot}$. 

\begin{lemma}\label{lm:sequence}
Let ${\cal A}$ be a $2$-step algorithm that solves \moveall.  Starting from configuration $C _0\in {\cal C}_{\odot}$, algorithm ${\cal A}$ cannot generate a run  ${\cal R}_{{\cal A},C_0}:(C_0,C_1, C_2,C_3,\ldots)$ where $C_1 \in {\cal C}_{\odot}$.
\end{lemma}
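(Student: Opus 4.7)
The plan is to argue by contradiction. Suppose ${\cal A}$ generates a run $(C_0,C_1,C_2,\ldots)$ with both $C_0,C_1\in{\cal C}_\odot$. Write $r_c$ and $c$ for the central robot and the centre of $SEC(C_0)$, and $r_c^{(1)}$ and $c_1$ for the analogous objects of $C_1$. Since ${\cal A}$ is a $2$-step algorithm for \moveall, $C_2=\pi(C_0)$ for some $\pi\in\Pi_2$; in particular the robot occupying position $c$ in $C_2$ is not $r_c$.

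The first step is to pin down the geometry of $C_1$ by showing $c_1=c$ and $r_c^{(1)}=r_c$. As in the proof of Theorem~\ref{tm3}, the adversary can fix local coordinate systems that respect the $k_0$-fold rotational symmetry of $C_0$ around $c$; under such an assignment, non-central robots in the same orbit of $C_0$ have identical views and therefore compute identical destinations in their local frames, so in the global frame their destinations are related by the orbit's rotation. Consequently, the multiset of positions of $C_1$ is rotationally symmetric around $c$, its centroid is $c$, and since the central robot of any configuration in ${\cal C}_\odot$ sits at the centroid we conclude $c_1=c$. Moreover, if any non-central robot of $C_0$ had moved to $c$, all of its rotational twins would have done the same, putting $\geq k_0\geq 2$ robots at $c$ in $C_1$ and violating the uniqueness of $r_c^{(1)}$. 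Hence $r_c$ must have remained at $c$, so $r_c^{(1)}=r_c$.

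The second step replays the collision argument on the transition $C_1\to C_2$. The adversarial $Z$ still respects the symmetry of $C_1$, because $c_1=c$ and the $k_0$-fold rotation around $c$ permutes $C_1$'s non-central positions among themselves; moreover the coordinate-system rotations that related symmetric robots in $C_0$ still relate their (non-colliding) images in $C_1$. Therefore any non-central orbit of $C_1$ whose destination lies at $c$ would collapse all $\geq k_1>1$ of its robots onto $c$, which is impossible because $C_2=\pi(C_0)$ has distinct positions. The only robot that can sit at $c$ in $C_2$ is thus $r_c^{(1)}=r_c$, contradicting the no-fixed-point property $\pi\in\Pi_2$.

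The main obstacle is making the symmetry-preservation step airtight: I need to check carefully (i) that a coordinate system chosen to respect $C_0$'s orbit structure is automatically also symmetric with respect to $C_1$'s orbit structure (so that the Theorem~\ref{tm3} collision argument applies verbatim to the second transition), and (ii) that a rotationally symmetric finite set of points has its centre of rotational symmetry at its centroid (used to conclude $c_1=c$). Once these two facts are in hand, the lemma reduces to a clean double application of the principle that in a configuration of ${\cal C}_\odot$ only the central robot can legitimately occupy the centre.
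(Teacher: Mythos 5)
Your overall strategy mirrors the paper's: under a symmetry-respecting adversarial assignment, argue that the central robot of $C_1$, and then of $C_2$, must still be $r_c$, contradicting $\pi\in\Pi_2$. The gap is in your first step. From the symmetric assignment you may only conclude that the $n-1$ \emph{non-central} robots land on a set $S$ that is $k_0$-fold rotationally symmetric around $c$; the destination $q$ of $r_c$ itself is a single point constrained by nothing, and if $q\neq c$ (which is precisely the move a $2$-step algorithm would want, to break the symmetry) the full configuration $C_1=S\cup\{q\}$ is in general \emph{not} rotationally symmetric around $c$ and its centroid is not $c$. Hence the chain ``$C_1$ symmetric around $c$ $\Rightarrow$ $c_1=c$ $\Rightarrow$ some robot sits at $c$ $\Rightarrow$ that robot is $r_c$'' breaks at the first link, and with it the identity $r_c^{(1)}=r_c$ on which your second step (symmetry preservation into the transition $C_1\to C_2$) entirely relies.

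This failure case is geometrically realizable. Take $C_0$ with $r_c$ at the origin and four robots at $(\pm 1,0)$ and $(\pm 3,0)$, so $C_0\in{\cal C}_\odot$ with $k_0=2$. If the four outer robots stay put (a move consistent with every symmetry constraint you impose) and $r_c$ moves to $(5,0)$, then $C_1=\{(-3,0),(-1,0),(1,0),(3,0),(5,0)\}$ is again in ${\cal C}_\odot$: its centroid and the centre of $SEC(C_1)$ are both $(1,0)$, a unique robot sits there, and removing it leaves a set with $2$-fold rotational symmetry about $(1,0)$. So $c_1\neq c$ and the new central robot is a former non-central robot, even though no orbit of $C_0$ ever collapsed onto $c$ --- your collision argument never fires. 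The paper's proof dispatches this case with a different, view-based assertion (that a robot other than $r_c$ cannot acquire a unique view in one step because it had a twin in $C_0$); to genuinely close a case like the one above one must instead argue that the residual symmetry of such a $C_1$ about its new centre $c_1\neq c$ is incompatible with reassembling, in one synchronous step, a permutation of $C_0$, which is symmetric about $c$. As written, your proof does not cover this situation.
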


\begin{proof}
Being ${\cal A}$ a $2$-step algorithm we have $C_2=\pi(C_0)$ for some $ \pi \in \Pi(C_0)$, and being a solution for \moveall we have that the central robot, let it be $r_c$, in $C_0$ has to be different from the central robot in $C_2$. The proof is by contradiction. Let us assume $C_1 \in {\cal C}_{\odot}$, we will show that $r_c$ is also the central robot in $C_1$. 
Let us suppose the contrary, and let $r_x \neq r_c$ be the central robot in $C_1$. Note that, by definition of ${\cal C}_{\odot}$, robot $r_x$ is the only robot that has a view different from all the others in configuration $C_1$. This is equivalent to say that, in a configuration obtained by removing $r_c$ from $C_0$, robot $r_x$ may move to a position in such a way to make  its view unique and breaking the symmetry of the configuration. This directly  contradicts the fact that $C_0 \in {\cal C}_{\odot}$, in $C_0$ the only robot with a  unique view is $r_c$ and it is the only robot that can break the symmetry by moving. 
Therefore, the central robot in $C_1$ has to be $r_c$.
By using the same argument, we have that also the central robot of $C_2$ has to be $r_c$, which contradicts the correctness of ${\cal A}$.
\end{proof}

Based on the above result, we can show that is impossible to solve  \moveall from a configuration  $C_0 \in {\cal C}_{\odot}$,  even if the system has chirality.
The informal idea here is that the central robot $r_c$ in configuration $C_0$ needs to move away from the center to form the intermediate configuration $C_1$. However, in any $2$-step algorithm, $C_2$ must be a permutation of $C_0$, with a different robot $r'$ in the center. Now, following the same algorithm, robot $r'$ would move away from the center to form the next configuration $C_3$. By choosing the coordinate systems of robots $r_c$ and $r'$ in an appropriate way, the adversary can ensure that $C_3$ would not be a permutation of $C_1$, thus violating the conditions in Lemma~\ref{lemma:prem}.  The above reasoning is formalised in the following theorem:

\begin{theorem}\label{tmimp1} 
There  exist no $2$-step algorithm that solves \moveall from a configuration  $C_0 \in {\cal C}_{\odot}$,  even if the system has chirality.
\end{theorem}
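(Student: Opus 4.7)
The plan is to argue by contradiction, extending the symmetry reasoning from the proof of Theorem \ref{tm3} one round deeper and then colliding with Lemma \ref{lemma:prem}. Suppose $\cal A$ is a 2-step algorithm solving \moveall starting from some $C_0 \in {\cal C}_{\odot}$ with central robot $r_c$. Since the $n-1$ non-central robots of $C_0$ split into rotationally symmetric classes around $c$, the adversary first assigns $Z$ so that the members of each class receive locally congruent frames (rotations of a common base frame by the symmetry angle), leaving as two free parameters the common orientation $\alpha$ of the non-central frames and the orientation $\beta$ of $Z_{r_c}$.

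I would then pin down the position of the central robot after one round. Under this adversarial $Z$, all non-central moves in round 1 are rotationally symmetric around $c$, so the non-central positions of $C_1$ form a rotationally symmetric pattern around $c$. If $r_c$ were also to remain at $c$, then $C_1$ would again have $r_c$ as the unique non-symmetric robot and would belong to ${\cal C}_{\odot}$, contradicting Lemma \ref{lm:sequence}. Hence $r_c$ must move to some $q_c = q_c(\beta) \neq c$. Applying precisely the same reasoning to the 2-step segment starting at $C_2 = \pi(C_0)$, which again lies in ${\cal C}_{\odot}$ and whose central robot is some $r' := r_{\pi^{-1}(c)} \neq r_c$ drawn from one of the non-central symmetry classes, the robot $r'$ must leave $c$ during round 3 for some $q_{r'} = q_{r'}(\alpha) \neq c$. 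The key observation is that $q_c$ is determined purely by $\beta$ (via the snapshot $r_c$ sees in its local frame in $C_0$) while $q_{r'}$ is determined purely by $\alpha$ (via the snapshot $r'$ sees in its local frame in $C_2$, which uses the same global points as $C_0$).

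Finally I would invoke Lemma \ref{lemma:prem} at $j = 1$, which forces $C_3 = \pi_3(C_1)$ and hence $q_{r'} \in \{q_c\} \cup S(\alpha)$, where $S(\alpha)$ denotes the rotationally symmetric set of non-central positions of $C_1$ (these depend on $\alpha$ but not on $\beta$). Freezing $\alpha$ and letting $\beta$ vary keeps $q_{r'}$ and $S(\alpha)$ fixed while moving $q_c$, so for a generic $\beta$ one must have $q_{r'} \in S(\alpha)$. A complementary variation of $\alpha$ then moves $q_{r'}$ continuously in a way that cannot remain trapped in the finite set $S(\alpha)$ for all $\alpha$, producing the desired contradiction and proving the theorem.

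The step I expect to be most delicate is the final genericity argument: one must verify that the one-parameter family $\alpha \mapsto q_{r'}(\alpha)$ is genuinely non-degenerate and does not coincidentally lie in $S(\alpha)$ for every $\alpha$, since $S(\alpha)$ also depends on $\alpha$. The cleanest route I foresee is to compare two executions under adversarial choices $(\alpha, \beta)$ and $(\alpha, \beta')$ with $\beta \neq \beta'$: both produce the same $C_2$ and therefore the same $q_{r'}$ and the same $S(\alpha)$, while yielding two different values of $q_c$; simultaneously imposing the position-set equality from Lemma \ref{lemma:prem} on both runs forces $q_{r'}$ into a rigid configuration that a further perturbation of $\alpha$ breaks.
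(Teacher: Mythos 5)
Your overall skeleton is the same as the paper's: use Lemma \ref{lm:sequence} (via symmetric frame assignments on the non-central classes) to force the central robot $r_c$ off the center in round $1$, and then contradict Lemma \ref{lemma:prem} at $j=1$ by making the replacement central robot of $C_2$ move to a point not in $C_1$. The first half of your argument is essentially the paper's second case and is fine. The problems are all in the final step. First, the ``key observation'' that $q_{r'}$ is determined purely by $\alpha$, and hence that runs with $(\alpha,\beta)$ and $(\alpha,\beta')$ ``produce the same $C_2$'', does not hold: changing $\beta$ changes $q_c$, hence $C_1$, hence the entire subsequent (deterministic, oblivious) run, including \emph{which} robot of which class ends up at the center of $C_2$ and with which frame offset; through that route $q_{r'}$ does depend on $\beta$. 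Relatedly, ``applying precisely the same reasoning'' to the segment starting at $C_2$ is not legitimate: in $C_2$ the non-central robots include $r_c$ with its independent orientation $\beta$, so their frames are no longer rotationally symmetric and the $C_3\in{\cal C}_{\odot}$ dichotomy you used in round $1$ does not transfer. Second, and most seriously, the closing genericity step is invalid: a robot algorithm is an arbitrary deterministic function of the snapshot, so $\alpha\mapsto q_{r'}(\alpha)$ has no continuity whatsoever, and $S(\alpha)$ co-varies with $\alpha$, so ``cannot remain trapped in a finite set'' proves nothing. You even need an argument that varying $\beta$ moves $q_c$ at all --- a priori the algorithm could compute a frame-independent destination for the central view.

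The missing idea that repairs all of this (and is what the paper's choice of rotated frames is implicitly exploiting) is a \emph{discrete} symmetry fact, not a continuity one: the central robot's snapshot of $C_0$ is invariant under rotating its own frame by the symmetry angle $\theta=2\pi/k$ of $C_0\setminus\{r_c\}$, so its computed \emph{global} destination must rotate by $\theta$ under such a frame change; since by the first half of the argument that destination is not $c$, rotating the frame by $\theta$ provably changes the destination. The adversary therefore gives the non-central robots frames that are nontrivial $\theta$-rotations of $Z_{r_c}$; whichever of them becomes central in $C_2$ sees the identical local snapshot $r_c$ saw in $C_0$ but maps the identical local answer to a $\theta$-rotated global point, so it leaves the center toward a point distinct from $q_c$, and one then checks this point cannot be absorbed by the $\theta$-invariant set $S$. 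Without this observation your proof does not go through; with it, the two-parameter family and the genericity language become unnecessary.
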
 

\begin{proof}
Consider a configuration $C_0 \in {\cal C}_{\odot}$ where all robots but the central are vertices of a regular polygon, as a reference see Figure \ref{fig:ncmove}. Let us assume that there exists a    $2$-step algorithm ${\cal A}$  that solves \moveall starting from $C_0$.  
Let $r_{c}$ be the central robot in $C_0$, with coordinate system $Z_{r_c}$, and let $NC:\{r_1,\ldots,r_{n-1}\}$ be the set of the other robots. There are two possible behaviours of ${\cal A}$ according to which a robot moves when ${\cal A}$ starts from configuration $C_0$:
\begin{itemize}

\begin{figure}[H]
  \centering
    \includegraphics[width=0.4\textwidth]{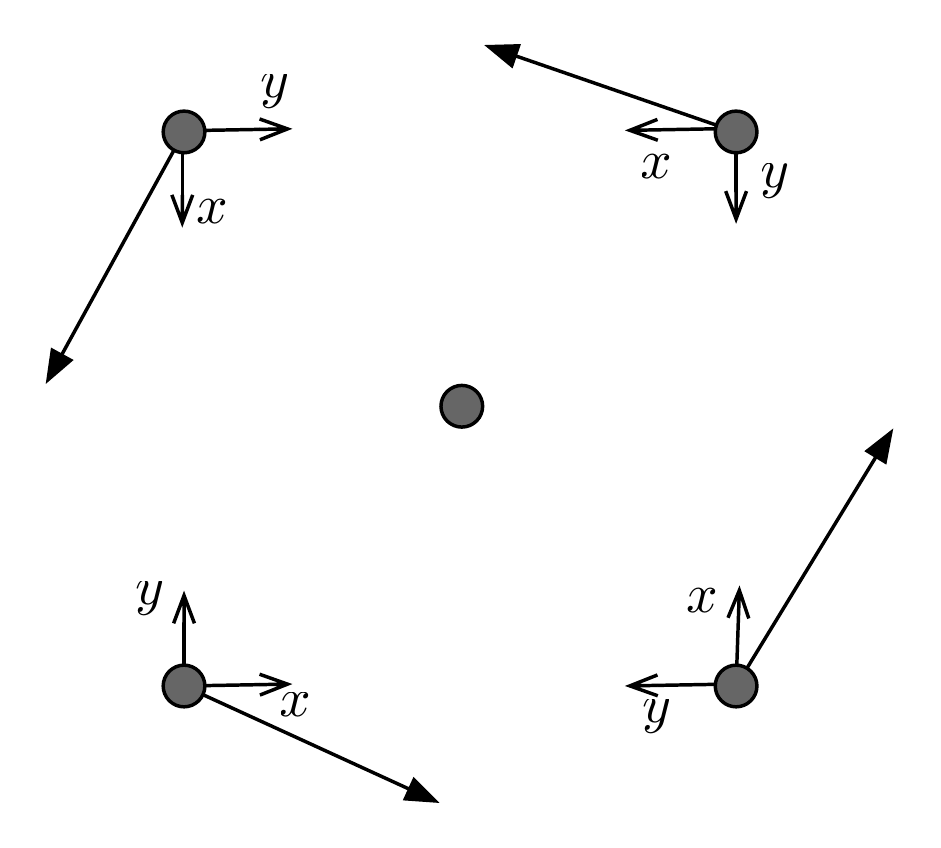}
      \caption{Case in which algorithm {\cal A} makes only (and all) robots in $NC$ move: the resulting configuration will be still in ${\cal C}_{\odot}$.} 
      \label{fig:ncmove}
\end{figure}

\item Robot $r_c$ moves, and possibly also robots in $NC$. Being ${\cal A}$ a $2$-step algorithm,
 starting from configuration $C_0 \in {\cal C}_{\odot}$, algorithm ${\cal A}$ generates a run $(C_0,C_1,C_2,C_3,\ldots)$ where $C_x \in \Pi(C_0)$ when $x$ is even, and $C_x \not\in {\cal C}_{\odot}$  when $x$ is odd (see Lemma \ref{lm:sequence}).  
Let us now assume that the coordinate system of any $r \in NC$ is obtained by rotating  the coordinate system of robot $r_c$ by 
$\frac{\pi}{2}$. 
Since  ${\cal A}$  must solve \moveall, we have that in $C_2$ the robot in the center of symmetry cannot be $r_c$, let it be $r$. Note that $r$ will move to a destination point that is different from the one $r_c$ moved starting from $C_0$, which means that  configuration $C_3 \not\in \Pi(C_1)$.
However, this leads to a contradiction, see Lemma \ref{lemma:prem}.

%: if ${\cal A}$ starts from configuration $C_1$ it generates the run $(C_1,C_2,C_3,\ldots)$(recall that robots are oblivious), but $C_3 \not\in \Pi(C_1)$, thus the run violates the specification of a correct $2$-step algorithm. 
\item Only robots in $NC:\{ r_1,\ldots,r_{n-1}\}$ move.  First note that  all robots in $NC$ have symmetric views. Therefore, it is not possible to make only one of them move, everyone has to move. With this is mind is easy to see that, for any possible movement option of robots in $NC$, there exists an initial arrangement of local coordinate systems 
such that the resulting configuration will be a rotation and a scaling of the initial configuration, see Figure \ref{fig:ncmove} for an example. Therefore, configuration $C_1$ is still in ${\cal C}_{\odot}$, by Lemma \ref{lm:sequence} this implies that ${\cal A}$ is not correct.
\end{itemize}
Since ${\cal A}$ is not correct in any of the previous cases, we have that the existence of ${\cal A}$ is impossible. 
\color{black}
\end{proof}

Interestingly, when we consider \visitall we can prove a stronger impossibility result that includes algorithms that could depend on the number of robots $n$. We can show that such algorithms cannot solve \visitall as
long as the number of steps is ${\cal O}(1)$.

\begin{theorem}\label{nstepvisitall}
There exists no $k$-step algorithm for \visitall, starting from any configuration $C_0 \in {\cal C}_{\odot}$, where $k={\cal O}(1)$. 
This result holds even if the system has chirality
\end{theorem}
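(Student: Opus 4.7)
The plan is to extend the adversarial coordinate-system argument of Theorem \ref{tmimp1} from $2$-step to arbitrary constant-step algorithms, exploiting the stronger structure of \visitall: since $\pi \in \Pi_n$ must be an $n$-cycle, the central robot at round $jk$ cycles through all $n$ robots as $j$ ranges over $0, 1, \ldots, n-1$. As a witness I would take the same configuration $C_0 \in {\cal C}_{\odot}$ used in Theorem \ref{tmimp1} (the regular $(n-1)$-gon plus the central robot $r_c$), so that the rotational symmetry group $G$ of $C_0 \setminus \{r_c\}$ has order $n-1$ and acts transitively on the non-central robots. The adversary assigns the coordinate systems of the non-central robots as consistent $G$-rotations of a common reference, and reserves each $r_{x_j}$'s coordinate system (for the role it will later play as the central robot) for the Theorem \ref{tmimp1}-style rotation-mismatch trick.

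By Lemma \ref{lemma:prem} the $n$ configurations $C_0, C_k, C_{2k}, \ldots, C_{(n-1)k}$ all lie in ${\cal C}_{\odot}$ and have $n$ distinct central robots $r_c = r_{x_0}, r_{x_1}, \ldots, r_{x_{n-1}}$. For each block $C_{jk} \to C_{(j+1)k}$ I would apply a case analysis in the spirit of Theorem \ref{tmimp1}: if the current central robot $r_{x_j}$ never leaves the center $o$ during the $k$ intermediate rounds, then the $G$-consistent coordinate assignment inductively forces the non-central robots to move $G$-symmetrically about $o$, so $r_{x_j}$ remains the unique central robot at round $(j+1)k$, contradicting that $\pi$ is fixed-point-free. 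Hence $r_{x_j}$ must leave $o$ at some intermediate round $t_j < k$, which breaks the $G$-symmetry and in principle lets the algorithm distinguish non-central robots via their relative position to the off-center $r_{x_j}$.

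The heart of the argument is to combine the $n$ distinct blocks with Lemma \ref{lemma:prem} applied at fine time resolution. Since each $r_{x_j}$'s coordinate system is a specific $G$-rotate of a common reference, and the algorithm is deterministic and oblivious, the trajectory within block $j$ is determined by $r_{x_j}$'s coordinate system together with the (fixed) non-central coord systems. Adapting the rotation-mismatch trick of Theorem \ref{tmimp1} across multiple blocks, I would show that for some $j_1 \neq j_2$ and some $t \in \{1, \ldots, k-1\}$, the intermediate configurations produced by blocks $j_1$ and $j_2$ violate the constraint $C_{j'+k} = \pi_{j'}(C_{j'})$ of Lemma \ref{lemma:prem} for $j' = j_1 k + t$. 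The main obstacle is that for $k > 2$, once $r_{x_j}$ moves off $o$, the $G$-symmetry is genuinely broken and the algorithm can legitimately distinguish non-central robots, so the $G$-symmetry argument does not propagate through the entire block on its own. The essential use of $k = {\cal O}(1)$ is that each block affords only constantly many rounds of such asymmetric propagation, whereas \visitall requires $n$ independent correct behaviors across the $n$ blocks; a pigeonhole argument for $n$ sufficiently large relative to $k$ should then produce indices $j_1, j_2$ whose coordinate assignments create an irreconcilable mismatch, completing the proof.
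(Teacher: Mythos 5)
There is a genuine gap at the heart of your argument, and you name it yourself: you observe that once the central robot $r_{x_j}$ leaves the center, the symmetry is broken and your $G$-symmetry/rotation-mismatch machinery from Theorem~\ref{tmimp1} no longer propagates through the remaining $k-1$ rounds of the block. You then assert that ``a pigeonhole argument\ldots should produce indices $j_1,j_2$ whose coordinate assignments create an irreconcilable mismatch,'' but you never say what objects the pigeonhole counts or why two blocks must actually conflict. As stated, the proof has a correct setup (the witness configuration, the fact that every robot must occupy the central position, and that each occupant must move off-center by Lemma~\ref{lm:sequence}) followed by a placeholder where the contradiction should be.

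The paper closes exactly this gap with a different and cleaner counting argument that avoids tracking anything inside a block. Because the $n-1$ non-central robots are given pairwise different coordinate systems, the $n-1$ off-center ``escape'' configurations $N_1,\ldots,N_{n-1}$ (one per robot that leaves the center) are pairwise distinct point sets. The key step you are missing is this: each $N_j$ is itself a legal initial configuration, so the $k$-step property applies to the run ${\cal R}_{{\cal A},N_j}$, and by obliviousness and determinism that run coincides with the suffix of ${\cal R}_{{\cal A},C_0}$ from $N_j$'s first appearance onward. Hence a permutation of $N_j$ must reappear in \emph{every} window of $k$ consecutive rounds thereafter. Once all $n-1$ of the $N_j$ have appeared, a single window of $k$ rounds must contain a permutation of each of them, which is impossible for $k<n-1$. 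This recurrence-plus-pigeonhole argument is what makes the constant-$k$ hypothesis bite; without it (or some substitute for your ``irreconcilable mismatch'' claim), your proposal does not go through.
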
 

\begin{proof}
Let us consider a starting configuration $C_0 \in {\cal C}_{\odot}$ such that  robot $r_c$ is central, and all the robots \{$r_1,r_2,\ldots,r_{n-1}$\} that are not central are positioned in the vertices of a regular $n-1$-gon. 
Moreover, let us assume that the coordinate systems of robots $r_1,\ldots,r_{n-1}$ are pairwise different. The proof is by contradiction.
Let ${\cal A}$ be a \visitall algorithm that uses $k$-step. 
The first observation  is that any algorithm solving \visitall has to ensure that, for any robot $r_j$, there exists a configuration $C_j \in \Pi(C_0)$ that will be reached by the algorithm and such that $r_j$ occupies the central position in $C_j$, let this position be $p_c$. 
The second observation is that from a configuration in ${\cal C}_{\odot}$, algorithm ${\cal A}$ has to reach a configuration $C \not\in {\cal C}_{\odot}$ where the robot in position $p_c$ moved (see Lemma \ref{lm:sequence}).% Let us assume the contrary, then we have that for any $C' \in {\cal C}_{c}$ there exists a particular configuration of the reference systems of robots $r_1,r_2,\ldots,r_{n-1}$ such that after their movement the resulting configuration is still in ${\cal C}_{c}$, this is not possible: in a central configuration by symmetry considerations is not possible for an unique robot $r_j$ with $j \in [1,n-1]$ to switch position with robot $r_n$.

The two above observations imply that the run of ${\cal A}$ that starts from configuration $C_0$  contains at least $n-1$ different configurations not in ${\cal C}_{\odot}$, one for each different robot in $\{r_1,r_2,\ldots,r_{n-1}\}$. This is immediate observing that from each $C_j$ the successive configuration in the run will be a specific $N_{j}$ different from the others $N_{i \neq j}$: starting from $C_j$ robot $r_{j}$ occupying $p_c$ has to move in a position different from the one where $r_{i \neq j}$ in configuration $C_{i}$ moved, this is ensured by the fact that the coordinate system of $r_j$ is different from the ones of other robots.

It is also easy to see that each $N_j$ could be a starting configuration for algorithm ${\cal A}$, thus   algorithm ${\cal A}$  starting from $N_j$ generates a run ${\cal R}_{{\cal A},N_j}:(X_0,X_1,X_2,\ldots)$ where permutations of $N_j$ appear at most every $k$-steps (this property is due to the fact that ${\cal A}$ is a $k$-step algorithm). Let call this fact (F1).  

Being the algorithm for oblivious robots, if $N_j$ appears in ${\cal R}_{{\cal A},C_0}$ at round $r$, then, from $r$ on, we have  ${\cal R}_{{\cal A},C_0}[r+t] = {\cal R}_{{\cal A},N_j}[t]$ for any $t>0$. Let this be fact (F2). 
Let $r^*$ the first round when, in  ${\cal R}_{{\cal A},C_0}$  every $N_j$, with $j \in [1,n-1]$, appeared. Let us consider the set $X$ of configurations that appears in ${\cal R}_{{\cal A},C_0}$ between round $r^*+1$ and $r^*+k$, clearly we have $|X| \leq k$. At the same time, all configurations $N_{*}$ appearing before $r^{*}-1$ have to appear at least once in the interval ${\cal R}_{{\cal A},C_0}[r^*+1,r^*+k]$, this comes directly from (F1) and (F2). By the pigeonhole principle, this is impossible: there are at least $n-1$ such configurations and $|X| \leq k < n-1$. 
This contradicts the existence of ${\cal A}$.
\end{proof}

\section{Oblivious Robots without Chirality}\label{nochir}

In this section we consider robots that do not share the same handedness. 
%%
%\begin{table}
%\hspace{-2cm}
%\begin{tabular}{ |l|l| }
%\hline
%&\multicolumn{1}{ |c| }{Number of Steps} \\
%\hline
% & $1$-step   \\ \hline
% 
% \multirow{2}{*}{\moveall} & \multirow{2}{*}{ solvable {\bf iff} $C_0 \not\in {\cal C}_{\odot} \land \text{{\bf NoAxisWithOne}($C_0$)}$  (Thm.~\ref{th:charmoveallnochi})}\\
%  & \\
%\hline
%
% \multirow{1}{*}{\visitall} & \multirow{2}{*}{ solvable {\bf iff} $n=2 \lor  (C_0 \not\in {\cal C}_{\odot}  \land  \text{{\bf NoAxisWithSomeone}($C_0$)} \land \text{{\bf NoTwoOrMoreAxes}($C_0$)} ) $ (Thm.~\ref{th:charvisitallnochi},\ref{th:charvisitallnochin})}\\
%   &  \\
%\hline
%
%\hline
%\end{tabular}
%\caption{ Results for oblivious robot without chirality. {\bf NoAxisWithOne}($I$): means that, in configuration $C_0$, it does not exist a symmetry axis containing only one robot. {\bf NoAxisWithSomeone}:  means that, in configuration $C_0$, it does not exist a symmetry axis containing some robot.  {\bf NoTwoOrMoreAxes}: means that in $C_0$ there are less than two symmetry axes.\label{t1:nochir}}
%\end{table}
%%
%We fully characterize the solvability of both \moveall and \visitall for $1$-step algorithms. %The results are summarised in Table~\ref{t1:nochir}.
%
Interestingly, the absence of chirality changes the condition for solvability of \moveall and \visitall, showing the difference between these two problems. This is due to the fact that in systems without chirality, the configuration of robots may have mirror symmetry, in addition to rotational symmetry as in the previous section.  

\subsection{\moveall}\label{nochir:moveall}
The following theorem  illustrates the configurations for which the   \moveall    problem is unsolvable. 

\begin{theorem}\label{tmimp2}
In systems without chirality  \moveall is unsolvable in $1$-step starting  from any configuration $C_0 \in {\cal C}_{\odot}$,  as well as from any   configuration that  has a symmetry axis containing exactly one robot. 
\end{theorem}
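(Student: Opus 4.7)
My plan is to split the proof into two cases corresponding to the two classes of configurations in the statement, and handle the first one by direct reduction to an earlier result.

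For $C_0 \in {\cal C}_{\odot}$ I would simply appeal to Theorem~\ref{tm3}: the impossibility proved there is explicitly stated to hold ``even when the robots have chirality,'' so the same argument (the adversary assigning symmetric coordinate systems to every robot except the central one, so that no non-central robot can be singled out to take the central position) applies verbatim in the chirality-free model. No new work is needed here.

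The substantive case is a configuration $C_0$ with a single axis of symmetry $\ell$ containing exactly one robot $r_a$, located at some point $q \in \ell$. The idea is that in the absence of chirality the adversary can assign coordinate systems to every pair of mirror-symmetric robots $\{r, r'\}$ so that the two systems are themselves reflections of one another across $\ell$. Then by determinism and obliviousness, the destinations computed by $r$ and $r'$ in any $1$-step algorithm are necessarily reflections of one another across $\ell$. In particular, if $r$ moves to a point $p$, then $r'$ moves to the mirror image of $p$, and the two destinations coincide if and only if $p \in \ell$.

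Now I argue that $C_1$ cannot be a fixed-point-free permutation of $C_0$. Since $C_1 \in \Pi(C_0)$, the point $q$ must be occupied in $C_1$. The only point of $C_0$ that lies on $\ell$ is $q$ itself (by the ``exactly one robot on the axis'' hypothesis), so for any non-central pair $\{r, r'\}$, if $r$ were sent to $q$, then $r'$ would be sent to the reflection of $q$, which is $q$ again; this forces two robots onto $q$, contradicting $C_1 \in \Pi(C_0)$. Hence no robot from a mirror-symmetric pair can be routed to $q$, and the only remaining candidate is $r_a$ itself. But $r_a$ moving to $q$ is a fixed point, violating the definition of \moveall with $\pi \in \Pi_2$. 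This contradiction proves the claim.

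The main obstacle, and the place I would be careful, is justifying that paired robots indeed produce mirror-image destinations: it relies on the adversary's freedom to assign chirality-reversed local frames, together with the fact that the views of $r$ and $r'$ in the absence of chirality are formally identical (both robots see the same unoriented snapshot relative to their own frames). Once this symmetry-enforcement argument is spelled out, the pigeonhole-style obstruction at point $q$ is immediate. Combining the two cases gives the theorem.
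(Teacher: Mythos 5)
Your proposal is correct and follows essentially the same route as the paper: the first case is discharged by Theorem~\ref{tm3}, and the axis case uses the same symmetry argument, namely that the mirror partner $r''$ of any robot $r'$ sent to the axis point must compute the reflected (hence identical) destination, producing a collision at $q$ and contradicting $C_1 \in \Pi(C_0)$. Your additional observation that the only remaining candidate for occupying $q$ is the axis robot itself, which would be a fixed point, is just an equivalent phrasing of the paper's contradiction.
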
 
\begin{proof}
Unsolvability from $C_0 \in {\cal C}_{\odot}$ follows from Theorem~\ref{tm3}.
Consider then an initial  configuration $C_0$  such that in $C_0$ there exists a symmetry axis containing only one robot $r$, and let ${\cal A}$ be a solution algorithm for \moveall for this scenario.

In order to satisfy the \moveall specification, there must be a configuration $C_1 \in {\cal R}_{{\cal A},C_0}$ following $C_0$, and in $C_1$ a robot $r' \neq r$ must be in the position of robot $r$ in $C_0$; notice that if is not the case then $C_1 \not\in \Pi(C_0)$ or $r$ did not move, in both cases \moveall  specification has been violated.
However, in $C_0$ there exists a robot $r''$, symmetric to $r'$ with respect to the axis containing robot $r$,  that has the same view of $r'$,  this comes directly from the definition of symmetry axis. Now, if in $C_0$ robot $r'$ decides as destination the position of $r$, also $r''$ does the same, this implies that $C_1 \not\in \Pi(C_0)$. See the example of Figure~\ref{fig:noalgsymax}. This contradiction disproves the correctness of ${\cal A}$.\end{proof}

\begin{figure}[H]
  \centering
    \includegraphics[width=0.3\textwidth]{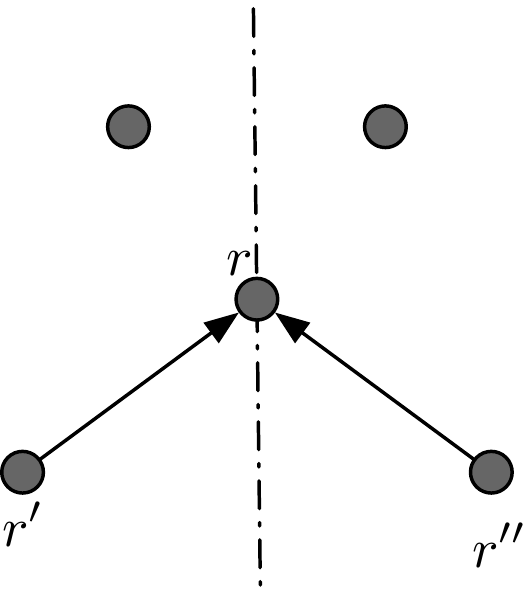}
      \caption{Initial configuration with an unique robot $r$ on the symmetry axis. If robot $r'$ moves in the position of $r$, then $r''$ does the same. }
      \label{fig:noalgsymax}
\end{figure}

We now consider the solutions to the \moveall problem for the feasible instances. 
%When the initial configuration is central symmetric (i.e., the configuration does not change if we rotate it around its center by an angle $\pi$), then the algorithm swaps the positions of robots that are symmetric with respect to the center. 
If the configuration has a central symmetry (i.e., a rotational symmetry with $\theta=\pi$), each robot can be paired to its counterpart on the opposite end of the center, and the paired robots can swap positions.  (see Figure~\ref{fig:centralsymmetric}).
 
 When the initial configuration has no rotational symmetry nor any symmetry axes, then the robots can agree on a common chirality and the algorithms from the previous section can be applied (see Figure~\ref{fig:chirale1}).

Thus the only remaining configurations are those with an axis of symmetry. For such configurations, it is possible to partition the robots in three disjoint subsets, and it make them move as follows: (see also Figure~\ref{fig:symm1})
\begin{itemize}
\item[(i)] For the robots located on a symmetry axis, there exists a unique cyclic order on these robots. Robots on the axis are permuted according to this ordering. 
\item[(ii)] The second subset contains robots that are closer to one symmetry axis compared to other axes. These robots swap positions pairwise, each robot switching with its symmetric robot w.r.t. the closest axis. 
\item[(iii)] The last subset consists of robots that are equidistant from two distinct symmetry axes. Also in this case robots switch positions pairwise, and each one switches position with its symmetric robot w.r.t. the centroid $c$ of configuration $C_0$.
\end{itemize}

For all the configurations excluded by Theorem~\ref{tmimp2},   \moveall can be solved using the above approach. This algorithm is formally presented in Algorithm~\ref{alg:moveallnochi}. 
%The formal proof is given in Theorem~\ref{tmchirPOSS}, while the pseudocode is in Algorithm~\ref{alg:moveallnochi}.

%-------------------------------------------------
\begin{algorithm}
\caption{\moveall  $1$-step Algorithm when $C_0 \not\in {\cal C}_{\odot}$ and $C_0$ does not have a symmetry axis containing only one robot. \label{alg:moveallnochi}}
\footnotesize
\begin{algorithmic}[1]
\If{$C$ is central symmetric}\label{alg:moveallnochi:cc}

\State set {\bf destination} as position $p$, where $p$ is the symmetric of my position with respect to the center of $C$. \label{alg:moveallnochi:cc1}
\ElsIf{$C$ does not have symmetry axes} \label{alg:moveallnochi:nosim}
\State Set your clockwise orientation using a deterministic algorithm with input $C$.
\State Execute Algorithm~\ref{visitciclyc} of Section~\ref{1stepchirality} using the {\sc Order} procedure of Algorithm~\ref{alg:visitallrho2} of Section~\ref{1stepchirality}.
\Else
\If{My position is on a symmetry axis $A$}  \label{alg:moveallnochi:simaxis}
\State  Deterministically order the positions on $A$, obtaining a cylic order $p_0,p_1,\ldots,p_{t-Compute1}$. 
\If{My position is $p_i$}
\State set {\bf destination} as  $p_{(i+1)\mod t}$.
\EndIf
\ElsIf{There is an unique axis of symmetry $A$ closest to my position}

\State set {\bf destination} as position $p$ symmetric to my position with respect to $A$.
\Else
\State set {\bf destination} as position $p$ symmetric to my position with respect to the center of $C$. \label{alg:moveallnochi:simaxisend}
\EndIf
\EndIf

\end{algorithmic}
\end{algorithm}
%-------------------------------------------------

\begin{theorem}\label{tmchirPOSS}
If $C_0 \not\in {\cal C}_{\odot}$ and $C_0$ does not have a symmetry axis containing exactly one robot, then \moveall is solvable in $1$-step even when the system does not have chirality. 
\end{theorem}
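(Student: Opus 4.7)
The plan is to verify that Algorithm~\ref{alg:moveallnochi} is a correct $1$-step \moveall solution by case analysis on its three branches. In each case the obligations are (a) every robot, without chirality, deterministically computes the same permutation $\pi$ of $C_0$, and (b) $\pi \in \Pi_2(C_0)$, i.e.\ $\pi$ has no fixed point. Together with Lemma~\ref{lemma:prem}, these two properties on the first round propagate the \moveall guarantee to the entire run.

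For the centrally symmetric branch, the exclusion $C_0 \notin {\cal C}_{\odot}$ rules out a robot at the centroid, so the antipodal involution through the center pairs every robot with a distinct robot in $C_0$; the pairing is canonical and fixed-point-free, and every robot computes the same swap. For the ``no reflection axes and not centrally symmetric'' branch, any nontrivial rotational symmetry of $C_0$ must be of odd order, for an even-order rotation subgroup would contain a rotation by $\pi$ and contradict the absence of central symmetry. Combined with the absence of mirror symmetries this makes $C_0$ geometrically chiral, so robots can agree on a common handedness deterministically from $C_0$ alone (for instance by orienting the canonical generator of the cyclic rotation subgroup; when $\rho(C_0) = 1$ the implicit leader suffices). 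With a common chirality in place, Theorem~\ref{lemma:l1} applies since $C_0 \notin {\cal C}_{\odot}$ and yields a \visitall solution, which implies \moveall.

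For the branch where $C_0$ has at least one reflection axis, the three classes (i)--(iii) of the algorithm depend only on the set of symmetry axes and the centroid of $SEC(C_0)$, both globally computable, so every robot classifies itself consistently. The hypothesis that no axis carries exactly one robot ensures that class (i) admits a nontrivial cyclic shift along the canonical axial ordering. For class (ii) the reflection across the uniquely nearest axis sends each robot to a distinct robot of $C_0$ since the axis is a symmetry and the robot lies off it. For class (iii), which is non-empty only when $C_0$ has multiple axes, I would use the dihedral structure $D_k$ of the symmetry group of $C_0$ (necessarily with $k$ odd in this branch, since even-order rotation would make $C_0$ centrally symmetric) to show that each $D_k$-orbit of a bisector point consists of $2k$ points forming $k$ antipodal pairs; hence central inversion restricts to a fixed-point-free involution on the class, with no class-(iii) robot at the centroid (such a robot would lie on every axis, hence in class (i), and would place $C_0$ in ${\cal C}_{\odot}$).

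The main obstacle is the well-definedness of the class-(iii) swap in the absence of global central symmetry of $C_0$: this requires the orbit-theoretic argument above, essentially a short computation on the angles $(2j+1)\pi/(2k) \bmod 2\pi$ for $j \in \{0,\dots,2k-1\}$ showing that a generic orbit on the bisectors is closed under antipodal pairing. Everything else is direct: the three classes act on pairwise disjoint position sets, so their union produces a single permutation, and since no robot is fixed by the action of its own class, the resulting $\pi$ lies in $\Pi_2(C_0)$, as required.
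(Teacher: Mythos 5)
Your proposal is correct and follows essentially the same route as the paper: a case analysis on the three branches of Algorithm~\ref{alg:moveallnochi}, checking in each that all robots deterministically compute a common fixed-point-free permutation of $C_0$. Your dihedral-orbit argument for class (iii) (with $k$ odd, the axis perpendicular to each bisector is itself a symmetry axis, so each orbit on the bisectors is closed under central inversion) is a more explicit justification of the step the paper states only tersely; the one small point you gloss over --- that each symmetry axis can be canonically oriented so that the class-(i) cyclic shift is well defined --- also follows from the absence of central symmetry in that branch, exactly as the paper notes.
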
 
\begin{proof}
The pseudocode is in Algorithm~\ref{alg:moveallnochi}. Let us consider an execution starting from a configuration $C_0$ that respects the assumptions of the theorem. 
\begin{itemize}
\item If in $C_0$ there is a central symmetry (see Figure~\ref{fig:centralsymmetric}) then Lines~\ref{alg:moveallnochi:cc}-\ref{alg:moveallnochi:cc1} are executed. By definition of central symmetric it is easy to see that the next configuration belongs to $\Pi(C_0)$ and that everyone has moved (e.g., see the arrows in Figure~\ref{fig:centralsymmetric}).

\begin{figure}
  \centering
    \includegraphics[width=0.6\textwidth]{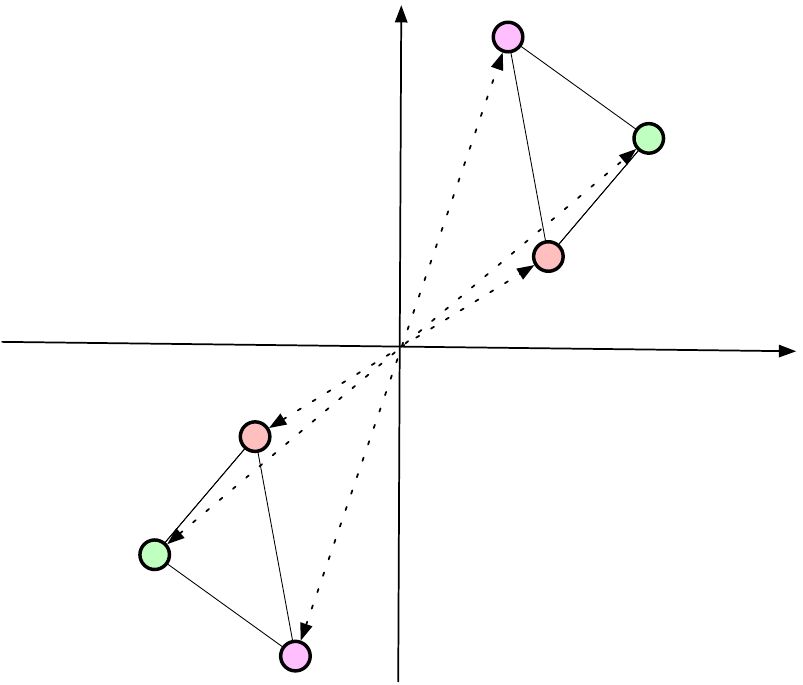}
      \caption{Central symmetric configuration that does not have a symmetry axis: each robot swaps position with the symmetric obtained by a rotation of $\pi$ radians with respect to the center.  Recall that a configuration $C$ has a central symmetry if once rotated around the centroid of $\pi$ radians we get a permutation of $C$.}
      \label{fig:centralsymmetric}
\end{figure}

\begin{figure}
  \centering
    \includegraphics[width=0.4\textwidth]{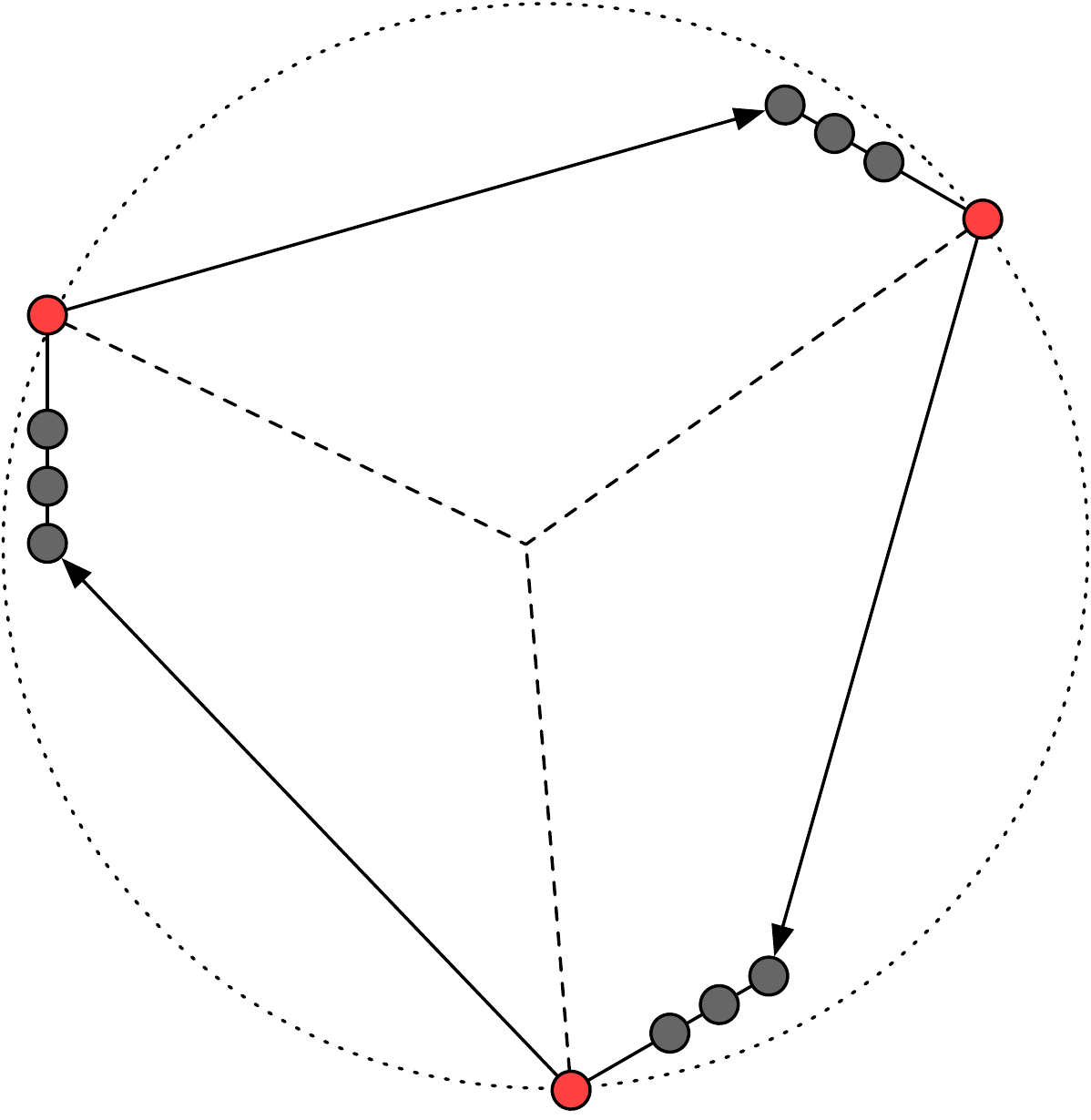}
      \caption{Configuration $C_0$ that has no central symmetry, no symmetry axis and where $\rho(C_0)=3$. Notice that robots can agree on a clockwise direction, see arrows. }
      \label{fig:chirale1}
\end{figure}

\item If $C_0$ has no central symmetry and there are no axes of symmetry (see an example in  Figure~\ref{fig:chirale1}), then the branch at line~\ref{alg:moveallnochi:nosim} is executed. 
If $\rho(C_0)=1$ there exists a total order among robots, thus it is possible to apply the  algorithms of Section~\ref{obv:chiral}. So let us examine the case when $\rho(C_0) \geq 2$.
The absence of a symmetry axis implies that robots can agree on a common notion of clockwise direction using configuration $C_0$.  Once they have a common clockwise notion they solve \moveall using the algorithms of Section~\ref{obv:chiral}. An example when $\rho(C_0) >1$ is shown in  Figure~\ref{fig:chirale1}.

\begin{figure}[H]
  \centering
    \includegraphics[width=0.5\textwidth]{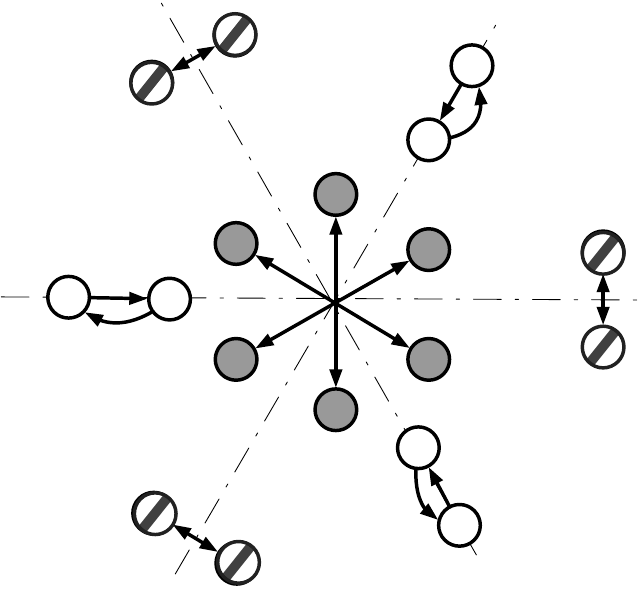}
      \caption{\small Configuration $C_0$ having 3 axes of symmetry. The arrows indicate three types of robot swaps: (1) Robots on the axis agree on a cyclic order (white robots); (2) Robots that are closer to one axis of symmetry swap positions w.r.t this axis (stripped robots); (3) Robots that are equidistant from two axes, swap position with symmetric robots w.r.t. the center (black robots). }
      \label{fig:symm1}
\end{figure}

\item If there is no central symmetry in  $C_0$ and each symmetry axis contains no robots or at least $2$ (see Figure~\ref{fig:symm1}), then we have 3 possible behaviours according to the position of the  robots.
\begin{itemize}

\item If a robot is on a symmetry axis $A$, it executes the branch at line~\ref{alg:moveallnochi:simaxis}. Robots on $A$ are able to agree on a cyclic order between them. Let us suppose the contrary, then there must exists a second symmetry axis $A'$ that is perpendicular to $A$, which contradicts the fact that $C_0$ has no central symmetry. Thus robots on $A$ can be cyclically ordered.  
It is immediate that if robots on $A$ permute according to this order, then the \moveall problem, restricted to the subset of robots on $A$, is correctly solved. 

\item If there exists a unique axis $A$ closest to a robot $r$, then the robot $r'$   symmetric to $r$ with respect to $A$ is unique and properly defined, and these two robots are able to swap position. Thus \moveall it is also solved correctly for this subset. 

\item If it does not exist a  unique axis $A$ closest to $r$ and $r$ is not on a symmetry axis, then we will show that exists a  unique  robot $r'$ symmetric to $r$ with respect to the center of   configuration $C_0$.
 Let $A$ and $A'$ be the two axes from which $r$ is equidistant. First, we observe that $r$ is on the bisector segment of the angle $\theta$ between two axes of symmetry. Let $B$ be this segment. Second, we observe that due to the fact that  $A$ and $A'$ are symmetry axes of $C_0$, we have that  $2\pi$ is divided by $\theta$, let us say $k$ times, and that there are $k$ rotations of $B$ around the center of $C_0$. This implies that there exists a  unique robot $r'$ (see Figure~\ref{fig:symm1} as an example where $\theta=\frac{\pi}{3}$). Thus $r$ and $r'$ can swap position safely.  Therefore, \moveall it is  solved correctly for this last subset. 
\end{itemize}
Since \moveall is correctly solved for all three cases, and each robot in $C_0$ belongs to one of the sets considered in the above cases, then \moveall is correctly solved.
%Therefore, in the successive configuration each robot takes the position of an another robot and all robots have distinct positions, this implies that the next configuration is in  $\Pi(C_0)$ and that everyone moved.
\end{itemize}
\end{proof}

\noindent To summarize, we have the following characterization for solvability of \moveall without chirality:

\begin{theorem}\label{th:charmoveallnochi} In systems without chirality,
\moveall is solvable in $1$-step if and only if $C_0 \not\in {\cal C}_{\odot}$ and $C_0$ does not have a symmetry axis containing exactly one robot.
\end{theorem}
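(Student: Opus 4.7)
The plan is to recognize that this theorem is a characterization result that follows immediately by combining the two directions already established in this subsection. There is essentially nothing new to prove; the work is to package the impossibility and possibility results together and verify that the dichotomy is tight.

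For the \emph{only if} direction, I would appeal directly to Theorem~\ref{tmimp2}, which already establishes that \moveall admits no $1$-step algorithm (even under the stronger assumption of chirality) whenever $C_0 \in {\cal C}_{\odot}$ or $C_0$ has a symmetry axis passing through exactly one robot. Since the current setting is strictly weaker (no chirality), the same impossibility continues to hold. A single sentence citing Theorem~\ref{tmimp2} suffices.

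For the \emph{if} direction, I would appeal to Theorem~\ref{tmchirPOSS}, which exhibits Algorithm~\ref{alg:moveallnochi} and proves that it correctly solves \moveall in $1$-step for every configuration $C_0$ that avoids both forbidden cases. Again a single sentence suffices.

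There is no genuine obstacle here, since both directions are already fully proved above. The only care needed is to confirm that the two theorems together cover all configurations: any $C_0$ either (a) lies in ${\cal C}_{\odot}$, (b) has a symmetry axis containing exactly one robot, or (c) satisfies neither condition. Cases (a) and (b) are ruled out by Theorem~\ref{tmimp2}, and case (c) is handled constructively by Theorem~\ref{tmchirPOSS}. This exhausts the possibilities and yields the claimed biconditional.
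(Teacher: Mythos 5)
Your proposal is correct and matches the paper exactly: the paper states this theorem as a direct summary of Theorem~\ref{tmimp2} (impossibility for $C_0 \in {\cal C}_{\odot}$ or for a symmetry axis with exactly one robot) and Theorem~\ref{tmchirPOSS} (the constructive direction via Algorithm~\ref{alg:moveallnochi}), with no further argument needed. One small inaccuracy worth fixing: your parenthetical that the impossibility holds ``even under the stronger assumption of chirality'' is wrong for the symmetry-axis case --- that part of Theorem~\ref{tmimp2} genuinely requires the absence of chirality, since the adversary exploits mirror-symmetric coordinate systems for $r'$ and $r''$ --- but this does not affect your proof, because Theorem~\ref{tmimp2} is already stated in the no-chirality setting and so applies directly without any monotonicity argument.
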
 

%****************************************
\subsection{\visitall} 
\label{nochir:visitall}
%****************************************
The \visitall problem differs from \moveall only when $n > 2$, so we will assume in this section that $n \geq 3$. 
We will show that \visitall is solvable without chirality if (i) $C_0 \not\in {\cal C}_{\odot}$ and (ii) $C_0$ does not have symmetry axes, 
or there is a unique axis of symmetry that does not intersect any point of $C_0$. 
The main idea of the algorithm is the following. 
When $C_0$ does not have a symmetry axis: then it is possible to agree on a common notion of clockwise direction. Once this is done  Algorithm \ref{alg:visitallrho2} can be used. 
So we consider the case when $C_0$ has a  unique axis of symmetry that does not intersect any point of $C_0$: we partition $C_0$ in two sets $C'$ and $C''$, containing robots from the two sides of the axis of symmetry. In each of these sets it  is possible to agree on a total order of the points (recall that the symmetry axis is unique). Let $order':[p'_0,p'_1,\ldots,p'_{\frac{n-1}{2}}]$ be the order on $C'$
and $order'':[p''_0,p''_1,\ldots,p''_{\frac{n-1}{2}}]$ be the analogous for $C''$. We obtain a cyclic order on $C_0$ by having element $p''_0$ following $p'_{\frac{n-1}{2}}$, and, in a symmetric way,
$p'_0$ following $p''_{\frac{n-1}{2}}$ (see Figure \ref{fig:axisnoint}).

\begin{figure}[H]
  \centering
    \includegraphics[width=0.4\textwidth]{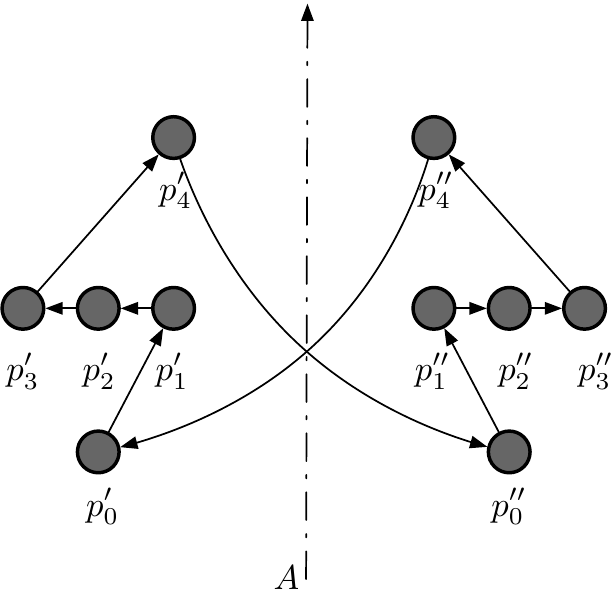}
      \caption{ Configuration $C_0$ with a unique symmetry axis $A$  and  no robots  intersecting $A$. 
      %In such a case \visitall is solvable even without chirality. 
      The arrow on axis $A$ indicate  the direction  on which robots agree. The arrows among configuration points indicate the cyclic order induced by Algorithm \ref{alg:visitallnochiralirty}.  }\label{fig:axisnoint}
\end{figure}

The corresponding ordering pseudocode is defined in Algorithm \ref{alg:visitallnochiralirty}. 
\begin{algorithm}[h]
\caption{{\sc Order} algorithm without Chirality. \label{alg:visitallnochiralirty}}
\footnotesize
\begin{algorithmic}[1]
\Procedure{Order}{Configuration $C$}
\If{$C$ does not have a symmetry axis}
\State Set your clockwise orientation using a deterministic algorithm with input $C$. \label{agreclockwise}
\State {\bf return} the output of  {\sc Order} procedure of Algorithm~\ref{alg:visitallrho2} of Section~\ref{1stepchirality}. \label{ref:ret}
\Else\Comment{In this case there is a unique symmetry axis that does not intersect any point of $C$.} \label{othercase}
\State Let $A$ be the unique symmetry axis of $C$. 
\State Let $C'$ and $C''$ be partitions of $C$ such that $C''$ is the symmetric of $C'$ w.r.t. $A$. 
\State Let $order'$ be a total order of robots in $C'$
\State Let $order''$ be a total order of robots in $C''$
\State Let $list$ be a list obtained by ordinately appending to $order'$ the elements of $order''$ 
\State {\bf return} list
\EndIf
\EndProcedure
\end{algorithmic}
\end{algorithm}

The correctness of Algorithm \ref{alg:visitallnochiralirty} is shown in the following theorem.

\begin{theorem}\label{th:charvisitallnochi}
When $n>2$ and robots do not have chirality, \visitall is solvable in $1$-step if the initial configuration $C_0 \not\in {\cal C}_{\odot}$ and one of the following holds:
\begin{enumerate}
\item There are no symmetry axes in $C_0$, or,
\item There exists a unique symmetry axis of $C_0$ and no point of $C_0$ intersects the axis.
\end{enumerate}
\end{theorem}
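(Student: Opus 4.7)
The plan is to prove that Algorithm~\ref{alg:visitallnochiralirty} produces a cyclic ordering of $C_0$ that is the same for every robot; then Algorithm~\ref{visitciclyc} using this ordering is a $1$-step solution of \visitall, since shifting by one along a cyclic order of $n$ points is a permutation in $\Pi_n$, so every robot visits every initial position within $n$ rounds. Thus the task reduces to verifying the correctness of the {\sc Order} procedure in each of the two cases.

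In the case where $C_0$ has no symmetry axis, I would reduce to Theorem~\ref{lemma:l1}. Since $C_0 \notin {\cal C}_{\odot}$ and $C_0$ differs from its mirror image (no reflective symmetry exists), the robots can deterministically agree on a common clockwise orientation purely from $C_0$: a standard construction is to pick the orientation that makes a canonical encoding of the snapshot (for example, the lexicographically smallest sequence of relative angular data read around $SEC(C_0)$) come out smaller. Once a common handedness is fixed, Algorithm~\ref{alg:visitallrho2} applies exactly as in Section~\ref{1stepchirality} and produces a common cyclic order under the hypothesis $C_0 \notin {\cal C}_{\odot}$.

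In the case where there is a unique symmetry axis $A$ containing no robot, I would split the proof into three steps. First, every robot identifies the same axis $A$ (by uniqueness) and the same unordered partition $\{C',C''\}$ of $C_0$ into mirror halves. Second, I would orient $A$ canonically: because $A$ is unique, no axis perpendicular to $A$ is a symmetry of $C_0$, so the two endpoints of $A\cap SEC(C_0)$ are distinguishable from $C_0$ alone (for instance, by lexicographically comparing the sorted lists of pairs \emph{(signed projection on $A$, distance from $A$)} read from each end). Third, with $A$ oriented, each half is totally ordered by sorting its points by projection on $A$ and breaking ties by distance from $A$; since mirror pairs share both coordinates, the two halves become index-wise mirror partners. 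Hence, regardless of which half a robot labels as $C'$, the concatenated list obtained by appending $order''$ to $order'$ yields the same cyclic sequence, merely shifted by $|C'|$ positions between the two possible labelings.

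The main obstacle will be the last step: proving the invariance of the induced cyclic order under the reflection that swaps $C'$ and $C''$. Without chirality, one must be careful to build the total order on each half only from mirror-invariant quantities relative to the canonically oriented $A$; any reliance on a local clockwise convention or on angular data around the centroid could produce reversed orderings on opposite sides and destroy the cyclic consistency. Once this mirror invariance is established, combining it with the first case gives a common cyclic ordering for all robots, and hence a valid $1$-step algorithm for \visitall, as claimed.
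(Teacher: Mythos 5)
Your proposal is correct and follows essentially the same route as the paper: in the axis-free case both arguments break chirality deterministically from the asymmetry of $C_0$ and then invoke Theorem~\ref{lemma:l1}, and in the unique-axis case both orient $A$ using the non-existence of a second (perpendicular) axis, sort each mirror half by projection on $A$ with ties broken by distance from $A$, and concatenate into a cyclic order. Your explicit check that the resulting cyclic order is invariant under swapping the labels of $C'$ and $C''$ (it is only a cyclic shift) is a point the paper handles implicitly via the symmetric merge rule, but it is the same construction.
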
 

\begin{proof}
 Algorithm \ref{alg:visitallnochiralirty} solves the problem. We show its correctness case by case:
\begin{itemize}
\item There are no symmetry axes in $C_0$: If $\rho(C_0)=1$, then it is obvious that the robots can agree on a common notion of clockwise direction, and in such a case the order returned at Line \ref{ref:ret} is correct by Lemma \ref{lemma:l1}. 
In case $\rho(C_0) >2$  it is possible to agree on a common notion of  clockwise direction (Line \ref{agreclockwise}) as follows. 
Since there exist an ordering among the classes constituting $C_0$, let $H$ be  the ``highest" class in this order  (see red points in Figure \ref{fig:chirale1} as reference). 
Take a line $L$ passing through the centroid $c$ of $C_0$ and a point $p$ in $H$. Since $L$ is not an axis of symmetry, we have that the half-planes defined by $L$ are different and not symmetric w.r.t. to $L$. Specifically, it is possible to order them using $p$ by scanning the plane starting from $p$ in the local clockwise and counter-clockwise directions using $c$ as center and stopping at the first asymmetry. 
 Therefore, the half-planes can be used to define a common clockwise direction, e.g., the one going from the lowest half-plane to the highest. Note that this procedure gives the same order for any choice of $p \in H$. 
 Once there is an agreement on clockwise direction the correctness derives from the one of Algorithm \ref{alg:visitallrho2} (see Lemma \ref{lemma:l1}).

\item There exists a unique symmetry axis $A$ of $C_0$ and no point of $C_0$ intersects $A$: this case starts at Line \ref{othercase}. Let $C'$ and $C''$ be partitions of $C_0$ such that $C''$ is the symmetric of $C'$ w.r.t. $A$. 
First of all, we show that robots can agree on a total order on points in $C'$: given the axis $A$ they can agree on an orientation of the axis (suppose the contrary, then there should exists a second axis intersecting $A$, that is excluded by hypothesis). 
We order the robots using the coordinates on $A$ from lowest to highest (robots with the same coordinate are ordered by their distance from the axis). Let this order be $order'$.
 The analogous can be done for $C''$, let this order be $order''$. An example is in Figure \ref{fig:axisnoint}. 
We merge these two orders by taking the first point in $order''$, let it be $p''_0$,  the last point in  $order'$, let it be $p'_{\frac{n-1}{2}}$, and by creating a common cyclic order on which $p''_0$  follows $p'_{\frac{n-1}{2}}$. Symmetrically,
the first point   $p'_0$   in $order'$  follows the last point $p''_{\frac{n-1}{2}}$ in $order''$. 

\end{itemize}
\end{proof}

Interestingly, without chirality, \visitall is not solvable if the assumptions of Th.~\ref{th:charvisitallnochi} do not hold: 
\begin{theorem}\label{th:charvisitallnochin}
When $n>2$ and there is no chirality, there exists no algorithm that solves  \visitall in $1$-step from an initial configuration $C_0$ if one of the following holds:
\begin{itemize}
\item $C_0 \in {\cal C}_{\odot}$
\item There exists a symmetry axis of $C_0$ intersecting a proper non-empty subset of $C_0$. 
\item There are at least two symmetry axes of $C_0$. 
\end{itemize}
\end{theorem}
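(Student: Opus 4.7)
The first bullet, $C_0\in{\cal C}_{\odot}$, is immediate: Theorem~\ref{tm3} already rules out \moveall in $1$-step from such configurations even with chirality, and since $\Pi_n\subset\Pi_2$, any \visitall algorithm would also be a \moveall algorithm; so \visitall is likewise unsolvable. The remaining two bullets will be handled through a common algebraic core, which I describe next.

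\textbf{Central commutation argument.} Suppose for contradiction that a $1$-step algorithm ${\cal A}$ solves \visitall from $C_0$. Then, for every adversarial assignment of coordinate systems, the first round produces a configuration $C_1=\pi(C_0)$ with $\pi\in\Pi_n$, i.e.\ $\pi$ is a single $n$-cycle. Now let $\tau$ be any symmetry of $C_0$ (a rotation or reflection preserving the point set), and let $\widehat\tau\in S_n$ be the induced permutation of the robot indices. The adversary can pick local coordinate systems $Z$ so that, for every pair of robots $r_i$ and $r_{\widehat\tau(i)}$, their coordinate frames are related by $\tau$; then the two robots have views that are copies of one another via $\tau$, and since ${\cal A}$ is deterministic their destinations are related by $\tau$ as well. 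Writing this out gives $\pi\circ\widehat\tau=\widehat\tau\circ\pi$, i.e.\ $\widehat\tau$ centralises $\pi$ in $S_n$. The centraliser of an $n$-cycle in $S_n$ is the cyclic group $\langle\pi\rangle$ of order $n$, which contains the identity, the unique involution $\pi^{n/2}$ when $n$ is even (a product of $n/2$ disjoint transpositions with \emph{no} fixed point), and no non-identity involution when $n$ is odd.

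\textbf{Second bullet.} Let $A$ be the symmetry axis intersecting $C_0$ in a proper non-empty subset, and $\tau$ the reflection across $A$. Realize $\tau$ adversarially. Then $\widehat\tau\in\langle\pi\rangle$ is a non-trivial involution with at least one fixed point (a robot on $A$) and at least one $2$-cycle (a pair of robots off $A$). But the only involution in $\langle\pi\rangle$ is $\pi^{n/2}$ (when $n$ is even), which has no fixed point; and when $n$ is odd, there is no non-identity involution at all. Either way we get a contradiction.

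\textbf{Third bullet.} Let $A_1,A_2$ be two distinct axes, $\sigma_1,\sigma_2$ the corresponding reflections, and $\rho=\sigma_1\sigma_2$ the rotation by twice the angle between the axes (of order $k\ge 2$). I would split into three subcases. If some robot lies at the centre of $SEC(C_0)$: realize $\rho$ adversarially (this is always possible because non-central orbits are free of any stabiliser constraint from $\rho$, and the central robot is a fixed point so its frame carries no constraint); then $\widehat\rho\in\langle\pi\rangle$ is a non-identity element, but no non-identity power of an $n$-cycle has a fixed point, contradicting that $\rho$ fixes the central robot. If no central robot exists but some robot lies on one of the axes: apply the second bullet's argument to that axis. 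If no robot lies on any axis and none at the centre: the dihedral group $\langle\sigma_1,\sigma_2\rangle$ acts with only free orbits, so the adversary can realize both reflections simultaneously. This places two distinct involutions $\widehat{\sigma_1},\widehat{\sigma_2}$ in $\langle\pi\rangle$ (distinct because $\sigma_1\ne\sigma_2$ as geometric maps, and two equal index permutations would force $\sigma_1\sigma_2$ to be the identity rotation); but $\langle\pi\rangle$ contains at most one involution, contradiction.

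\textbf{Main obstacle.} The algebraic skeleton is clean; the delicate point is justifying the adversarial realization claims, in particular that the adversary can simultaneously realize both reflections in the last subcase when some robots sit on a single axis (but not at the centre). For such a robot the local frame is only required to be mirror-symmetric with respect to its own axis, which still leaves one bit of freedom (the orientation along the axis), and the remaining orbit members inherit their frames from the dihedral action; one must verify that this assignment is consistent and yields matching views across every orbit. Once this geometric bookkeeping is carried out, the cyclic-centraliser argument closes the three subcases uniformly.
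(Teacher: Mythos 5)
Your overall strategy is sound and genuinely different from the paper's. The paper handles the second bullet by tracking a single mirror pair $r,r'$ until the round at which $r$ first lands on the axis and showing $r'$ must land on the same point (a collision), and it handles the third bullet by following the trajectory of one robot $r$ and its image $r_{AB}$ under the composition of the two reflections, showing they trace a closed path that omits $p_A$ and $p_B$, so by obliviousness $r$ never visits those points. Your centraliser argument replaces both of these trajectory arguments with one algebraic fact (the centraliser of an $n$-cycle in $S_n$ is $\langle\pi\rangle$, which is fixed-point-free away from the identity and has at most one involution), and it packages the second and third bullets uniformly. That is a real gain in clarity, and your first bullet matches the paper's (reduction to Theorem~\ref{tm3} via $\Pi_n\subset\Pi_2$).

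The one step that does not work as written is the justification of $\pi\circ\widehat\tau=\widehat\tau\circ\pi$ at the \emph{fixed points} of $\tau$. For a robot $r_a$ on the axis (or at the centre, for a rotation), no frame can be its own image under $\tau$: a reflection flips handedness and a non-trivial rotation moves every frame. So the adversary cannot make $r_a$'s view ``a copy of itself via $\tau$,'' and nothing forces $dest(a)=\tau(dest(a))$; indeed, in your rotation subcase full commutation at the central index would force $\pi(c)=c$, so the premise you are trying to realize directly is not realizable by the frame assignment alone. The relation you actually get from the adversarial frames is $p_{\pi(\widehat\tau(i))}=\tau(p_{\pi(i)})$ only for indices $i$ with $\widehat\tau(i)\neq i$. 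The repair is to iterate this relation along the $\pi$-orbit of one non-fixed robot $i_0$ (the configurations $\pi^t(C_0)$ all have the same point set, so the mirror/rotation pairing of positions and frames persists round after round), obtaining $\pi^t(\widehat\tau(i_0))=\widehat\tau(\pi^t(i_0))$ for all $t$; since $\pi$ is an $n$-cycle this propagates to full commutation on all indices, and your centraliser contradiction then closes each case. (Equivalently, for the second bullet one can stop earlier: at the first $t$ with $p_{\pi^t(i_0)}$ on the axis the relation forces $\widehat\tau(i_0)=i_0$, a contradiction — which is essentially the paper's collision argument.) Two smaller remarks: your first subcase of the third bullet (a robot at the centre of a configuration with two axes) is already covered by the first bullet, since such a configuration lies in ${\cal C}_{\odot}$; and your reduction ``some robot lies on an axis $\Rightarrow$ apply the second bullet'' silently assumes that axis also misses some robot, which fails only in the degenerate all-collinear situation — a case the paper's own reduction (``otherwise we are in the previous case'') glosses over in exactly the same way.
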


\begin{proof}

\begin{figure}
  \centering
    \includegraphics[width=0.6\textwidth]{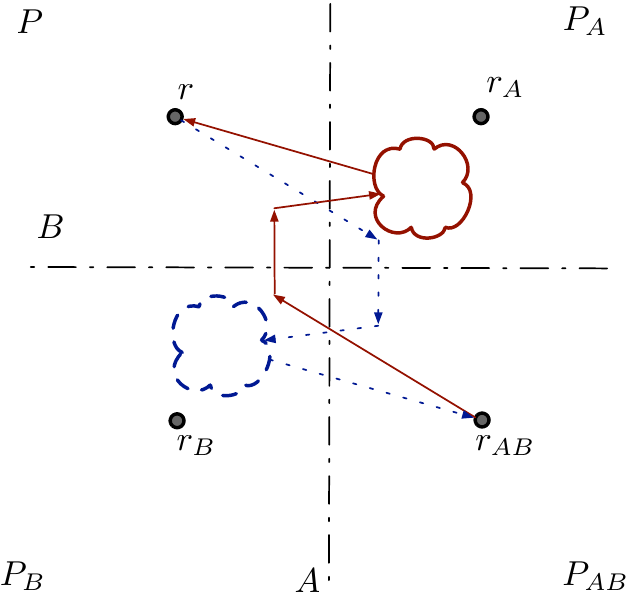}
      \caption{Two symmetry axes in $C_0$ and no chirality.
      }\label{twoaxesimpossibility}
\end{figure}
We prove the impossibility case by case:

\begin{itemize}
\item $C_0 \in {\cal C}_{\odot}$: this case derives directly from Theorem \ref{tm3}. 
\item  There exists a symmetry axis $A$ in $C_0$ and $A$ intersects a proper non-empty subset of $C_0$: to solve \visitall we must have that a robot $r$ outside of  axis $A$  eventually moves to a position on the axis, let such a position be $p$. However, being $A$ a symmetry axis, we must have a robot $r'$ symmetric to  $r$ w.r.t. $A$. Robot $r'$ also moves to position $p$. Since both robots move to point $p$ we reach a configuration that is not in $\Pi(C_0)$.

\item There are at least two symmetry axes in $C_0$: let $A$ and $B$ be these axes. Let $P, P_{A},P_{B},P_{AB}$ be the portions of the plane defined by the axes, see Figure \ref{twoaxesimpossibility}. We can assume that no point in $C_0$ intersects the axis (otherwise we are in the previous case). 
Let $r$ be a robot in $P$, and let $r_{A}, r_{B}, r_{AB}$ be the symmetric robots with respect to axis $A$, $B$ and both axes (see always Figure \ref{twoaxesimpossibility} for a reference), and let $p,p_{A},p_{B},p_{AB}$ be their respective positions in $C_0$.  Let us suppose, by contradiction, that there exists an algorithm ${\cal A}$ solving \visitall on $C_0$,
and let ${\cal R}_{{\cal A},C_0}:(C_0,C_1,C_2,\ldots)$ be the corresponding run. 
We examine the path of robot $r$ among the points in $C_0$ in the run ${\cal R}_{{\cal A},C_0}:(C_0,C_1,C_2,\ldots)$. Let us suppose, w.l.o.g., that in $C_1$ robot $r$ moves to a point in $P_{A}$ (see the blue dotted path in the Figure \ref{twoaxesimpossibility}). 
It is obvious that the local coordinate systems of the robots could be such  that: robot $r_{A}$ will do a symmetric move to a point in $P$, robot $r_{AB}$ a symmetric move to $P_{B}$ (see the red path in Figure \ref{twoaxesimpossibility}), and so on. 
Let $t$ be the first round at which $r$ moves to a location in $\{p_{A},p_{B},p_{AB}\}$. Since ${\cal A}$ solves \visitall such a  round $t$ must exist, and w.l.o.g let us suppose that $r$ visits first the location $p_{AB}$. By symmetry considerations we have
that, at round $t$, also robot $r_{AB}$ has to move to location $p$ (recall that $p$ is the location occupied by $r$ in $C_0$).
 Therefore, from round $0$ to $t$ robots $r$ and $r_{AB}$ have walked on a closed path, let it be $CP$, between some locations of $C_0$ that includes $p$ and $p_{AB}$ but which does not include  $p_A,p_B$ (this comes directly from the definition of $t$). The $CP$ path is represented in Figure \ref{twoaxesimpossibility} as the union of the red path, the blue paths and the clouds; clouds represent the move made by $r$ and $r_{AB}$ between rounds $3$ and $t$.  Since ${\cal A}$ is an algorithm for oblivious robot, this implies that robots $r$ and $r_{AB}$ will only visit  points in $CP$  also in future rounds. Therefore, ${\cal A}$ cannot be correct. 

\end{itemize}
\end{proof} 

To summarize, we have the following:

\begin{theorem}\label{th:charvisitallnochi-iff} In systems without chirality,
\visitall is solvable in $1$-step if and only if 
$C_0 \not\in {\cal C}_{\odot}$ and either  there are no symmetry axes in $C_0$, or 
there exists a unique symmetry axis  that does not intersect any  point  of $C_0$.
\end{theorem}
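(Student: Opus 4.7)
The plan is to prove this theorem by directly combining the two preceding theorems of this subsection, which together already give the matching sufficiency and necessity directions. Specifically, Theorem~\ref{th:charvisitallnochi} handles the ``if'' direction and Theorem~\ref{th:charvisitallnochin} handles the ``only if'' direction, so the whole argument is a short bookkeeping assembly rather than a new derivation.

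For the ``if'' direction I would simply invoke Theorem~\ref{th:charvisitallnochi}: whenever $C_0 \notin {\cal C}_{\odot}$ and either $C_0$ has no symmetry axis or its unique symmetry axis contains no point of $C_0$, Algorithm~\ref{alg:visitallnochiralirty} (driving Algorithm~\ref{visitciclyc}) has already been shown to be a correct $1$-step solution, so there is nothing further to do.

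For the ``only if'' direction I would take the contrapositive of Theorem~\ref{th:charvisitallnochin} and perform a short case split on the symmetry structure of $C_0$. Suppose the positive conditions fail. Then either (i) $C_0 \in {\cal C}_{\odot}$, matching the first impossibility bullet of Theorem~\ref{th:charvisitallnochin}; or (ii) $C_0$ has at least two symmetry axes, matching the third bullet; or (iii) $C_0$ has exactly one symmetry axis and that axis passes through at least one robot. In sub-case (iii) the axis intersects a non-empty subset of $C_0$, which is precisely the scenario of the second impossibility bullet. In each of the three sub-cases Theorem~\ref{th:charvisitallnochin} rules out any $1$-step \visitall algorithm from $C_0$, completing the equivalence.

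The step I expect to require the most care is the boundary check in sub-case (iii): the second impossibility bullet is stated for an axis meeting a \emph{proper} non-empty subset of $C_0$, whereas a priori the unique axis could contain every robot (the degenerate all-collinear situation). I would dispose of this by noting that the mirror-symmetry obstruction used in the proof of Theorem~\ref{th:charvisitallnochin} still applies: reflection across the supporting line is a non-trivial geometric symmetry of $C_0$, and the paired-move argument can be specialised to any destination along the axis so that two distinct robots are forced to coincide, violating the permutation requirement. Beyond this minor syntactic verification, the proof is simply a summary of the two previous theorems and contains no substantive new argument.
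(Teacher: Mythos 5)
Your overall assembly is exactly what the paper does: Theorem~\ref{th:charvisitallnochi-iff} is stated as a summary, and its proof is nothing more than the combination of Theorem~\ref{th:charvisitallnochi} (sufficiency) and Theorem~\ref{th:charvisitallnochin} (necessity) via the three-way case split you describe. You were also right to flag the boundary sub-case where the unique axis contains \emph{every} robot, since the second bullet of Theorem~\ref{th:charvisitallnochin} is stated only for an axis meeting a \emph{proper} non-empty subset of $C_0$.

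However, your resolution of that boundary case is wrong. If all robots lie on the axis $L$, the reflection across $L$ fixes each robot individually, so the paired-move argument has nothing to act on: there is no robot off the axis whose mirror twin would be forced to the same destination, and no two distinct robots are ever ``forced to coincide.'' Worse, no impossibility argument can succeed there: if all robots are collinear and $\rho(C_0)>1$, a perpendicular symmetry axis must exist and you are in the two-axes bullet anyway; so the genuinely uncovered situation has $\rho(C_0)=1$, the robots can be totally ordered along $L$, and every destination of a $1$-step algorithm is a point of $C_0\subset L$, so the left/right ambiguity of the reflection is irrelevant and the cyclic permutation along $L$ solves \visitall. The correct way to close the gap is definitional rather than by a new lower bound: in this paper a symmetry axis is the mirror line of a configuration with $\rho(C_0)>1$, and a reflection that fixes the configuration pointwise induces the identity permutation and does not count as one. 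Under that reading the all-collinear asymmetric configuration falls under ``no symmetry axes'' and is handled by the possibility direction, not the impossibility direction as you claim.
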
 

\subsection{ $2$-step Algorithms}\label{nochir:2step}
In this section we show that using $2$-step algorithms does not help in enlarging the class of solvable configurations.
 
\subsubsection{ \moveall}
Obviously, Theorem ~\ref{tmimp1} holds also when there is no chirality. One may wonder whether is possible to solve \moveall with a $2$-step algorithm when $C_0$ has a symmetry axis with a unique robot on it. However, the argument used in Theorem~\ref{tmimp1}  can be adapted also for this case showing that this is impossible.

\begin{theorem}\label{lemmax} When there exists an axis of symmetry  in $C_0$ containing a single robot and there is no chirality, then there exists no $2$-steps algorithm  that solves \moveall.
\end{theorem}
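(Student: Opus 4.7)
The plan is to adapt the adversarial coordinate-system argument of Theorem~\ref{tmimp1} to the mirror-symmetric setting. Let $r$ be the unique robot of $C_0$ lying on the symmetry axis $A$, at position $p_r$, and partition the remaining robots into mirror-paired couples $(r_i,r_i')$ whose positions are reflections of one another across $A$.

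First I would fix the following adversarial assignment of local coordinate systems: to $r$ I assign any coordinate system in which $A$ appears as a local axis of symmetry; to each pair $(r_i,r_i')$ I assign coordinate systems that are mirror images of one another across $A$. The key step is a structural invariant, proved by induction on $t$: under this adversarial choice and any deterministic $2$-step algorithm ${\cal A}$ operating without chirality, the configuration $C_t$ is mirror symmetric across $A$, each pair $(r_i,r_i')$ still occupies mirror-image positions, and $r$ still lies on $A$. Two facts drive the inductive step. First, if $C_t$ is mirror symmetric across $A$, then in their respective mirrored local frames $r_i$ and $r_i'$ read numerically identical views, so the deterministic algorithm returns the same local destination for both, which yields mirror-image global destinations in $C_{t+1}$ and preserves pair-wise symmetry. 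Second, $r$'s view of $C_t$ is invariant under reflection across the local image of $A$; in the absence of chirality the algorithm must be equivariant under reflection, and hence the image of any reflection-invariant view must itself be fixed by the reflection, i.e., lie on $A$. Therefore $r$'s destination in $C_{t+1}$ is on $A$ as well.

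Applying the invariant at $t=2$, I conclude that $C_2$ is mirror symmetric across $A$ and that $r$ is the unique on-axis robot of $C_2$: any other robot forced onto $A$ would coincide with its mirror partner, contradicting the fact that $C_2=\pi(C_0)$ consists of $n$ distinct positions. Since $\pi\in\Pi(C_0)$, the set of positions in $C_2$ matches that of $C_0$, and the only point of $C_0$ lying on $A$ is $p_r$. Thus $r$ occupies $p_r$ in $C_2$, making $r$ a fixed point of $\pi$, which contradicts $\pi\in\Pi_2$. This rules out any $2$-step algorithm solving \moveall from such a $C_0$.

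The main obstacle is the precise justification of the reflection-equivariance claim used in the inductive step: without chirality, a deterministic algorithm must map reflection-symmetric inputs to points on the corresponding axis of symmetry. The argument isolates the fact that the adversary may freely replace a robot's local frame by its mirror image without altering its abstract view, so the algorithm is effectively defined on equivalence classes of views under reflection, forcing the output on a reflection-invariant view to be fixed by that reflection.
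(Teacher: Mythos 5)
The central step of your induction --- that the unique on-axis robot $r$ must choose a destination on $A$ because its view is reflection-invariant --- does not hold, and this breaks the argument. An oblivious robot's algorithm is a function from its local snapshot to a local destination point; nothing forces a mirror-symmetric snapshot to be mapped to a point on the local image of the axis. What the symmetry of $r$'s view actually gives is weaker: the adversary could have assigned $r$ either of two mirror-related frames, producing two \emph{different runs} in which $r$ lands at mirror-image points; within any single run, $r$ lands at one definite point, which may well be off $A$. (Contrast this with the pairs $(r_i,r_i')$, where the two mirrored frames belong to two distinct robots present in the \emph{same} run, so both mirrored destinations are realized simultaneously --- that part of your argument is sound.) In fact a correct algorithm \emph{must} move $r$ off the axis in $C_1$: if $r$ stays on $A$, the remaining robots stay pairwise symmetric and no single robot can be elected to occupy $r$'s old position $p_r$ in $C_2$. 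So your invariant fails already at the first step, and the conclusion that $C_2$ keeps $r$ at $p_r$ (hence that $\pi$ fixes $r$) is not forced.

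The paper's proof exploits exactly the move you rule out: $r$ must leave $A$ for some off-axis point $p'$ in $C_1$; some other robot $r'$ replaces it at $p_r$ in $C_2$; and by Lemma~\ref{lemma:prem} the suffix of the run starting at $C_2$ must produce a permutation of $C_1$ at time $3$. But the adversary gives $r'$ a coordinate system different from $r$'s (e.g.\ its mirror image across $A$), so $r'$'s symmetry-breaking move from $C_2$ lands at the reflection of $p'$ rather than at $p'$, whence $C_3 \notin \Pi(C_1)$, a contradiction. If you want to keep your framework, the repair is a case split rather than an invariant: either $r$ never leaves $A$, in which case the pairwise symmetry persists and no fixed-point-free permutation of $C_0$ is ever reached, or $r$ leaves $A$ at some round, in which case you need the obliviousness-plus-determinism argument above (via Lemma~\ref{lemma:prem}) to finish.
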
 

\begin{proof} %(sketch) 
Consider a configuration $C_0$ and let $A$ be an axis with a single robot on it. Let us assume that  algorithm ${\cal A}$ is a correct $2$-step algorithm that solves \moveall starting from $C_0$, and let ${\cal R}_{{\cal A},C_0}:(C_0,C_1,C_2,C_3,\ldots)$.
Let $r$ be the robot in $A$ and let $p$ be its position.  In configuration $C_1$ robot $r$ has to move to break the symmetry in such a way that in configuration $C_2$ another robot could substitute $r$ on $A$. Suppose the contrary, if $r$ remains on the axis in $C_1$ the all the other robots
will be symmetric. Therefore, in $C_2$ is not possible for a single robot, different than $r$, to reach position $p$.  Therefore let $p'$ be the position of $r$ in $C_1$. 

Let $r'$ be the robot that exchanges position with $r$ in configuration $C_2$. It is clear that if $r'$ has a different local coordinate system than $r$, then in configuration $C_3$ it will move to a position $p''\neq p'$. Therefore, we have $C_3 \not\in \Pi(C_1)$, violating  the \moveall specification (see Lemma \ref{lemma:prem}).
\end{proof}
 
From the previous theorem and Theorem~\ref{tmimp1}, we have:

\begin{theorem}\label{newimp} 
 When the system has no chirality,  
 \moveall is not solvable in $2$-steps, from an initial configuration $C_0$, if  
 $C_0 \in {\cal C}_{\odot}$, or if there exists an axis of symmetry in $C_0$  containing a single robot. 
\end{theorem}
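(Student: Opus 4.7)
The plan is to assemble Theorem~\ref{newimp} as a direct combination of the two impossibility results already established: Theorem~\ref{tmimp1} and Theorem~\ref{lemmax}. The statement is really a summary of everything proven in Section~\ref{nochir:2step} for \moveall, so the proof should be short and simply split on the two hypotheses in the disjunction.

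First I would handle the case $C_0 \in {\cal C}_{\odot}$. Here I would appeal directly to Theorem~\ref{tmimp1}, which already rules out any $2$-step algorithm for \moveall starting from configurations in ${\cal C}_{\odot}$ even under the stronger assumption that the robots share chirality. Since granting chirality to the robots can only enlarge the class of solvable instances (any algorithm that works without chirality works with chirality), impossibility in the chiral setting trivially propagates to the non-chiral setting. I would state this propagation explicitly in one sentence so that the reader does not need to reconstruct the argument.

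Next I would handle the case where $C_0$ has an axis of symmetry containing exactly one robot. Here I would apply Theorem~\ref{lemmax} verbatim: the argument in that theorem shows that the unique robot $r$ on the axis must move in the intermediate configuration $C_1$ to break the symmetry, and then the replacement robot $r'$ that occupies $r$'s former position in $C_2$ cannot, because it has a potentially different local coordinate system imposed by the adversary, reproduce $r$'s destination from $C_1$; this forces $C_3 \notin \Pi(C_1)$, contradicting Lemma~\ref{lemma:prem}.

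Since both disjuncts of the hypothesis have been ruled out and the two cases together cover exactly the statement of the theorem, the conclusion follows. There is essentially no obstacle: the only point worth flagging is that one must be careful to note that Theorem~\ref{tmimp1} was stated with chirality but still yields impossibility without it, rather than the other way around. I would make that observation explicit in one line to avoid any confusion and keep the proof to just a few sentences.
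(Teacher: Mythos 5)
Your proposal is correct and matches the paper exactly: Theorem~\ref{newimp} is stated in the paper as an immediate corollary of Theorem~\ref{tmimp1} (whose impossibility under chirality transfers to the weaker non-chiral setting, as you correctly note) together with Theorem~\ref{lemmax}. Nothing is missing.
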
 

\subsubsection{ \visitall}

%Also in the case of  \visitall, allowing to reach the goal in 2 steps does not enlarge the  solvability class. In fact, we have the following:

\begin{theorem}\label{th:2charvisitallnochi}
When $n>2$ and there is no chirality, \visitall is not solvable in $2$-steps, from an initial configuration $C_0$, if one of the following holds:
\begin{itemize}
\item $C_0 \in {\cal C}_{\odot}$
\item There exists a symmetry axis $A$ of $C_0$ intersecting a proper non-empty subset of $C_0$. 
\item There are at least two symmetry axes of $C_0$. 
\end{itemize}
\end{theorem}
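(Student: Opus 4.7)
My plan is to prove the three cases separately, extending the $1$-step impossibilities of Theorem \ref{th:charvisitallnochin} via Lemma \ref{lemma:prem}, which forces every $C_{2t}$ to be a permutation of $C_0$. The core technical device is that if the adversary assigns each pair of mirror-symmetric robots mutually mirror-symmetric local coordinate systems, then by determinism and synchronicity these robots produce mirror-symmetric moves in \emph{every} round, so the relevant axial symmetry is preserved on the whole run $(C_0,C_1,C_2,\ldots)$, not just on the observation subsequence $(C_0,C_2,C_4,\ldots)$.

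The first case, $C_0\in{\cal C}_{\odot}$, is then immediate from Theorem \ref{nstepvisitall} instantiated at $k=2$, which rules out any $k$-step \visitall algorithm starting in ${\cal C}_{\odot}$ even when the system has chirality.

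For the second case, fix a symmetry axis $A$ of $C_0$ intersecting $C_0$ in a proper non-empty subset, pick a robot $r$ off $A$ with mirror partner $r'$, and assign $r,r'$ mirror-symmetric coordinate systems (and likewise for every other off-axis pair). By the preservation principle above, at every round $t$ the positions of $r$ and $r'$ are mirror images across $A$. The \visitall specification forces $r$ to sit at some $p\in A\cap C_0$ at some even round $2t$, but then $r'$ must sit at $\mathrm{mirror}_A(p)=p$ as well, contradicting $C_{2t}\in\Pi(C_0)$.

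For the third case, fix two symmetry axes $A,B$ of $C_0$ and assign coordinate systems that are simultaneously mirror-symmetric across both; the four-element orbit $\{r,r_A,r_B,r_{AB}\}$ under the reflection group makes this assignment consistent. Every round's configuration is then doubly symmetric. Take the first even round $2t$ at which $r$ lands on a position in $\{p_A,p_B,p_{AB}\}$; WLOG it is $p_{AB}$, and double symmetry forces $r_{AB}$ to sit at $p$ in $C_{2t}$. Since $C_{2t}\in\Pi(C_0)$ is obtained from $C_0$ by swapping $(r,r_{AB})$ and $(r_A,r_B)$, obliviousness combined with Lemma \ref{lemma:prem} makes the suffix of the run from round $2t$ coincide with the original run with the roles of $r$ and $r_{AB}$ exchanged, so $r$'s future positions stay confined to the closed union of the trajectories of $r$ and $r_{AB}$ during $[0,2t]$; by minimality of $t$ this union misses $p_A$ and $p_B$, contradicting \visitall. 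The main obstacle I anticipate is precisely verifying that the adversary's symmetric assignment of local coordinate systems is preserved across the intermediate (odd) rounds as well: this reduces to a routine induction on the round index once one makes the joint evolution of positions and coordinate systems explicit.
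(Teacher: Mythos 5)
Your first and third cases are essentially sound: case 1 via Theorem~\ref{nstepvisitall} with $k=2$ is a legitimate (even slightly more direct) route than the paper's appeal to Theorem~\ref{tmimp1}, and in case 3, once all orbits under the reflection group have size four (i.e., no robot lies on an axis, the situation otherwise being absorbed by case 2), the adversary's reflection-compatible assignment of coordinate systems really is preserved round by round, so the closed-path argument of Theorem~\ref{th:charvisitallnochin} transfers; this is exactly what the paper does.

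The genuine gap is in case 2, and it sits precisely at the step you flagged as ``routine induction.'' Your preservation principle requires \emph{every} robot's coordinate system to be compatible with the reflection across $A$, but a robot located \emph{on} $A$ is its own mirror partner, and no planar Cartesian frame is fixed by a reflection (reflection flips handedness). So the on-axis robot always perceives the two half-planes asymmetrically and can move off $A$ in the intermediate round $C_0 \to C_1$, destroying the mirror symmetry of $C_1$. Since case 2 assumes $A \cap C_0$ is non-empty, this is unavoidable; indeed, a candidate $2$-step algorithm would deliberately exploit exactly this symmetry break, after which $r$ and $r'$ no longer have mirror views in $C_1$ and nothing forces them to make mirror moves into $C_2$. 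Your conclusion ``$r$ and $r'$ are mirror images at every round, hence they collide on $A$ at some even round'' therefore does not follow. The paper's proof of this case is structurally different: it observes that at each odd round the symmetry must be broken by moving the robot occupying a fixed on-axis position $p$ (fixed because the decision is always taken on a permutation of $C_0$); \visitall forces two robots $r_x,r_y$ with opposite-handed local frames to each occupy $p$ at some even rounds $j_x,j_y$; seeing identical snapshots, they move to mirror-opposite off-axis points, so $C_{j_x+1}\not\in\Pi(C_{j_y+1})$, contradicting the consequence of Lemma~\ref{lemma:prem} that all odd-indexed configurations are permutations of one another (compare also Theorem~\ref{lemmax}). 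To repair your proof you would need to replace your case-2 argument with something of this flavor rather than patch the induction.
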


\begin{proof} %(Sketch)  
We prove the theorem case by case:
\begin{enumerate}
\item $C_0 \in {\cal C}_{\odot}$: this case derives directly from Theorem  \ref{tmimp1}. 
\item  There exists a symmetry axis in $C_0$ and it intersects a proper non-empty subset of $C_0$.\\
Let us assume that algorithm ${\cal A}$ is a correct $2$-step algorithm that solves \visitall starting from $C_0$, and let ${\cal R}_{{\cal A},C_0}:(C_0,C_1,C_2,C_3,\ldots)$. It is clear that, to break the symmetry, in each configuration $C_j$, with $j$ odd, the symmetry has to be broken by moving one of the robots located on the symmetry axis. Let  $p$ be the position of this robot in $C_{j-1}$, note that such a position is the always the same: the decision on which robot has to move is taken always on configurations that are permutations of $C_0$, thus it will   always move a robot to the same specific position. Let $r_x$ and $r_y$ be two robots with different local references systems. Since ${\cal A}$ solves \visitall there must exist a configuration $C_{j_x}$ where robot $r_x$ is in position $p$; analogously, there must exist a $C_{j_y}$ where robot $r_y$ is in position $p$.  
In configuration $C_{j_x+1}$ robot $r_x$ has to move outside of the axis. The same happens to $r_y$ in configuration $C_{j_y+1}$. However,  the coordinate systems could be such that $ C_{j_x+1} \not\in \Pi(C_{j_y+1})$, and by  Lemma \ref{lemma:prem}, this violate the specification of \visitall: $r_x$ and $r_y$ decides where to move by looking at exactly the same snapshot; therefore, they will move on opposite locations of the axis if their local coordinate systems have opposite chirality. 

\item There are at least two symmetry axes of $C_0$:  first of all we can assume that it does not exist an axis of $C_0$ containing robots, otherwise we boil down to previous case.  For any possible movements of robots, they have to keep the same symmetry. This implies that the argument of Thm. \ref{th:charvisitallnochin} still applies. 

\end{enumerate}
\end{proof}

\section{Oblivious Robots with Visible Coordinate Systems and Chirality}

In this section, we assume that each robot can see the coordinate system of all robots and the system has chirality. 
As we have seen in Section \ref{obv:chiral}, with chirality, the only configurations in which \visitall cannot be solved are the ones in ${\cal C}_{\odot}$. 
We now present a {\sc Voting} algorithm that solves \visitall also starting from  these configurations, provided that robots have this extra knowledge of the coordinate systems of other robots.
The algorithm (see Algorithm \ref{alg:moveallvisaxis}) uses Procedure {\sc innerPolygon}, which takes a configuration $C$ and   returns only the points on the smallest non degenerate circle having the same center as $SEC(C)$ and passing through at least one point of $C$ (e.g. see the white points in Figure \ref{f:vote1}).

When $C_0 \not\in {\cal C}_{\odot}$, the algorithm uses the {\sc Order} procedure from Section \ref{obv:chiral}. In case the initial configuration is in  ${\cal C}_{\odot}$,
the algorithm implements a voting procedure to elect a unique vertex of the innermost non-degenerate polygon $P$ computed by Procedure {\sc innerPolygon}.
(see, an example in Figure \ref{f:special3}).
The vote of a robot $r$ is computed by translating its coordinate system to the center of $SEC(C_0)$. The vote of $r$ will be given to the point of $P$ that forms the smallest counter-clockwise angle with the $x$-axis of the translated system. Since the number of robots is co-prime to the size of $P$, a unique vertex (robot) can be elected and the elected point is used to break the symmetry and compute a total order among the robots. As before, the robots use this total order to move cyclically solving \visitall.

\begin{algorithm}
\caption{ {\sc Order} Algorithm when robots have visible coordinate systems. \label{alg:moveallvisaxis}}

\begin{algorithmic} 

\Procedure{getVote}{Polygon $P$, robot $r$}
\State $o=getCenter(P)$
\State Le robot $p_v$ in $P$ be the robot that forms the smallest clockwise angle with the $x$-axis of the coordinate system $Z_r$ of robot $r$, when $Z_r$ is translated in $o$. 
\State {\bf return} $p_v$
\EndProcedure

\Procedure{Voting}{Configuration $C$}
\State $P=innerPolygon(C)$
\State $V$=vector of size $|P|$ with all entries equal to $0$.
\ForAll{$r \in C$}
\State $r_v=getVote(P,r)$
\State $V[v]=V[v]+1$
\EndFor
\State $p_l=$ elect one robot in $P$ using the votes in $V$.
\State {\bf return} $p_l$
\EndProcedure

\Procedure{Order}{Configuration $C$}
\If{$C \in {\cal C}_{\odot}$}
\State $p_l=Voting(C)$
\State Compute a cyclic order on positions in $C$ using the leader robot $p_l$.
\Else
\State Compute an order using {\sc Order}($C$) of Algorithm \ref{alg:visitallrho2} in Section \ref{1stepchirality}.
\EndIf
\State return the cyclic order computed
%\Sate Execute Algorithm \ref{visitciclyc} using the computed order.
\EndProcedure
\end{algorithmic}
\end{algorithm}

\begin{figure}
\begin{center}

\begin{subfigure}{0.45\textwidth}
  %\subfloat[Initial configuration]{
%\fbox{
  \includegraphics[width=0.7\textwidth]{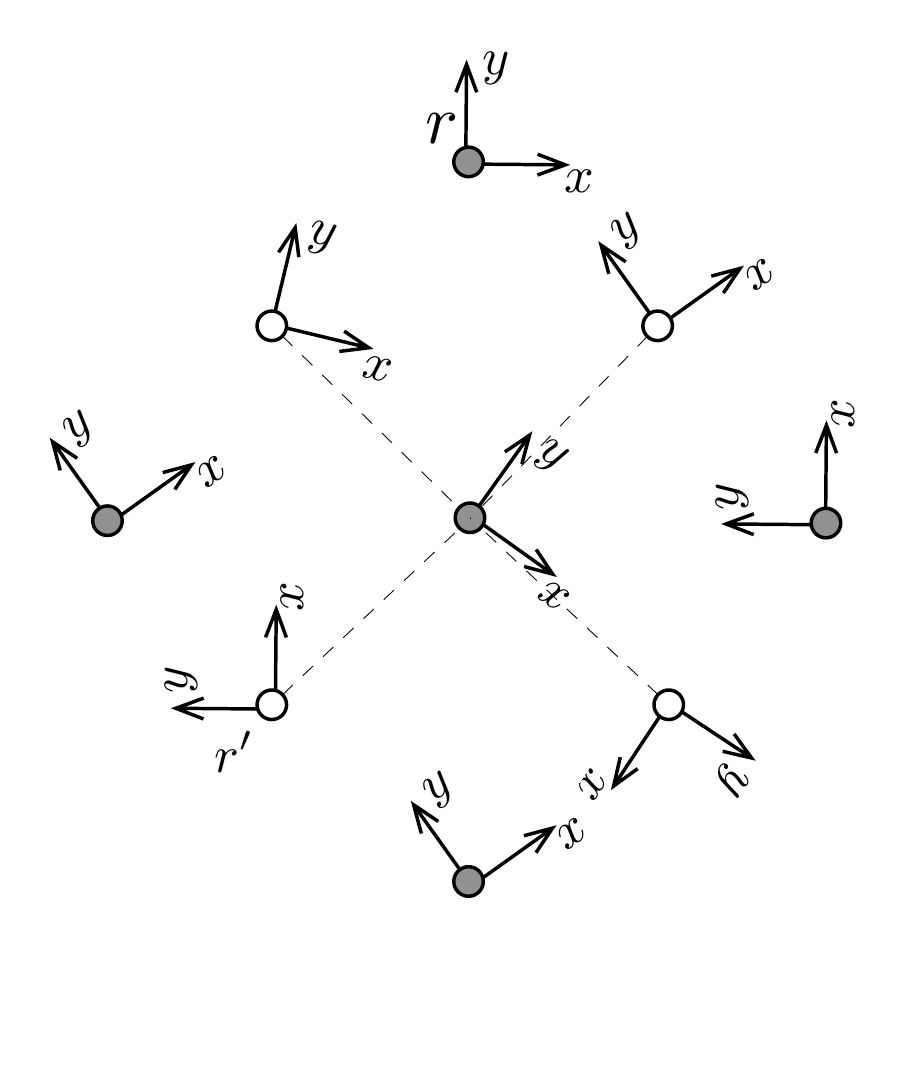}
  \caption{Initial configuration. The robots on the vertices of the innermost polygon $P$ are the white ones.}
  \label{f:vote1}
%  }
\end{subfigure}
$\,\,$
\begin{subfigure}{0.45\textwidth}
  
  %\subfloat[Robot $r$ computes the vote of $r'$. It first translate the coordinate system of $r'$ in the central robot, then it assigns the vote to robot $a$.]{
  \includegraphics[width=0.7\textwidth]{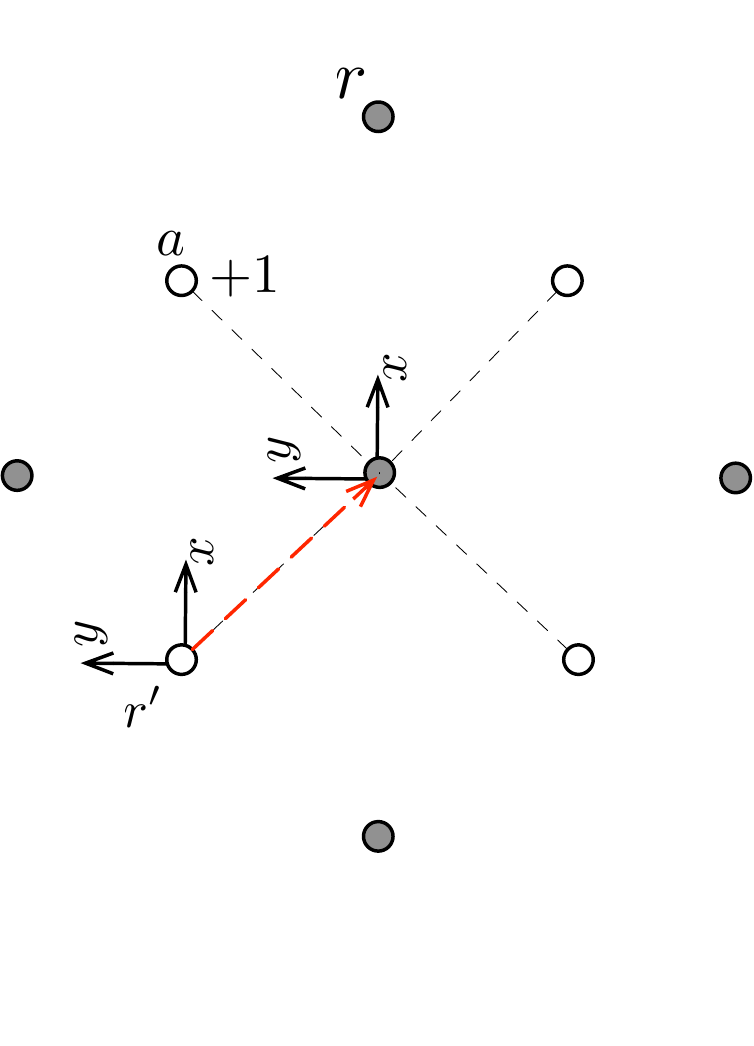}
  \label{f:vote2}
  \caption{Robot $r$ computes the vote of $r'$: it first translate the coordinate system of $r'$ in the central robot, then it assigns the vote to vertex $a$ in $P$. This vertices is voted since is the one, among all the other vertices of $P$, that forms the smallest counter-clockwise angle with the axis $x$ of $r'$.}
  %}
 
    \end{subfigure}%
    \\
\begin{subfigure}{0.3\textwidth}
 % \subfloat[Votes distribution and induced cyclic order]{
 \includegraphics[width=\textwidth]{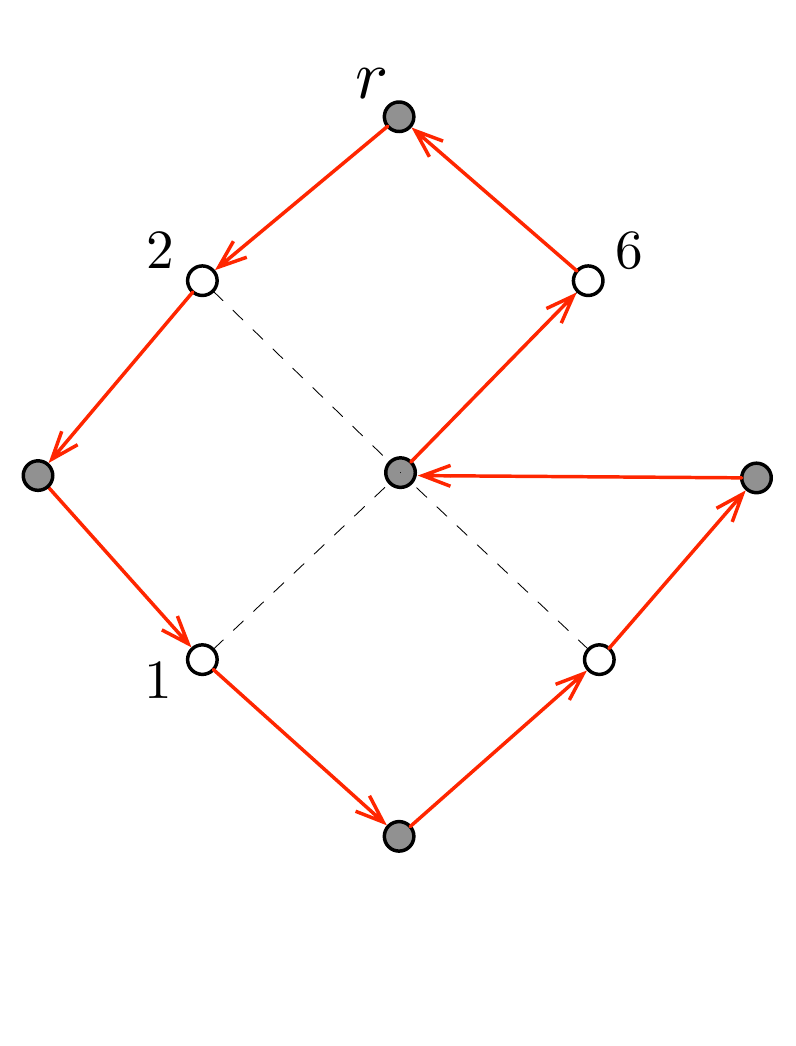}
  \label{f:vote3}
  \caption{Votes distribution and induced cyclic order.}
  %}
  \end{subfigure}%
\end{center}
\caption{Robots with visible coordinate system, voting procedure.}
\label{f:vote}
\end{figure}

\begin{theorem}\label{th:visaxis}
If each robot can see the axes of the others and there is chirality, then there exists a $1$-step algorithm solving \visitall for any initial configuration $C_0$.
\end{theorem}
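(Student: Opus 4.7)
The plan is to verify that Algorithm~\ref{alg:moveallvisaxis} is correct, i.e.\ that the cyclic order it installs is common to all robots. The non-trivial dichotomy is whether $C_0 \in {\cal C}_{\odot}$. When $C_0 \notin {\cal C}_{\odot}$, Procedure \textsc{Order} delegates to the procedure from Algorithm~\ref{alg:visitallrho2}, so the statement follows immediately from Theorem~\ref{lemma:l1}; no additional argument is needed in this branch.

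The substantial case is $C_0 \in {\cal C}_{\odot}$. First I would record the arithmetic input to the argument. Since $\rho(C_0\setminus\{r_c\})=k>1$ and $r_c$ is the unique robot at the center of $SEC(C_0)$, every non-central robot lies in a rotation orbit of size exactly $k$ under the rotation of angle $2\pi/k$ about $r_c$. Therefore $n-1$ is a multiple of $k$, giving $\gcd(n,k)=1$, and the same coprimality holds for the cardinality $|P|$ of the innermost polygon returned by \textsc{innerPolygon}, since $P$ is a union of such orbits on the innermost non-degenerate circle.

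Next I would show that \textsc{Voting} elects a canonical vertex $p_l \in P$. Because each robot can see the coordinate system of every other robot, all robots compute exactly the same vote vector $V \in \mathbb{N}^{|P|}$ with $\sum_i V[i] = n$. The crucial combinatorial claim is that $V$, viewed as a cyclic sequence of length $|P|$, is aperiodic: if $V$ admitted a period $d \mid |P|$ with $d < |P|$, then each value in one period would repeat $|P|/d > 1$ times, forcing $|P|/d$ to divide $\sum_i V[i] = n$ and contradicting $\gcd(n,|P|)=1$. Aperiodicity implies that the cyclic rotation of $V$ which is lexicographically smallest (read in the common clockwise direction starting at its base vertex) is realized at a unique vertex, which is taken as $p_l$.

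Finally, once $p_l$ is common knowledge, all robots agree on a total order on $C_0$, e.g.\ by scanning from $p_l$ outward along the shared clockwise direction and inserting any interior robots by the deterministic rule used in the $C_0 \notin {\cal C}_{\odot}$ branch. Algorithm~\ref{visitciclyc} then has every robot step cyclically along this order each round, so within $n$ rounds every robot visits every position of $C_0$, satisfying the \visitall specification. The main obstacle throughout is the aperiodicity step, which pivots on the precise coprimality between $n$ and $|P|$; all other steps reduce either to Theorem~\ref{lemma:l1} or to routine bookkeeping.
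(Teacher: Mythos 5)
Your proposal follows the paper's proof almost step for step: delegate the non-${\cal C}_{\odot}$ case to Theorem~\ref{lemma:l1}, and in the ${\cal C}_{\odot}$ case analyze the vote vector $V$ on the innermost polygon $P$ and argue that a coprimality condition rules out any nontrivial cyclic symmetry of $V$, so a unique vertex can be elected. However, there is a genuine gap in your argument, and it sits exactly at the step you yourself identify as the crux. From $n=(\alpha+\beta)k+1$ you correctly get $\gcd(n,k)=1$, but you then assert that ``the same coprimality holds for $|P|$ \ldots since $P$ is a union of such orbits.'' That is a non sequitur: being a union of $k$-orbits only gives $k \mid |P|$, i.e.\ $|P|=\beta k$, and $\gcd(n,k)=1$ does not imply $\gcd(n,\beta k)=1$. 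Concretely, take $k=2$, $n=9$: one central robot, six robots on the innermost circle forming three antipodal pairs with no larger symmetry, and one antipodal pair further out. This configuration is in ${\cal C}_{\odot}$, yet $|P|=6$ and $\gcd(9,6)=3$. Since the adversary chooses every robot's $x$-axis freely, it can realize the vote vector $(2,1,2,1,2,1)$, which is a cyclic sequence of period $2$; its lexicographically smallest rotation is attained at three distinct vertices, so your election rule does not return a unique $p_l$. (The paper's own proof makes exactly the same unjustified leap from $\gcd(n,k)=1$ to $\gcd(n,\beta k)=1$, so you have faithfully reproduced its weakness rather than introduced a new one.)

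The statement can still be salvaged, and the repair is worth knowing: full aperiodicity of $V$ is more than you need. It suffices that $V$ is not invariant under any nontrivial element of the \emph{configuration's} rotation group, i.e.\ under any shift by $j\beta$ positions with $0<j<k$. Such an invariance would partition the indices of $P$ into orbits of size $k'=k/\gcd(j,k)>1$, forcing $k' \mid n$ with $k' \mid k$, contradicting $\gcd(n,k)=1$. Hence the pair (geometry of $C_0$, vote vector $V$) --- both of which are common knowledge to all robots --- admits no nontrivial rotational symmetry, and with chirality a unique leader vertex can be elected deterministically from this pair, after which your concluding paragraph goes through unchanged. In the counterexample above, the irregular hexagon is only $\pi$-symmetric and $(2,1,2,1,2,1)$ is not invariant under a shift by $3$, so the tie among the three lexicographically minimal rotations is broken by the geometry.
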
 
\begin{proof}
We have to prove the correctness of Algorithm \ref{alg:moveallvisaxis} only for configurations in ${\cal C}_{\odot}$. 
Let $k$ be the symmetricity of the initial configuration $C_0$ without the central robot, and let $P$ be the innermost non degenerate $\beta \cdot k$-gon in $C_0$ with $\beta \in \mathbb{N}^{+}$.

The key observation is that $n=(\alpha+\beta) \cdot k+1$ for some $\alpha \in \mathbb{N}^{+}$.
We now show that no matter which robot calls procedure {\sc Voting}, the procedure returns always the same point in $P$. 
{\sc Voting} iterates over all robots, and computes the vote of each robot $r \in C_0$.
The vote is computed by first translating  the coordinate system of $r$ to the center of configuration $C_0$ and then taking as voted robot $r_v$, the one that makes the smallest counter-clockwise angle with the $x$-axis of the translated coordinate system. 
An example of the voting procedure is in Figure \ref{f:vote}.
First of all note that the result of the {\sc Voting} procedure is independent from the robot that is executing it, and it always returns the same distribution of votes for points in $P$, even if we permute the robots in $C_0$. 

 We now show that the distribution of votes  cannot be symmetric, and one robot in $P$ can be elected. 
 
 The proof is by contradiction. Let us assume that there exists an axial symmetry on the distribution of votes. The symmetry axis may cross two robots,   one robot, or none. In case it crosses only one robot, then we elect that robot  as leader. If the axis crosses two robots or none, then the number of votes must be even and so the number of points in $P$, that is $\beta \cdot k$; but this is obviously impossible since $\beta \cdot k$ and $n$ are co-prime. 

Let us assume that there exists a rotational symmetry on the votes. Then, there exists a proper divisor $d>1$ of $\beta \cdot k$ and an ordering of the robots in $P$ such that $d \cdot (V[1]+V[2]+\ldots+V[\frac{\beta k}{d} ])=n$, where $V[j]$ is the number of votes for a point of $P$ in position $j$ of the aforementioned ordering. Notice that  this would imply that $n$ and $\beta k$ are not co-prime, which is a contradiction. 

It follows that  it is always possible to elect a leader in $P$ using the votes. 

Therefore, procedure {\sc Voting} returns the same leader robot $p \in P$ for any permutation of the robots in $C_0$. It is obvious that the presence of $p$ breaks the symmetry of the configuration and it allows to compute a total order among positions in $C_0$ shared by all robots. Once this total order is given the solution is immediate. \end{proof}

%------------------------------------------------------------------
\section{Robots with one bit of Persistent Memory}
%\section{Robots with Constant Memory and Chirality}
\label{sec:mem}
%------------------------------------------------------------------

Motivated by the impossibility result of Theorem  \ref{tmimp1}, we investigate robots with some persistent memory. Interestingly, we show that a single bit of memory is sufficient to overcome the impossibility, and solve \visitall using a $2$-step algorithm.  
Note that we cannot overcome the impossibility using $1$-step algorithms, as Theorem~\ref{tm3} holds even if the robots are equipped with an infinite amount of memory. 

%\subsection{$2$-step Algorithm}
We present the $2$-step algorithm below (Algorithm \ref{alg:visitall1bit}) for $n\geq 3$ robots.

\begin{algorithm}
\caption{$2$-step \visitall with one bit of memory.  \label{alg:visitall1bit}}

\begin{algorithmic}[1]
\Procedure{Init}{}
\State $b=0$
\EndProcedure
\\
\State $C \leftarrow$ {\sc Look}
\\

\Procedure{Compute}{Configuration $C$}
\If{$C \not\in{\cal C}_{\odot} \land b=0$} \label{alg1bit:notcentral}
\State Compute an $order$ using Algorithm \ref{alg:visitallrho2} with input $C$.
\State Permute robots according to the computed $order$.   \label{alg1bit:notcentralend}
\ElsIf{$C \in{\cal C}_{\odot} \land b=0$}   \label{alg1bit:central}
\State b=1  \label{alg1bit:setbit}
\If{I am the central robot} \label{alg1bit:centralrobot}
\State Compute a destination point $\mathbf{v} = ${\sc ComputeMovementCentral}$(C)$. 
\State set {\bf destination} as  $\mathbf{v}$
\EndIf
\ElsIf{$C \in{\cal C}_{\odot} \land b=1$} \label{alg1bit:notcentralrobot}

\State compute a destination point $\mathbf{v} = ${\sc ComputeMovementNotCentral}$(C)$. 
\State set {\bf destination} as  $\mathbf{v}$ \label{alg1bit:notcentralrobotend}

\ElsIf{$C \not\in{\cal C}_{\odot} \land b=1$}  \label{alg1bit:wascentral}
\State $(C', p', Leader)=${\sc Reconstruct}$(C)$ \label{alg1bit:reconstruct}
%\If{$C$ and $C'$ differ for one position}
%\State Set as {\em Leader} the only robot that moved.  \label{alg1bit:leader1}
%\ElsIf{$C$ and $C'$ differ for two positions}
%\State Set as {\em Leader} the robot that moved and that was not central in $C'$.  \label{alg1bit:leader2}
%\EndIf
\State Compute a cyclic order $p_0,p_1,\ldots,p_{n-1}$ of positions in $C'$ using the pivot robot $p'$.
\If{I am the {\em Leader}}
\State b=1 \label{alg1:leadernoreset}
\Else 
\State b=0 \label{alg1bit:unsetbit}
\EndIf
\If{my position in $C'$ was $p_i$}
\State set {\bf destination} as $p_{(i+1)\mod n}$ \label{alg1bit:wascentralend}
\EndIf

\EndIf
\EndProcedure
\\
\State {\sc Move:} to {\bf destination} 
\end{algorithmic}
\end{algorithm}

\bigskip
\noindent {\bf Intuitive description of the algorithm.}
The general idea is to use alternate rounds of communication and formation of the actual permutation.
In the communication round, the robots create a special intermediate configuration that provides a total order on the robots; In the subsequent round they reconstruct the initial pattern forming the permutation of the initial configuration. The memory bit is crucial to distinguish the intermediate configuration from the initial configuration. 
If the initial configuration $C_0 \not\in {\cal C}_{\odot}$ then the robots follow the 1-step algorithm described in Section \ref{obv:chiral} and we will show that this does not conflict with the rest of the algorithm designed for the case when $C_0 \in {\cal C}_{\odot}$, as described below.

Initially every robot has the bit  $b$ set to $0$. 
When a robot observes that the configuration is in ${\cal C}_{\odot}$ and bit $b$ is $0$,  it sets the bit to $1$ to remember that the initial configuration $C_0 \in {\cal C}_{\odot}$. 
The central robot $r_l$ in $C_0$ takes the role of $Leader$ and performs a special move to create the intermediate configuration $C_1$ that is  not in ${\cal C}_{\odot}$  but from
$C_1$, it is possible to reconstruct the initial configuration $C_0$ or any permutation of it (This move is determined by procedure {\sc ComputeMovementCentral} described in the next paragraph). 

A key point of the algorithm is that the $Leader$ robot remains invariant.  
At the next activation, the robots observe a configuration that is not in ${\cal C}_{\odot}$ and they have bit $b=1$; this indicates that this is an intermediate configuration and the robots move to reconstruct a configuration $C_2 = \Pi(C_0)$. 
With the exception of the Leader $r_l$ whose memory bit $b$ is always set to $1$,  all the other robots will now reset their bit $b$ to 0.
%(see lines \ref{alg1bit:wascentral}-\ref{alg1bit:wascentralend}).
%We remark that $r_l$ is the only robot that does not reset its bit $b$.
%, see line \ref{alg1:leadernoreset}.

In the next round, the robots are in configuration $C_2$, where the central robot $r_c$ is not $r_l$ (the robots have performed one cyclic permutation). 
At this point, % the  robots   have their bit $b=0$, with the exception of 
the robot $r_l$ is the unique robot whose bit $b=1$. All other robots have $b=1$ and they behave similarly as in the first round, including robot $r_c$ which moves like the central robot moved in $C_0$.
However, the leader robot $r_l$ also moves at the same time, in a special way (as described in procedure {\sc ComputeMovementNotCentral} presented in the next paragraph).
The combination of moves of the leader robot and the central robot allows the robots not only to reconstruct the initial configuration, but also to uniquely identify a ``pivot" point in the pattern (see Figure~\ref{fig:cmove}), which is kept invariant during the algorithm.  The reconstruction is executed by procedure  {\sc Reconstruct} whose details are discussed below.
The recognition of the pivot point allows the robots agree on the same cyclic ordering of the points in the initial pattern, thus allowing cyclic permutations of the robots.

\noindent A pictorial representation of the algorithm is presented in Figure \ref{fig:runcentral}. 
\begin{figure}[tbh]
\center
\begin{subfigure}{0.45\textwidth}
  %\subfloat[Initial configuration]{
%\fbox{
    \includegraphics[width=0.7\textwidth]{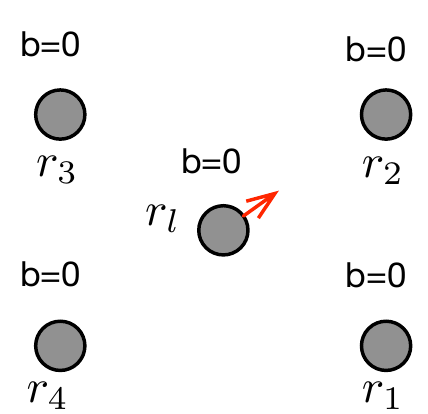}
  \caption{\footnotesize round $r=0$ ({\em communication round}):
  %, configuration where the memory bit $b$ is $0$ for all agents: 
  only the central agent $r_l$ moves.}
  \label{f:elements1}
%  }
\end{subfigure}
$\,\,$
\begin{subfigure}{0.45\textwidth}  
  %\subfloat[Robot $r$ computes the vote of $r'$. It first translate the coordinate system of $r'$ in the central robot, then it assigns the vote to robot $a$.]{
 \includegraphics[width=0.7\textwidth]{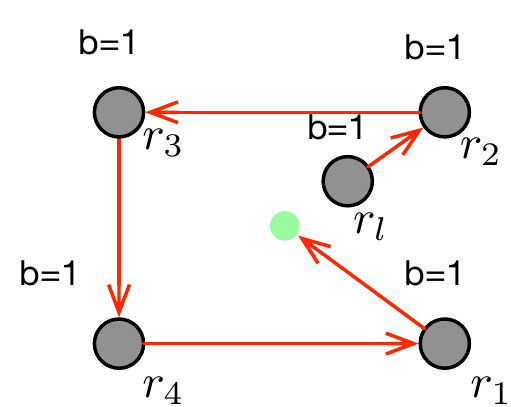}
    \label{f:elements2}
  \caption{\footnotesize round $r=1$  ({\em permutation round}).
  % the configuration is not in ${\cal C}_{\odot}$, however the bit $b$ is $1$ for every agents so the agents know that the starting configuration was central. 
  The arrows show the cyclic order over the positions of the original central configuration. 
  %Before moving everyone but $r_l$ sets $b$ to $0$.
  }
  %}
    \end{subfigure}%
    \\
\begin{subfigure}{0.45\textwidth}
 % \subfloat[Votes distribution and induced cyclic order]{
\includegraphics[width=0.7\textwidth]{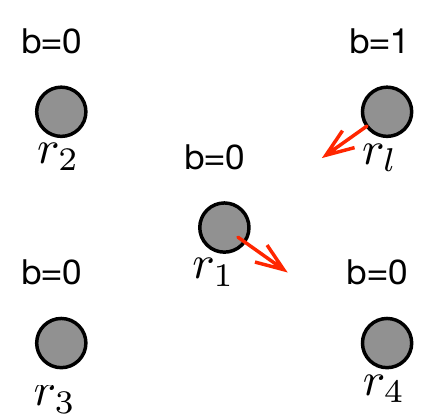}
    \label{f:elements3}
  \caption{\footnotesize round $r=2$  ({\em communication round}).
  %, the configuration is a permutation of the configuration at round $0$. Notice that both $r_l$ and $r_1$ move. 
  $r_l$ moves since it has $b=1$ and $r_1$ moves because it is a central robot.}
  %}
  \end{subfigure}%
  $\,\,$
  \begin{subfigure}{0.45\textwidth}
 % \subfloat[Votes distribution and induced cyclic order]{
\includegraphics[width=0.7\textwidth]{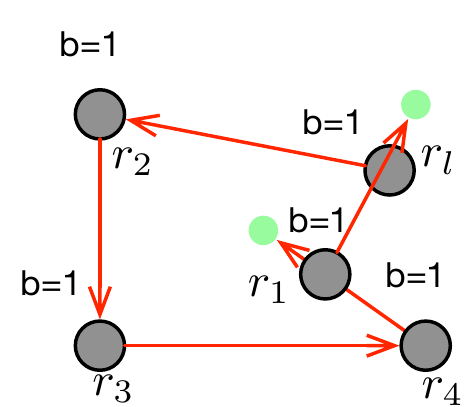}
  \label{f:elements4}
  \caption{\footnotesize round $r=3$  ({\em permutation round}). 
  %The configuration is not in ${\cal C}_{\odot}$, bit $b=1$ for every robot.
  The robots move according to the  cyclic order on the reconstructed central configuration. 
  %This is possible, at most two robots moved and they do it carefully. Before moving every robot but $r_l$ will set its bit $b$ to $0$.
  }
  %}
  \end{subfigure}%

\caption{Case of central robot moving when $n>3$.}
\label{fig:runcentral}
\end{figure}

\bigskip
\noindent {\bf Movements of $r_l$ and $r_c$.} This paragraph discusses the implementation details of functions {\sc ComputeMovementCentral} and {\sc ComputeMovementNotCentral} used in Algorithm \ref{alg:visitall1bit}.  In particular, special care has to be taken to design the movements of the robot leader $r_l$   and of the central robot $r_c$,  if different from $r_l$.
Such movements have to be done in such a way to break the symmetry of the configuration by electing always the same pivot position $p'$, and to make the central configuration reconstructable.   Let $C$ be a generic central configuration in $\Pi(C_0)$.  Let $P_0,P_1,\ldots, P_m$ be a decomposition of $C$ in concentric circles, where each $P_j$ is a circle, $P_0$ is the degenerate circle constituted by the only central robot, and $P_1$ is the innermost non-degenerate polygon on which $p'$ resides (see Figure \ref{fig:conc}). 

\begin{figure}[tbh]
\center
  \includegraphics[width=0.4\textwidth]{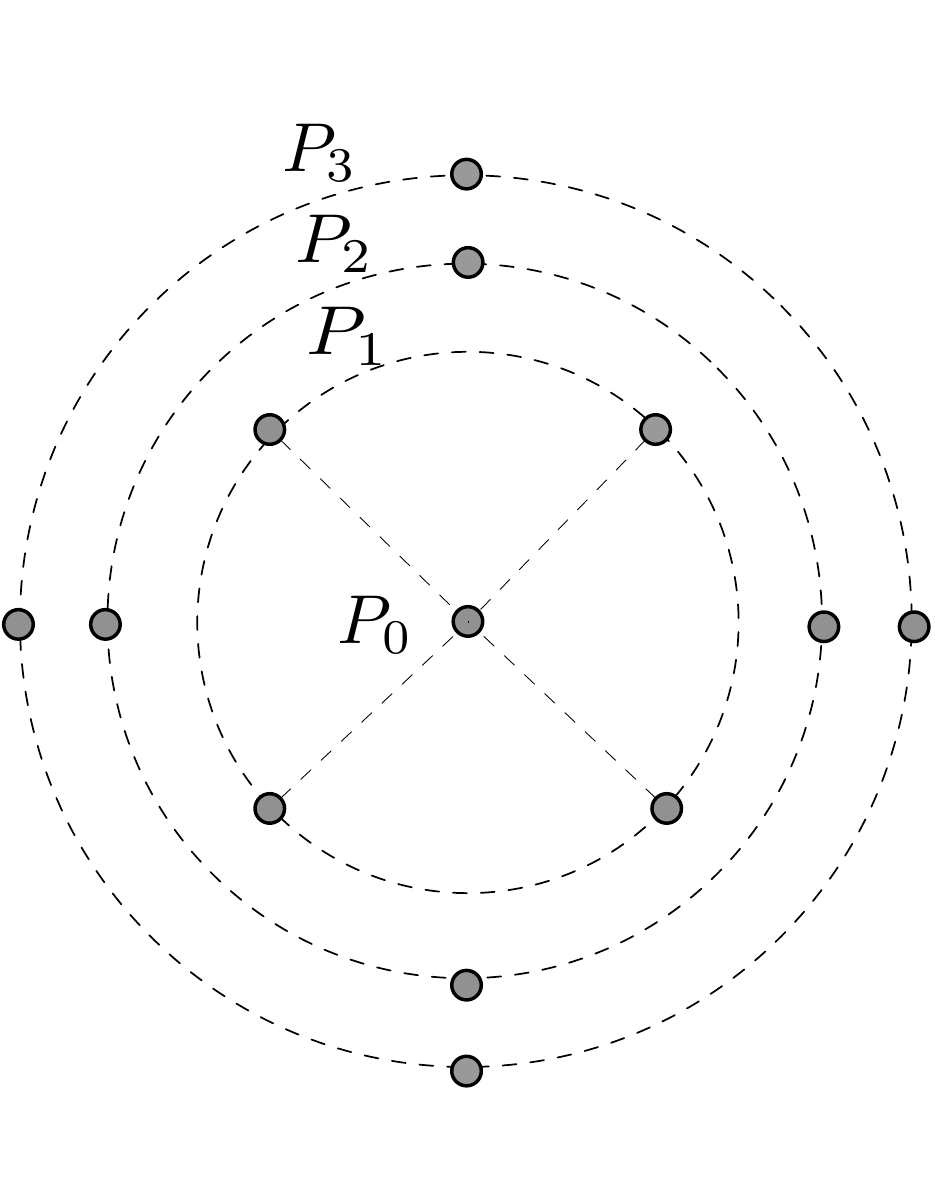}
  \caption{Decomposition of a central configuration in concentric circles. Note that $P_0$ is the degenerate circle formed by only the central robot.}\label{fig:conc}
\end{figure}

The function {\sc ComputeMovementCentral}, called by robot $r_c$, executes the following movements according to the number of robots  $n$:
\begin{itemize}
\item $n=3$  (Case C1):  In such a case, since  $C \in C_{\odot}$,  the  robots are on a single line.  Let $s$ be the segment containing all robots. Robot $r_c$ moves perpendicularly to $s$ of a distance $d=\frac{|s|}{2}$.  The direction is chosen  such that the pivot position $p'$ will be the one of the first robot encountered travelling along the arcs that connect the endpoints of $s$ and $r_c$ in the clockwise direction (see Figure \ref{f:elements1}). 
\begin{figure} [tbh]
\center
  \includegraphics[width=0.4\textwidth]{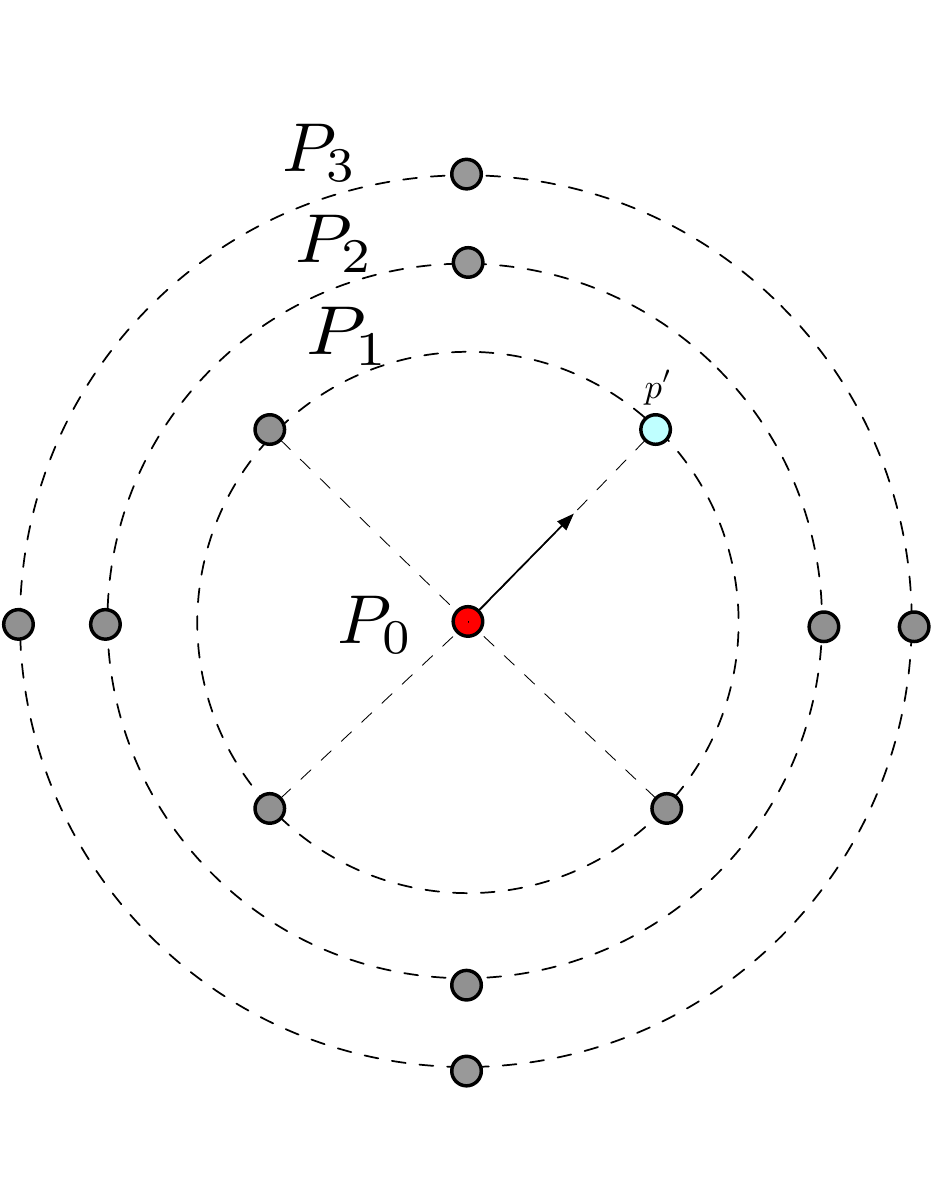}
  \caption{Movement of the central robot $r_c$: it moves towards position $p'$ of a distance that is much smaller than the radius of $P_1$. The pivot point $p'$ is on $P_1$}\label{fig:cmove}
\end{figure}

\item $n>3$  (Case C2): Robot $r_c$ chooses a robot position $p'$ on $P_1$ as pivot. It   moves on the segment connecting $c$ and $p'$  of a small distance (much smaller then the radius of $P_1$). See Figure \ref{fig:cmove}.
\end{itemize}

Robot $r_l$, when different from $r_c$, executes the following movements according to $n$ (such movements are computed by the function  {\sc ComputeMovementNotCentral}):
\begin{itemize}
\item $n=3$. (Case L1): (Robots  on a single line): Robot $r_l$ moves on $s$ by a small distance, 
%shortening or extending segment $s$ 
creating a new segment $s'$. The pivot position $p'$ will be indicated by the direction that goes from the old center of $s$ to the new center of $s'$ (see Figure \ref{f:elements2}). 

\item $n>3$ (case L2): We have three sub-cases depending on which $P_j$ robot $r_l$ is positioned, and on the number of other robots on $P_j$. Note that $r_l \not\in P_0$ (otherwise it would be central). Remember that $P_m$ is the outermost circle, and it coincides with the {\sc SEC} of $C$. We treat each $P_j$ as a set, e.g. $|P_j|$ indicates the number of robots/positions in $P_j$. Let $x$ be the difference between the radii of $P_{j-1}$ and $P_{j}$, and let $h$ be the segment connecting $P_0$ and robot $r_l$. Let $p$ be the position of the first robot on $P_1$ encountered by walking in clockwise direction starting from the intersection between $h$ and $P_1$. Let $nhop$ be the number of positions between $p$ and the robot $p'$ that $r_l$ wants to indicate. Recall that $p'$ is the robot that $r_l$ will move towards if $r_l$ was in $P_0$. 

\begin{itemize}
\item  Sub-case (L2.1). When $P_j \neq P_m$ or $P_j=P_m$ and $|P_m| >3$:  
Robot $r_l$   moves on $h$ towards $P_j$ by a quantity $encode(nhop)*\frac{x}{2}$. Where $encode$ is an appropriate function from $\mathbb{N}$ to $(\frac{1}{2},1)$.  See Figure \ref{fig:notcentralmovel} for an example. 
\item Sub-case (L2.2). When $P_j=P_m$ and $|P_m| = 3$: Note that $P_m$ has to be rotationally symmetric; therefore it contains $3$ robots each of them forming an angle of $\frac{2\pi}{3}$ with its adjacent neighbours. Robot $r_l$ moves to a point of $P_j$ that creates with its counter-clockwise neighbour an angle that is $encode(nhop)*\frac{2\pi}{3}$, where $encode$ is an appropriate  function from $\mathbb{N}$ to $(\frac{1}{2},1)$. See Figure \ref{fig:notcentralmovel3r} for an example. 

\item Sub-case (L2.3).  When $P_j=P_m$ and $|P_m| =  2$: let $s$ be the segment connecting the two robots on $P_j$. Robot $r_l$ moves  on $s$, expanding $P_j$ in such a way  that the  new diameter is $2D+encode(nhop)*D$$. W$, where $encode$ is an appropriate  function from $\mathbb{N}$ to $(\frac{1}{2},1)$. See Figure \ref{fig:notcentralmovel2r} for an example. 
\end{itemize}
Note that is not possible to have $|P_m| =  1$, since $C$ is rotationally symmetric. 
\end{itemize}
It is easy to see that when $r_c$ (or $r_c$ and $r_l$)  move,  the resulting configuration $C'$ is not in ${\cal C}_{\odot}$.

\begin{figure}[H]
\begin{center}
\begin{subfigure}{0.45\textwidth}
  %\subfloat[Initial configuration]{
%\fbox{
  \includegraphics[width=0.75\textwidth]{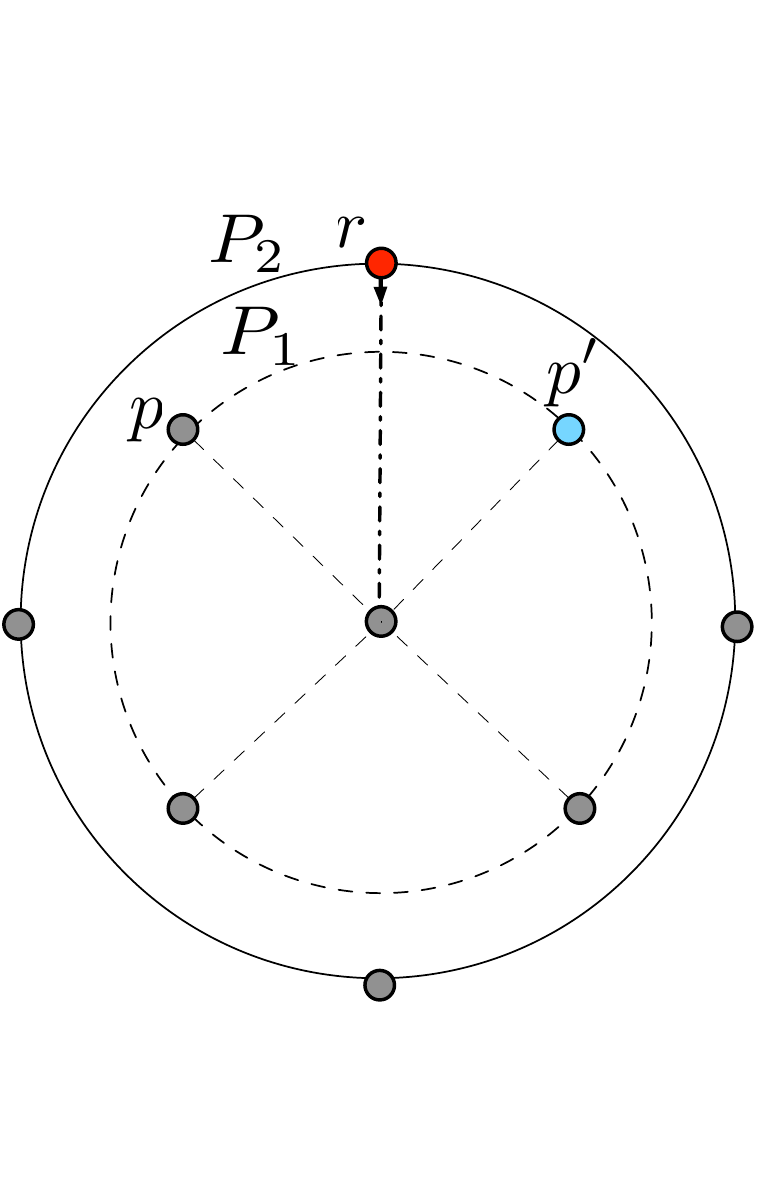}
  \vspace{-1cm}
  \caption{L2.1. The non-central leader robot is not on the {\sc SEC}, or it is on the {\sc SEC} with at least other three robots.}\label{fig:notcentralmovel}

%  }
\end{subfigure}
$\,\,$
\begin{subfigure}{0.45\textwidth}
  
  %\subfloat[Robot $r$ computes the vote of $r'$. It first translate the coordinate system of $r'$ in the central robot, then it assigns the vote to robot $a$.]{
  \includegraphics[width=1.3\textwidth]{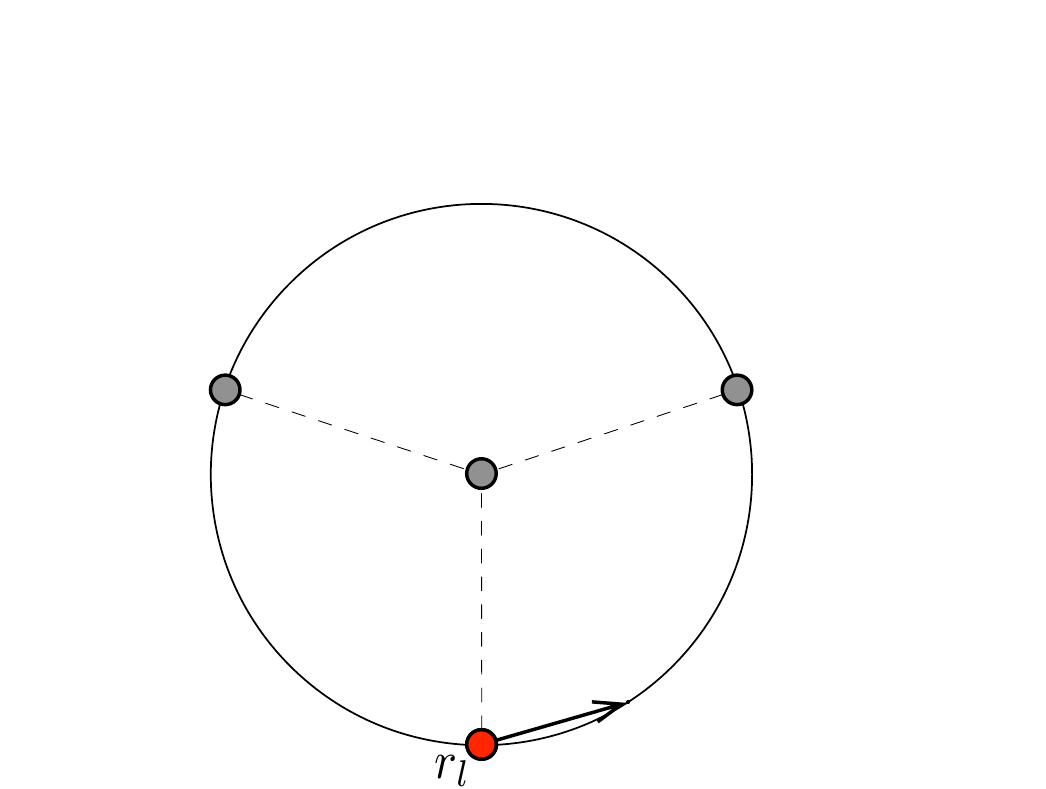}

  \caption{L2.2. The non-central leader robot is on the {\sc SEC}, and the {\sc SEC} contains exactly three robots.}\label{fig:notcentralmovel3r}
  %}

    \end{subfigure}%
\begin{center}
\begin{subfigure}{0.45\textwidth}
 % \subfloat[Votes distribution and induced cyclic order]{
  \includegraphics[width=0.9\textwidth]{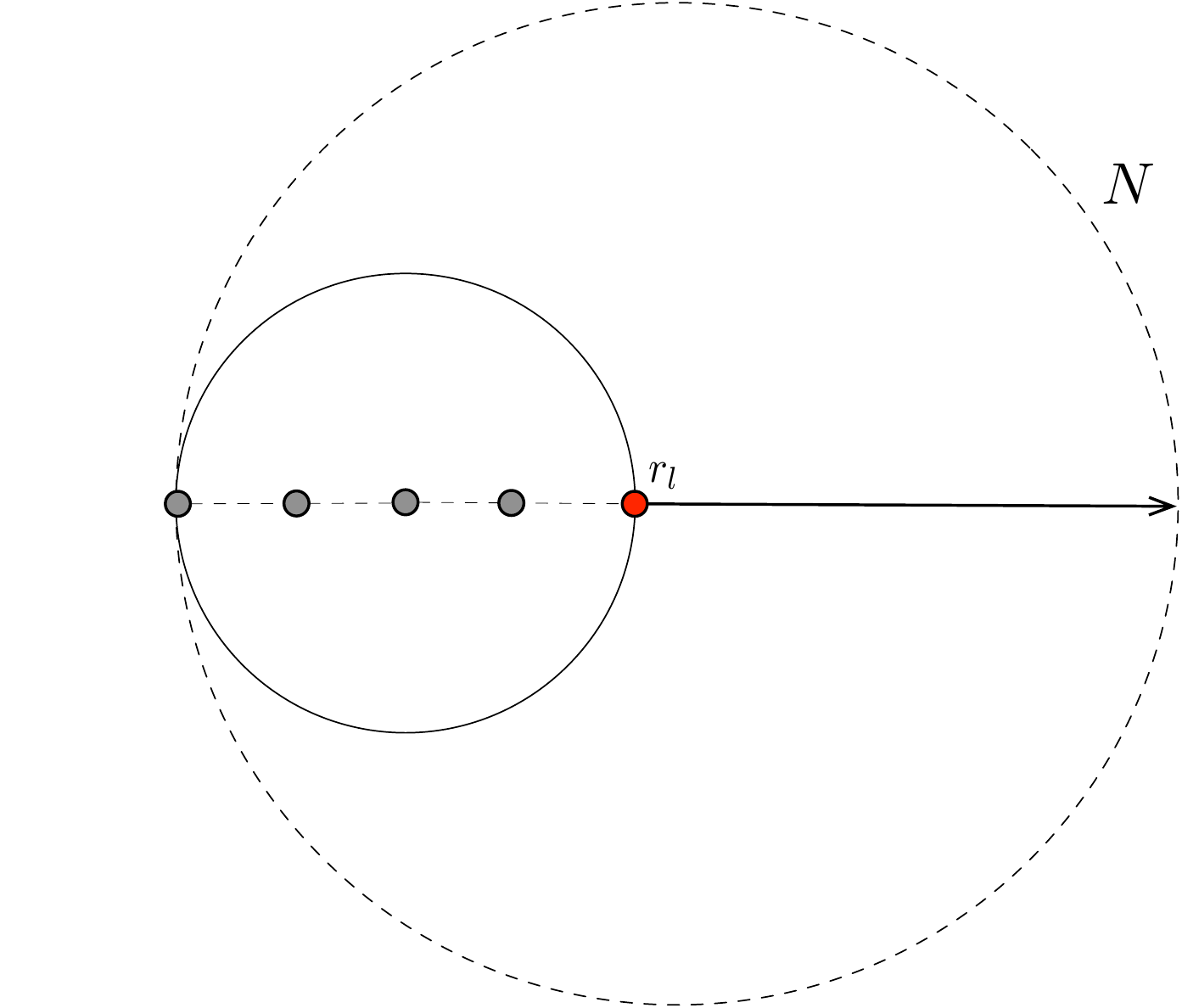}

  \caption{L2.3. The non-central leader robot is on the {\sc SEC}, and the {\sc SEC} contains exactly two robots.}\label{fig:notcentralmovel2r}
  %}
  \end{subfigure}%
\end{center}
\caption{Movements of non-central leader robot $r_l$ when $n \geq 4$. \label{figure:cmove}}
\end{center}
\end{figure}

\bigskip
\bigskip
\noindent {\bf Reconstruction of the initial configuration.}
When the current configuration $C'$ is not in  ${\cal C}_{\odot}$ and the robots have bit $b= 1$, the robots know that they are in an intermediate configuration and 
they have to (1) reconstruct the original configuration $C$, (2) determine the pivot point $p'$ in $C$, 
and (3) identify the leader robot $r_l$. The reconstruction is performed by procedure {\sc Reconstruct}, which again, depends on the value of $n$.

\begin{itemize} 

\item In case $n=3$: the robots are forming a triangle. The base of the triangle is its largest edge $e$. The algorithm uses the height of the triangle, w.r.t. the base $e$ to understand if $r_l$ was $r_c$ or not (see Figure \ref{f:elements3}). 

\begin{itemize}

\item If the height of the triangle is exactly half of $e$, the algorithm infers that $r_l=r_c$ and that the two other robots are the endpoints of $e$, case (C1). The pivot $p'$ is the point of the first clockwise robots encountered following the arc that
includes robot $r_l$ and the other two. The original configuration $C$ is easily reconstructed: the endpoints of $e$ are in the same position, and the central robot will be in the intersection of the perpendicular segment that goes through
$r_l$ and $e$. 

\item If the height is slightly less, or slightly more, than the largest edge $e$. Then the algorithm infers that $r_l$ was one of the endpoint; we are in case (L1.1). The reconstruction of $C$ is simple:  take the intersection $x$ of the perpendicular segment that goes through
$r_l$ and $e$, the position of the endpoints in $C$ is reconstructed using the fact that the hight of the triangle is exactly half of the original segment, and that $x$ was the center of the original segment. Robot $r_l$ is the endpoint that moved, and the 
position $p'$ is decided evaluating if $r_l$ moved towards or away from the old center.  
\end{itemize}
 
 \item In case $n>3$: The algorithm starts by examining $P_m$, in order to understand if $r_l$ was on the {\sc SEC} and executed the sub-case (L2.2) or (L2.3). If the test is negative it proceeds using an {\emph ``onion peeling''} approach, in which the algorithm, starting from the outermost $P_m$, progressively examines each $P_j$ until it finds an asymmetry or it reaches $P_0$. The onion peeling proceeds by first computing the SEC, that is $P_m$, and then   computing each   $P_j$ by finding progressively smaller concentric circles. 
 	\begin{itemize}
\item Test for case (L2.3): This test case is done only on $P_m$. If  the center of $P_m$ is not contained in {\sc SEC}$(C' \setminus P_{m-1})$, then the algorithm detects case (L2.3).   $P_m$ is adjusted to a new one that has the diameter equal to the distance between the two furthest robots on $P_m$. Robot $r_l$ will be the robot on $P_m$ that is farthest from robots in $P_{m-1}$. The reconstruction of the last layer is done by knowing that it will be a circle with the same center of {\sc SEC}$(C' \setminus P_{m-1})$ that passes through $P_m \setminus \{r_l\}$ and finally $p'$ will be indicated by decoding the information encoded in the diameter of $P_m$. 

\item Test for case (L2.2): This test case is done only on $P_m$. If $|P_m|=3$ and it is not rotationally symmetric, and ($|P_{m-1} |> 1$ or $m-1=0$), then the algorithm detects case  (L2.2). Robot $r_l$ is one that is not forming an angle of $\frac{2\pi}{3}$ radians with any of it adjacent neighbours, position $p'$ is encoded in the smallest angle that $r_l$ is forming. The original position of $r_l$ can be easily reconstructed: it is the one that forms an angle of $\frac{2\pi}{3}$ with each of its adjacent robots. 

\item Test for case (L2.1): This test case is done on layers different than $P_m$. If $|P_j|=1$ then the algorithm detects case (L2.1). Robot $r_l$ is the only robot in $P_j$ the reconstruction and the detection of $p'$ are trivial. 
\end{itemize}
If the algorithm reaches $P_0$ not finding an asymmetry then we have that $r_l=r_c$, case (C2). The decoding is trivial: the original position of $r_l$ is the center of $P_1$, position $p'$ is the one it moved towards. 
Note that the algorithm uses always robot $r_l$ to get robot $p'$, in case $r_l \neq r_c$ also $r_c$ moves. However, the original position of $r_c$ is always trivial to reconstruct (the center of any $P_j$ with $j >0$). 
 \end{itemize}

\begin{figure}[tbh]
    \begin{center}
\begin{subfigure}{0.4\textwidth}
  %\subfloat[Initial configuration]{
%\fbox{
  \includegraphics[width=\textwidth]{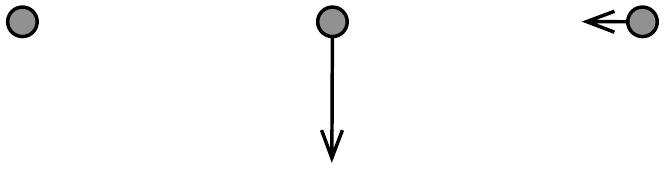}
  \caption{The case of $3$ robots, the central robot moves perpendicularly to initial line, and it will indicate the left robot as direction for the permutation.}
  \label{f:elements1}
%  }
\end{subfigure}
$\,\quad$
\begin{subfigure}{0.4\textwidth}
  
  %\subfloat[Robot $r$ computes the vote of $r'$. It first translate the coordinate system of $r'$ in the central robot, then it assigns the vote to robot $a$.]{
  \includegraphics[width=\textwidth]{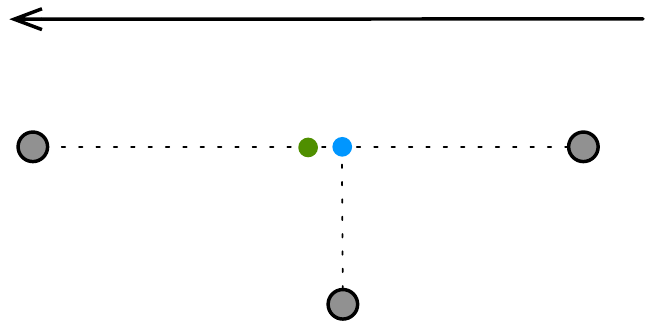}

  \caption{The external robot moved and it fixed as direction for permutation the one that goes  from the old center, blue dot, to the new one, green dot.     \label{f:elements2}}
  %}

    \end{subfigure}%

\begin{subfigure}{0.4\textwidth}
 % \subfloat[Votes distribution and induced cyclic order]{
\includegraphics[width=\textwidth]{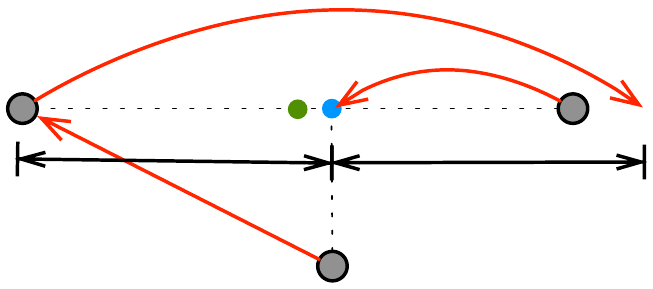}

  \caption{The exact positions of robots are easily reconstructable: the new central robot has to be places in the blue dot, and the left central robot has to move to its simmetric position with respect to the old center, the blue dot.     \label{f:elements3}}
  %}
  \end{subfigure}%
\end{center}
\caption{Special case $n=3$ robots.}
\label{f:special3}
\end{figure}

\begin{theorem}\label{tmmovell1bit}
There exists an universal algorithm to solve \visitall for  robots with $1$ bit of persistent memory.  
\end{theorem}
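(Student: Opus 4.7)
The plan is to prove correctness of Algorithm \ref{alg:visitall1bit} by case analysis on the initial configuration $C_0$, establishing the following invariants along any run: (i) every two rounds the current configuration is a permutation of $C_0$ lying outside the ``intermediate'' state; (ii) a single leader robot $r_l$ retains $b=1$ for all rounds, while every other robot oscillates between $b=0$ and $b=1$ in phase with the two-step cycle; (iii) a unique ``pivot'' position $p'$ in $\Pi(C_0)$ is consistently identified by all robots, and (iv) the cyclic ordering induced by $p'$ ensures that in any $2n$ consecutive rounds every robot visits every initial position, thereby realising a permutation in $\Pi_n$.

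If $C_0 \notin {\cal C}_{\odot}$, I would note that the algorithm never exercises the branch at Line~\ref{alg1bit:central}: no configuration visited is in ${\cal C}_{\odot}$ (since the reachable configurations are permutations of $C_0$), hence every robot keeps $b=0$ and executes Lines~\ref{alg1bit:notcentral}--\ref{alg1bit:notcentralend}. Correctness then reduces to Theorem~\ref{lemma:l1}. The remainder of the proof therefore concentrates on $C_0 \in {\cal C}_{\odot}$. Round $0$ triggers Line~\ref{alg1bit:central}: the central robot $r_c$ becomes the leader $r_l$ (setting its bit permanently via Line~\ref{alg1:leadernoreset}) and performs a move chosen from Case (C1) or (C2). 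I would verify geometrically that the resulting $C_1$ is not in ${\cal C}_{\odot}$ and, more importantly, that the tuple (concentric decomposition, asymmetry type, encoded distance/angle) encountered in $C_1$ is disjoint from the tuples produced by the leader moves (L1), (L2.1), (L2.2), (L2.3). At round $1$, every robot sees $C_1 \notin {\cal C}_{\odot}$ with $b=1$ and invokes {\sc Reconstruct}; all non-leader robots then reset $b$ to $0$, while $r_l$ retains $b=1$, preserving invariant (ii). A permutation $C_2 \in \Pi(C_0)\cap {\cal C}_{\odot}$ is obtained; then at round $2$ the new central robot and $r_l$ move simultaneously (via Cases (C2) combined with one of the (L*) sub-cases), and the pattern repeats.

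The main obstacle is the correctness of {\sc Reconstruct}, which must, from an intermediate configuration $C'$, unambiguously recover (a) the underlying central configuration $C \in {\cal C}_{\odot}$, (b) the pivot position $p'$, and (c) the identity of the leader $r_l$, regardless of which sub-case produced $C'$. I would discharge this by showing the detection tests in order: first, the L2.3 test (center of $P_m$ falling outside {\sc SEC}$(C'\setminus P_{m-1})$) triggers precisely when $|P_m|=2$ and $r_l \in P_m$ moved along their diameter; second, the L2.2 test ($|P_m|=3$ with broken rotational symmetry) fires exactly in Case (L2.2); third, the onion-peeling loop, scanning layers from outside in, encounters $|P_j|=1$ exactly in Case (L2.1); if no test fires, we reach $P_0$ and must be in Case (C2). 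For each identified case, I would check that the chosen geometric quantity (a distance in $(x/2,x)$, an angle in $(\pi/3,2\pi/3)$, or a diameter in $(2D+D/2,3D)$) lies strictly inside a range disjoint from any value that could appear in any other case or from an inherent symmetry of $C$, so that the injective function $encode$ uniquely decodes $nhop$ and hence $p'$; from $p'$ and the recovered $P_1$ a total order on $C$ is obtained. For $n=3$, analogous (and simpler) case analysis on the triangle's height versus its longest edge separates Cases (C1) and (L1). Once $(C, p', r_l)$ is reconstructed identically by every robot, the cyclic-order permutation in Lines~\ref{alg1bit:reconstruct}--\ref{alg1bit:wascentralend} completes one step of an $n$-cycle in $\Pi_n$, and iterating for $n$ such steps (i.e., $2n$ rounds) yields the \visitall guarantee.
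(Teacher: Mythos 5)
Your proposal follows essentially the same route as the paper's proof: the same case split on $C_0 \notin {\cal C}_{\odot}$ versus $C_0 \in {\cal C}_{\odot}$ (with $n=3$ handled separately), the same two-round alternation between a communication configuration and a permutation of $C_0$ driven by the leader's persistent bit, and the same core burden of showing that {\sc Reconstruct} disambiguates the cases (C1)/(C2)/(L1)/(L2.1)--(L2.3) by testing the outermost layer first and onion-peeling inward. Your explicit statement of the invariants and of the disjointness of the $encode$ ranges is a slightly more systematic packaging of what the paper argues informally, but it is not a different proof.
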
 
\begin{proof}
We analyse the correctness of the proposed solution considering three cases:
\begin{itemize} 
\item {\bf When the robots start in a configuration $C \not\in{\cal C}_{\odot}$}: Note, that in this case all robots have bit $b$ set to $0$. This means, that the condition dictated by the {\tt If} condition  at line \ref{alg1bit:notcentral} is satisfied, thus the robots will fully execute Algorithm \ref{alg:visitallrho2}. Therefore, \visitall will be correctly solved. 

\item {\bf When the robots start in a configuration $C \in{\cal C}_{\odot}$ and $n>3$}. In this case,  for any central configuration reached we may alternate between two different scenario: (S1) only the central robot moves; (S2) two robots move. Scenario (S1) happens in the first configuration, and each time robot $r_l$ goes back to the central position. Scenario (S2) happens each time we are in a central configuration, but $r_l$ is not the central robot. We now prove the correctness for each of the aforementioned possibilities, and this is done by showing that robots start from a configuration $C \in \Pi(C_0)$, they reach a configuration $C' \not\in{\cal C}_{\odot}$, and then again a configuration $C'' \in \Pi(C_0)$ that is a cyclic permutation of $C_0$. 
\begin{itemize}
\item {\bf Scenario (S1), only the central robot moves.} At the beginning, the robots have bit $b=0$ and the configuration is ${\cal C}_{\odot}$, this implies that only the central robot $r_l$ move, see line \ref{alg1bit:centralrobot} of Algorithm \ref{alg:visitall1bit}, all the robots set their bit $b$. 
The central robot $r_l$ moves of a small distance $d$ towards a robot $p'$.
At the next round, the robots will se a $C' \not\in{\cal C}_{\odot}$, but since  $b=1$ they know that they started from a central configuration, see line \ref{alg1bit:wascentral}. Notice that,  if $r_l$ moved of a small distance still being inside the circle formed by the innermost robots (see as reference Figure \ref{fig:runcentral}) then it is simple for all robots to reconstruct the proper $C$ by seeing $C'$ and executing the {\sc Reconstruct} procedure. Therefore, they can understand that $r_l$ is the only robot that moved and everyone appoints $r_l$ as {\em leader}, it also simple to identify the pivot robot $p'$ in the innermost circle of $P_1$.

By using this direction, the robots compute a common cyclic order on  the positions in $C''$  (see Figure \ref{fig:runcentral}) and everyone moves according to this order and its position in $C''$, reaching a permutation of $C'' \in \Pi(C)$, before moving, all robots but the appointed leader update their bit $b$ (line \ref{alg1bit:unsetbit}).

\item {\bf Scenario (S2), only two robots move.} Robots that change position are the central robot $r_c$ (line \ref{alg1bit:centralrobot} of Figure \ref{alg:visitall1bit}), and the only one that started the round with bit $b=1$ that is robot $r_l$ (line \ref{alg1bit:notcentralrobot} of Figure \ref{alg:visitall1bit}). Before ending the round, all the  robots set their bit $b$  (line \ref{alg1bit:setbit}).
The movement of   robot $r_l$ is different depending  on its position. The purpose of this move is to indicate the pivot robot $p'$, that $r_l$ would have chosen if it was the central robot. 
 After $r_l$ and $c$ moved, we are in a configuration $C'$ and all robot have bit $b$ set, so they can remember they were starting from a central configuration. The only item left to show is that they can all reconstruct the central configuration $I$ and they can all agree on robot $r_l$ being the leader. 
We now show that the robots do reconstruct $C$ starting from $C'$ and  thus agree on the same cyclic order. 
 The reconstruction is done by the procedure {\sc Reconstruct}. We have to show that for each possible case the reconstruction is correct.
 
 \begin{itemize} 
\item  Reconstruction from a configuration created with move (L2.1).  First of all the algorithm cannot mistake it for cases (L2.2) and (L2.3). The correct reconstruction is immediate.

\item  Reconstruction from a configuration created with move (L2.2). The case cannot be mistaken for (L2.3): the {\sc SEC} is concentric with the other $P_j$. It cannot be mistaken for case L2.1: $|P_{m-1}| >1$ or $m-1=0$. The reconstruction is immediate. 

\item  Reconstruction from a configuration created with move (L2.3).  This case is detected by observing that the centre of {\sc SEC} is not contained in {\sc SEC}$(C' \setminus P_{m})$. This happens only when $r_l$ moved according to rule (L2.3).
The {\sc Reconstruct} adjusts the {\sc SEC}, that contains three robots, to a new sec $P_{m}$ that has as diameter the two furthest robots on the old {\sc SEC}. Let $r_l$ and $r_x$ be the two robots on the new $P_m$. We argue that the distance of this two robots is encoding the position $p'$.
Refer to Figure \ref{fig:notcentralmovel2r}: robot $r_l$ moves from the other robots up to a distance $2D+x*D$. By doing so,  it is becoming the robot that is furthest from all the others in $C'$. It is also clear that the by removing $r_l$ and $r_x$ we get $C \setminus \{r_l,r_x\}$ 
(neglecting for now the central robot). The diameter $D$ can be recovered by taking the circle with center in {\sc SEC}$(C \setminus \{r_l,r_x\})$ and passing through $r_x$. With this information the decoding is simple, and the reconstruction of $C$ is immediate.

 \end{itemize}
 
  \item  {\bf When the robots start in a configuration $C \in{\cal C}_{\odot}$ and $n=3$}: In this case when the configuration is in ${\cal C}_{\odot}$ the movements of the central robot and of the leader robot $r_l$ follow special rules. Notice that all three robots lie on a segment $s$.  Robot $r_l$ moves following case (L2.1) if is not central. The central robot moves according to  case (C2).  It   easy to see that the  robots reconstruct the central configuration and they correctly elect the leader, which is either the central robot (if it is the only robot that moved)  or the external robot (if two robots moved).
It is also immediate that \visitall is solved, since robots execute continuously the same cyclic permutation. 
 \end{itemize}
Once $C,p',r_l$ are known the robots agree on the same cyclic order and reach a new configuration $C \in \Pi(C_0)$. 
An immediate induction shows that starting from the initial configuration $C_0$ we alternate either in scenario S1 or S2 solving \visitall.\end{itemize}\end{proof}

\section{Concluding Remarks}

To the best of our knowledge, this is the first investigation of the problems of permuting the positions of a set of mobile robots in the plane. Surprisingly this class of problems seems to be more difficult than the previously studied problems such as gathering and pattern formation, which have easy solutions for the strongest model of fully synchronous robots with rigid movements. Thus the characterization of solvable instances for permutation problems is quite different as shown in this paper. Moreover we also showed that being non-oblivious is helpful for permuting robots, unlike the formation problems where the solvability is unaffected by obliviousness~\cite{yamashita2010}.
The paper opens several research directions that are worth investigating: an interesting direction would be to discover other class of problems which cannot be solved even when it is easy to elect a leader (as the class of problems considered here). The difficulty in solving the permutation problems seems to be unrelated to agreement problems such as leader election. In particular we may try to study the differences between leader election and permutation problems and determine if the latter is strictly more difficult than the former. We may also consider other interesting assumptions that can help in overcoming the challenges for permuting robots without orientation. One possibility is the investigation of robots with the additional capability of communicating using visible lights \cite{DaFPSY16,inf,DiVi19}.\\

\noindent {\bf Acknowledgements:} This work has been partially funded by the University of Rome ``La Sapienza'' with the Calypso project. Some of the authors were supported in part by NSERC Canada through the Discovery Grants program and by Prof. Flocchini's University Research Chair.


\begin{thebibliography}{10}

\bibitem{andoSY95}
H.~Ando, I.~Suzuki, and M.~Yamashita.
\newblock Formation and agreement problems for synchronous mobile robots with
  limited visibility.
\newblock In {\em Proc. of the 10th  IEEE Symp. on Intelligent Control}, pages
  453--460, 1995.
  
  
%\bibitem{BeMB13}
%M. Beinhofer, J. Muller, and W. Burgard.
%Effective landmark placement for accurate and reliable mobile robot navigation.
%\newblock {\em Robotics and Autonomous Systems}, 
% 61(10):  1060--1069, 2013.
  
% \bibitem{BhGM16}
% S. Bhagat, S. Gan Chaudhuri, and K. Mukhopadhyaya.
% \newblock Formation of general position by asynchronous mobile robots 
% under one-axis agreement
%\newblock In {\em Proc. of the  10th International Workshop on Algorithms and Computation
%(WALCOM)}, pages  80--91, 2016.

 \bibitem{BrT16}
Q. Bramas and  S. Tixeuil.
Brief announcement: Probabilistic asynchronous arbitrary pattern formation.
\newblock In {\em Proc.  of the 34th ACM Symposium on Principles of Distributed Computing
(PODC)},  pages 443--445,
 {2016}.



  \bibitem{CiDN16}
   S. Cicerone, G. Di Stefano, and A.  Navarra.
\newblock  Asynchronous embedded pattern formation without orientation.
\newblock In {\em Proc. of the 30th International Symposium on Distributed Computing
(DISC)}, pages 85--98, 2016.

  \bibitem{CiDN18}
   S. Cicerone, G. Di Stefano, and A.  Navarra.    
 \newblock  Gathering of robots on meeting-points: feasibility and optimal resolution algorithms.
 \newblock   {\em Distributed Computing} 31(1): 1--50, 2018.


\bibitem{DaFPSY16}
S.~Das, P.~Flocchini, G.~Prencipe, N.~Santoro, and M.~Yamashita.
\newblock Autonomous mobile robots with lights.
\newblock {\em Theoretical Computer Science}, 609(P1):171--184, January 2016.

\bibitem{DaFSY15}
S.~Das, P.~Flocchini, N.~Santoro, and M.~Yamashita.
\newblock Forming sequences of geometric patterns with oblivious mobile robots.
\newblock {\em Distributed Computing}, 28(2):131--145, April 2015.


 \bibitem{inf}
G.A. Di Luna, P. Flocchini, S. Gan Chaudhuri, F. Poloni, N. Santoro, G. Viglietta,
\newblock Mutual visibility by luminous robots without collisions,
\newblock In {\em Information and Computation} 254(3) 392--418, 2017.


\bibitem{DiFlSV18}
G.~{Di Luna}, P.~Flocchini, N.~Santoro, G.~Viglietta.
\newblock TuringMobile: A turing machine of oblivious mobile robots with limited visibility and its applications.
\newblock In {\em Proc. of the 32nd International Symposium on Distributed Computing
(DISC)}, pages 19:1-19:15, 2018.


\bibitem{DiFlSVY17}
G.~{Di Luna}, P.~Flocchini, N.~Santoro, G.~Viglietta, and M.~Yamashita.
\newblock Meeting in a polygon by anonymous oblivious robots.
\newblock In {\em Proc. of the 31st International Symposium on Distributed Computing
(DISC)}, pages 14:1-14:15, 2017.

\bibitem{DiVi19}
G.~{Di Luna}, G.~Viglietta.
\newblock Robots with lights.
\newblock In {\em Distributed Computing by Mobile Entities: Current Research in Moving and Computing},
 pages 252-277, 2019.
 

%\bibitem{DeFN16}
%M.~D'Emidio, D.~Frigioni, and A.~Navarra.
%\newblock Synchronous robots vs asynchronous lights-enhanced robots on graphs.
%\newblock {\em Elec. Notes Th. Comp. Sci.}, 322:169--180, 2016.


\bibitem{ErRS13}
M.  Erdelj, T. Razafindralambo, and D. Simplot-Ryl.
Covering points of interest with mobile sensors.
\newblock {\em IEEE Transactions on Parallel and Distributed Systems} 24(1):32--43, 2013.
	
\bibitem{FloPS12}
P.~Flocchini, G.~Prencipe, and N.~Santoro.
\newblock {\em Distributed Computing by Oblivious Mobile Robots}.
\newblock Morgan \&
  Claypool, 2012.

\bibitem{FPSW08}
P.~Flocchini, G.~Prencipe, N.~Santoro, and P.~Widmayer.
\newblock Arbitrary pattern formation by asynchronous oblivious robots.
\newblock {\em Theoretical Computer Science}, 407(1-3):412--447, 2008.

\bibitem{FloSVY16}
P.~Flocchini, N.~Santoro, G.~Viglietta, and M.~Yamashita.
\newblock Rendezvous with constant memory.
\newblock {\em Theoretical Computer Science}, 621:57--72, 2016.

%\bibitem{Fujinaga10}
%Fujinaga N., Ono H., Kijima S., Yamashita M. (2010) Pattern Formation through Optimum Matching by Oblivious CORDA Robots. In Proc: Principles of Distributed Systems (OPODIS 2010). Lecture Notes in Computer Science, vol 6490. Pages 1--15, 2010.

\bibitem{FuYOKY15}
N.~Fujinaga, Y.~Yamauchi, H. Ono, S.~Kijima, and M.~Yamashita.
\newblock Pattern formation by oblivious asynchronous mobile robots. 
\newblock {\em SIAM Journal on Computing}, 44(3):740--785, 2015.

\bibitem{FuYaKiYa12}
N.~Fujinaga, Y.~Yamauchi, S.~Kijima, and M.~Yamashita.
\newblock Asynchronous pattern formation by anonymous oblivious mobile robots.
\newblock  In {\em Proc. of the 26th International Symposium on Distributed Computing
(DISC)}, pages 312--325, 2012.

%\bibitem{patrolling13} A. Collins, J. Czyzowicz, L. Gąsieniec, A. Kosowski, E. Kranakis, D. Krizanc,
%R. Martin, and O. Morales Ponce. Optimal Patrolling of Fragmented Boundaries.
%In Proceedings of the Twenty-fifth Annual ACM Symposium on Parallelism in Algorithms and Architectures (SPAA ’13), pages 241–250, New York, USA, 2013.
%
%\bibitem{patrolfaulty} J. Czyzowicz, L. Gąsieniec, A. Kosowski, E. Kranakis, D. Krizanc, and N. Taleb.
%When Patrolmen Become Corrupted: Monitoring a Graph using Faulty Mobile
%Robots In Proc. ISAAC 2015, 26th International Symposium on Algorithms and
%Computation, pages 343-354, 2015.
%
%\bibitem{collisionfree}
%R. Guerraoui and A. Maurer.
%Collision-Free Pattern Formation. 
% In {\em Proc. of the 20th International Conference on Principles of Distributed Systems (OPODIS)}, pages 16:1-16:13, 2016.

\bibitem{ShBM15}
G.~Sharma, C.~Busch, and S.~Mukhopadhyay.
\newblock Mutual visibility with an optimal number of colors.
\newblock In {\em 11th International Symposium on Algorithms and Experiments
  for Wireless Sensor Networks (Algosensors)}, pages 196--201, 2015.

\bibitem{SugS96}
K.~Sugihara and I.~Suzuki.
\newblock Distributed algorithms for formation of geometric patterns with many
  mobile robots.
\newblock {\em Journal of Robotics Systems}, 13:127--139, 1996.


\bibitem{SuzY99}
I.~Suzuki and M.~Yamashita.
\newblock Distributed anonymous mobile robots: Formation of geometric patterns.
\newblock {\em SIAM Journal on Computing}, 28(4):1347--1363, 1999.

\bibitem{yamashita2010}
M.~Yamashita and I.~Suzuki.
\newblock Characterizing geometric patterns formable by oblivious anonymous
  mobile robots.
\newblock {\em Theoretical Computer Science}, 411(26-28):2433--2453, 2010.

\bibitem{YuY14}
Y. Yamauchi and M. Yamashita.
\newblock    Randomized pattern formation algorithm for asynchronous oblivious mobile robots.
In {\em Proc. of the 28th International Symposium on Distributed Computing (DISC)}, pages 137--151, 2014.


\end{thebibliography}
\end{document}